\PassOptionsToPackage{pdftex,
pdfversion=1.7,
pdfencoding=auto,
pdfnewwindow=true,
pdfusetitle=true,
bookmarks=true,
bookmarksnumbered=true,
bookmarksopen=true,
pdfpagemode=UseThumbs,
bookmarksopenlevel=1,
pdfpagelabels=false,
breaklinks=true
}{hyperref}
\PassOptionsToPackage{usenames,dvipsnames,table}{xcolor}
\documentclass[aps,english,10pt,superscriptaddress,onecolumn,twoside,longbibliography,pra,floatfix,fleqn,nofootinbib]{revtex4-2}

\usepackage[OT1]{fontenc}
\usepackage[utf8]{inputenx}
\usepackage{amsfonts}
\usepackage{amssymb}
\usepackage{amsthm}
\usepackage[intlimits,fleqn]{amsmath}
\usepackage{bm}
\usepackage{graphicx}
\usepackage[normalem]{ulem} 
\usepackage{paralist}
\usepackage{microtype}
\microtypecontext{spacing=nonfrench}
\microtypesetup{
expansion={true,nocompatibility},
protrusion={true,nocompatibility},
activate={true,nocompatibility},
tracking=true,
kerning=true,
spacing={true}
}

\usepackage{float}
\usepackage{wrapfig}
\usepackage{array}
\usepackage{ragged2e}
\usepackage{tabularx}
\usepackage{booktabs}
\usepackage[intlimits,fleqn]{mathtools}

\newcolumntype{R}{>{\raggedleft\arraybackslash}X}
\newcolumntype{C}{>{\centering\arraybackslash}X}
\newcolumntype{L}{>{\raggedright\arraybackslash}X}
\newcolumntype{J}{>{\justifying\arraybackslash}X}
\usepackage{adjustbox}
\usepackage{multirow}
\newcolumntype{T}[2]{
 >{\adjustbox{angle=#1,lap=\width-(#2)}\bgroup}
 l
 <{\egroup}
}
\setcounter{MaxMatrixCols}{30}

\usepackage{verbatim}

\usepackage[usenames,dvipsnames,table]{xcolor}
\definecolor{purple}{RGB}{128,0,128}
\definecolor{ultramarine}{RGB}{63, 0, 255}
\definecolor{medblue}{RGB}{0, 0, 100}
\definecolor{panblue}{RGB}{0,24,150}
\definecolor{carmine}{RGB}{150, 0, 24}
\definecolor{gray}{RGB}{150, 150, 150}
\definecolor{googleblue}{RGB}{34, 0, 204}
\definecolor{darkgreen}{RGB}{0, 80, 0}
\usepackage[pdftex,pdfversion=1.7,pdfencoding=auto,pdfnewwindow=true,pdfusetitle=true,bookmarks=true,bookmarksnumbered=true,bookmarksopen=true,pdfpagemode=UseThumbs,bookmarksopenlevel=1,pdfpagelabels=false]{hyperref}
\hypersetup{colorlinks,
linkcolor=carmine,
citecolor=darkgreen,
urlcolor=googleblue,
anchorcolor=OliveGreen}

\newcommand*{\mred}[1]{{\color{RawSienna}{\mathbf{#1}}}}
\newcommand*{\mgreen}[1]{{\color{OliveGreen}{\mathbf{#1}}}}
\newcommand*{\tred}[1]{{\color{carmine}{\textbf{#1}}}}
\newcommand*{\tblue}[1]{{\color{medblue}{\textbf{#1}}}}
\newcommand*{\tpurp}[1]{{\color{Plum}{{#1}}}}

\usepackage[capitalise]{cleveref}
\Crefname{eqs}{Eqs.}{Eqs.}
\creflabelformat{eqs}{(#2#1#3)}
\crefrangelabelformat{equation}{(#3#1#4-#5#2#6)}
\Crefmultiformat{equation}{Eqs.~(#2#1#3}{,#2#1#3)}{,#2#1#3}{,#2#1#3)}
\crefrangelabelformat{eqs}{(#3#1#4-#5#2#6)}
\Crefmultiformat{eqs}{Eqs.~(#2#1#3}{,#2#1#3)}{,#2#1#3}{,#2#1#3)}
\Crefname{example}{Example}{Examples}
\Crefname{section}{Sec.}{Secs.}

\newtheorem{theorem}{Theorem}
\newtheorem{lemma}[theorem]{Lemma}

\newtheorem{corollary}[theorem]{Corollary}
\newtheorem{definition}[theorem]{Definition}
\theoremstyle{definition}

\newcounter{example}[section]
\newenvironment{example}[1][]{\refstepcounter{example}\par\medskip
 \noindent \textbf{Example~\theexample}\hspace{1em}\rmfamily#1}{\par\medskip\par}
\Crefname{example}{Example}{Examples}
\creflabelformat{example}{#2#1#3}
\crefrangelabelformat{example}{#3#1#4-#5#2#6}
\Crefmultiformat{example}{Examples~#2#1#3}{, #2#1#3}{, #2#1#3 }{and #2#1#3}
\renewcommand{\theexample}{\arabic{example}}

\newcommand{\p}[2][]{{P_{#1}}\parenths{#2}}
\newcommand{\pfunc}[1]{P_{#1}}
\newcommand{\An}[2][]{{\mathsf{An}_{#1}}\parenths{#2}}
\newcommand{\Pa}[2][]{{\mathsf{Pa}_{#1}}\parenths{#2}}
\newcommand{\Ch}[2][]{{\mathsf{Ch}_{#1}}\parenths{#2}}
\newcommand{\SmallNamedFunction}[3][]{\operatorname{\mathsf{#2}}_{#1}\parenths{#3}}
\newcommand{\subgraph}[2][]{\SmallNamedFunction[#1]{SubDAG}{#2}}
\newcommand{\ansubgraph}[2][]{\SmallNamedFunction[#1]{AnSubDAG}{#2}}
\newcommand{\nodes}[1]{\SmallNamedFunction{Nodes}{#1}}
\newcommand{\obsnodes}[1]{\SmallNamedFunction{ObservedNodes}{#1}}
\newcommand{\latnodes}[1]{\SmallNamedFunction{LatentNodes}{#1}}
\newcommand{\inflations}[1]{\SmallNamedFunction{Inflations}{#1}}
\newcommand{\DAG}[1]{\SmallNamedFunction{DAG}{#1}}
\newcommand{\edges}[1]{\SmallNamedFunction{Edges}{#1}}
\newcommand{\aindep}{\ensuremath{\perp_d}}
\newcommand{\indep}{\perp\!\!\!\!\perp}
\newcommand{\eql}{\mathord{=}}
\DeclarePairedDelimiter{\parens}{\lparen}{\rparen}
\DeclarePairedDelimiter{\parenths}{\lparen}{\rparen}
\DeclarePairedDelimiter{\braces}{\lbrace}{\rbrace}
\DeclarePairedDelimiter{\bracks}{\lbrack}{\rbrack}
\DeclarePairedDelimiter{\expec}{\mathbb{E}[}{]}
\newcommand{\brackets}[1]{\braces*{#1}}

\setlength{\jot}{6pt}


\usepackage[normalem]{ulem}
\newcommand{\st}[1]{\ifmmode\text{\sout{\ensuremath{#1}}}\else\sout{#1}\fi}

\begin{document}
\onecolumngrid

\title{The Inflation Technique for Causal Inference with Latent Variables}

\author{Elie Wolfe}
\email{ewolfe@perimeterinstitute.ca}
\affiliation{Perimeter Institute for Theoretical Physics, Waterloo, Ontario, Canada, N2L 2Y5}

\author{Robert W. Spekkens}
\email{rspekkens@perimeterinstitute.ca}
\affiliation{Perimeter Institute for Theoretical Physics, Waterloo, Ontario, Canada, N2L 2Y5}

\author{Tobias Fritz}
\email{tfritz@perimeterinstitute.ca}
\affiliation{Perimeter Institute for Theoretical Physics, Waterloo, Ontario, Canada, N2L 2Y5}

\date{\today}

\begin{abstract}

The problem of causal inference is to determine if a given probability distribution on observed variables is compatible with some causal structure. The difficult case is when the causal structure includes latent variables. We here introduce the \emph{inflation technique} for tackling this problem. An inflation of a causal structure is a new causal structure that can contain multiple copies of each of the original variables, but where the ancestry of each copy mirrors that of the original. 
To every distribution of the observed variables that is compatible with the original causal structure, we assign a family of marginal distributions on certain subsets of the copies that are compatible with the inflated causal structure. It follows that compatibility constraints for the inflation can be translated into compatibility constraints for the original causal structure. 
Even if the constraints at the level of inflation are weak, such as observable statistical independences implied by disjoint causal ancestry, the translated constraints can be strong. We apply this method to derive new inequalities whose violation by a distribution witnesses that distribution's incompatibility with the causal structure (of which Bell inequalities and Pearl's instrumental inequality are prominent examples). We describe an algorithm for deriving all such inequalities for the original causal structure that follow from ancestral independences in the inflation. For three observed binary variables with pairwise common causes, it yields inequalities that are stronger in at least some aspects than those obtainable by existing methods. We also describe an algorithm that derives a weaker set of inequalities but is more efficient. Finally, we discuss which inflations are such that the inequalities one obtains from them remain valid even for quantum (and post-quantum) generalizations of the notion of a causal model.

\end{abstract}

\maketitle
\onecolumngrid\clearpage
\tableofcontents

\section{Introduction}

Given a joint probability distribution of some observed variables, the problem of \tblue{causal inference} is to determine which hypotheses about the causal mechanism can explain the given distribution. Here, a causal mechanism may comprise both causal relations among the observed variables, as well as causal relations among these and a number of unobserved variables, and among unobserved variables only. {Causal inference has applications in all areas of science that use statistical data and for which causal relations are important. Examples include determining the effectiveness of medical treatments, sussing out biological pathways, making data-based social policy decisions, and possibly even in developing strong machine learning algorithms~\cite{pearl2009causality,spirtes2011causation,studeny2005probabilistic,
koller2009probabilistic,Pearl2018machinelearning}.} A closely related type of problem is to determine, for a given set of causal relations, the set of all distributions on observed variables that can be generated from them. 
A special case of both problems is the following decision problem: given a probability distribution and a hypothesis about the causal relations, determine whether the two are compatible: could the given distribution have been generated by the hypothesized causal relations? This is the problem that we focus on.
We develop necessary conditions for a given distribution to be compatible with a given hypothesis about the causal relations.

In the simplest setting, the causal hypothesis consists of a directed acyclic graph (DAG) {\em all} of whose nodes correspond to observed variables. In this case, obtaining a verdict on the compatibility of a given distribution with the causal hypothesis is simple: the compatibility holds if and only if the distribution is Markov with respect to the DAG, which is to say that the distribution features all of the conditional independence relations that are implied by $d$-separation relations among variables in the DAG. The DAGs that are compatible with the given distribution can be determined algorithmically~\cite{pearl2009causality}.\footnote{As illustrated by the vast amount of literature on the subject, the problem can still be difficult in practice, for example due to a large number of variables in certain applications or due to finite statistics.}

A significantly more difficult case is when one considers a causal hypothesis which consists of a DAG 
some of whose nodes correspond to \tblue{latent} (i.e., unobserved) variables, so that the set of observed variables corresponds to a strict subset of the nodes of the DAG. This case occurs, e.g.,~in situations where one needs to deal with the possible presence of unobserved confounders, and thus is particularly relevant for experimental design in applications. With latent variables, the condition that all of the conditional independence relations among the observed variables that are implied by $d$-separation relations in the DAG is still a necessary condition for compatibility of a given such distribution with the DAG, but in general it is no longer sufficient, and this is what makes the problem difficult. 


Whenever the observed variables in a DAG have finite cardinality\footnote{The cardinality of a variable is the number of possible values it can take.}, one may also restrict the latent variables in the causal hypothesis to be of finite cardinality as well, without loss of generality~\cite{rosset2016finite}. As such, the mathematical problem which one must solve to infer the distributions that are compatible with the hypothesis is a quantifier elimination problem for some finite number of variables, as follows: The probability distributions of the observed variables can all be expressed as functions of the parameters specifying the conditional probabilities of each node given its parents, many of which involve latent variables. If one can eliminate these parameters, then one obtains constraints that refer exclusively to the probability distribution of the observed variables. This is a {\em nonlinear} quantifier elimination problem. The Tarski-Seidenberg theorem provides an \emph{in principle} algorithm for an exact solution, but unfortunately the computational complexity of such quantifier elimination techniques is far too large to be practical, except in particularly simple scenarios~\cite{Geiger-Meek,LeeSpekkens}.\footnote{Techniques for finding approximate solutions to nonlinear quantifier elimination may help~\cite{ChavesPolynomial}.} Most uses of such techniques have been in the service of deriving compatibility conditions that are necessary but not sufficient, for both observational~\cite{Geiger-Meek99, Garcia2, Garcia, SSG} and interventionist data~\cite{In1, intervension1, intervension2}.

%

Historically, the insufficiency of the conditional independence relations for causal inference in the presence of latent variables was first
noted by Bell in the context of the hidden variable problem in quantum physics~\cite{bell1964einstein}. Bell considered an experiment for which considerations from relativity theory implied a very particular causal structure, and he derived an inequality
that any distribution compatible with this structure, and compatible with certain constraints imposed by quantum
theory, must satisfy. Bell also showed that this inequality was violated by distributions generated from entangled
quantum states with particular choices of incompatible measurements. Later work, by Clauser, Horne, Shimony and
Holt (CHSH) derived inequalities without assuming any facts about quantum correlations~\cite{CHSHOriginal}; this derivation can retrospectively be understood as the first derivation of a constraint arising from the causal structure of the Bell scenario alone~\cite{WoodSpekkens}. The CHSH inequality was the
first example of a compatibility condition that appealed to the strength of the correlations rather than simply the
conditional independence relations inherent therein.
Since then, many generalizations of the CHSH inequality have been derived for the same sort of causal structure~\cite{Brunner2013Bell}. The idea that such work is best understood as a contribution to the field of causal inference has only recently been put forward~\cite{WoodSpekkens,fritz2012bell,pusey2014gdag,BeyondBellII}, as has the idea that techniques developed by researchers in the foundations of quantum theory may be usefully adapted to causal inference\footnote{The current article being another example of the phenomenon \cite{BeyondBellII,ChavesNoSignalling,chaves2014informationinference,weilenmann2016entropic,kela2016covariance,ChavesPolynomial,TavakoliStarNetworks,RossetNetworks,TavakoliNoncyclicNetworks}.}.

Independently of Bell's work, Pearl later derived the \tblue{instrumental inequality}~\cite{pearl1995instrumental}, which provides a necessary condition for the compatibility of a distribution with a causal structure known as the \emph{instrumental scenario}. This causal structure comes up when considering, for instance, certain kinds of noncompliance in drug trials. More recently, \citet{steudel2010ancestors} derived an inequality which must hold whenever a distribution on $n$ variables is compatible with a causal structure where no set of more than $c$ variables has a common ancestor, for arbitrary $n,c \in \mathbb{N}$. More recent work has focused specifically on the simplest nontrivial case, with $n=3$ and $c=2$, a causal structure that has been called the Triangle scenario~\cite{fritz2012bell,chaves2014novel} (\cref{fig:TriMainDAG}).

Recently, Henson, Lal and Pusey~\cite{pusey2014gdag} have investigated those causal structures for which merely confirming that a given distribution on observed variables satisfies all of the conditional independence relations implied by $d$-separation relations does not guarantee that this distribution is compatible with the causal structure. They coined the term \emph{interesting} for causal structures that have this property. They presented a catalogue of all potentially interesting causal structures having six or fewer nodes in~\cite[App.~E]{pusey2014gdag}, of which all but three were shown to be indeed interesting. Evans has also sought to generate such a catalogue~\cite{evans2012graphical}. The Bell scenario, the Instrumental scenario, and the Triangle scenario all appear in the catalogue, together with many others. Furthermore,they provided numerical evidence and an intuitive argument in favour of the hypothesis that the fraction of causal structures that are interesting increases as the total number of nodes increases. This highlights the need for moving beyond a case-by-case consideration of individual causal structures and for developing techniques for deriving constraints beyond conditional independence relations that can be applied to any interesting causal structure. 
Shannon-type entropic inequalities are an example of such constraints~\cite{steudel2010ancestors,fritz2012bell,fritz2013marginal,chaves2014novel,chaves2014informationinference}. They can be derived for a given causal structure with relative ease, via exclusively linear quantifier elimination, since conditional independence relations are linear equations at the level of entropies. They also have the advantage that they apply for any finite cardinality of the observed variables. Recent work has also looked at non-Shannon type inequalities, potentially further strengthening the entropic constraints~\cite{weilenmann2016entropic,pianaar2016interesting}. However, entropic techniques are still wanting, since the resulting inequalities are often rather weak. For example, they are not sensitive enough to witness some known incompatibilities, in particular for distributions that only arise in quantum but not classical models with a given causal structure~\cite{fritz2012bell,weilenmann2016entropic}\footnote{It should be noted that non-standard entropic inequalities can be obtained through a fine-graining of the causal scenario, namely by \emph{conditioning} on the distinct finite possible outcomes of root variables (``settings''), and these types of inequalities \emph{have} proven somewhat sensitive to quantum-classical separations~\cite{braunstein1988entropic,SchumacherInequality,chaves2014novel}. Such inequalities are still limited, however, in that they are only applicable to those causal structures which feature observed root nodes. The potential utility of entropic analysis where fine-graining is generalized to \emph{non}-root observed nodes is currently being explored by E.W. and Rafael Chaves. Jacques Pienaar has also alluded to similar considerations as a possible avenue for further research~\cite{pianaar2016interesting}.}.

In order to improve this state of affairs, we here introduce a new technique for deriving necessary conditions for the compatibility of a distribution of observed variables with a given causal structure, which we term the {\em\tblue{inflation technique}}. This technique is frequently capable of witnessing incompatibility when many other causal inference techniques fail. For example, in \cref{example:noWdist} of \cref{subsec:witnessingincompat} we prove that the tripartite ``W-type'' distribution is incompatible with the Triangle scenario, despite the incompatibility being invisible to other causal inference tools such as conditional independence relations, Shannon-type~\cite{fritz2013marginal,chaves2014novel,chaves2014informationinference} or non-Shanon-type entropic inequalities~\cite{weilenmann2016entropic}, or covariance matrices~\cite{kela2016covariance}.

The inflation technique works roughly as follows. For a given causal structure under consideration, one can construct many new causal structures, termed {\em inflations} of this causal structure. An inflation duplicates one or more of the nodes of the original causal structure, while mirroring the form of the subgraph describing each node's ancestry. Furthermore, the causal parameters that one adds to the inflated causal structure mirror those of the original causal structure. We show that if marginal distributions on certain subsets of the observed variables in the original causal structure are compatible with the original causal structure, then the same marginal distributions on certain copies of those subsets in the inflated causal structure are compatible with the inflated causal structure (\cref{mainlemma}). Similarly, we show that any necessary condition for compatibility of such distributions with the inflated causal structure translates into a necessary condition for compatibility with the original causal structure (\cref{maincorollary}). Thus, applying standard techniques for deriving causal compatibility inequalities to the inflated causal structure typically results in new causal compatibility inequalities for the original causal structure. The reader interested in seeing an example of how our technique works may want to take a sneak peak at \cref{subsec:witnessingincompat}.

Concretely, we consider causal compatibility inequalities for the inflated causal structure that are obtained as follows. One begins by identifying inequalities for the \tblue{marginal problem}, which is the problem of determining when a given family of marginal distributions on some subsets of variables can arise as marginals of a global joint distribution. {One then looks for sets of variables within the inflated causal structure which admit of nontrivial d-separation relations . (We mainly consider sets of variables with disjoint ancestries.) For each such set, one writes down the appropriate factorization of their joint distribution. These factorization conditions are finally substituted} into the marginal problem inequalities to obtain causal compatibility inequalities for the inflated causal structure. Although these constraints are extremely weak, the inflation technique turns them into powerful necessary conditions for compatibility with the original causal structure.

We show how to identify all relevant factorization conditions from the structure of the inflated causal structure, and also how to obtain all marginal problem inequalities by enumerating all facets of the associated \tblue{marginal polytope} (\cref{step:marginalsproblem}). Translating the resulting causal compatibility inequalities on the inflated causal structure back to the original causal structure, we obtain causal compatibility conditions in the form of nonlinear (polynomial) inequalities. As a concrete example of our technique, we present all the causal compatibility inequalities that can be derived in this manner from a particular inflation of the Triangle scenario (\cref{sec:CCineqs}). In general, we also show how to efficiently obtain a partial set of marginal problem inequalities by enumerating transversals of a certain hypergraph (\cref{sec:TSEM}). 

Besides the entropic techniques discussed above, our method is the first systematic tool for causal inference with latent variables that goes beyond observed conditional independence relations while not assuming any bounds on the cardinality of each latent variable. While our method can be used to systematically generate necessary conditions for compatibility with a given causal structure, we do not know whether the set of inequalities thus generated are also sufficient. 

We present our technique primarily as a tool for standard causal inference, but we also briefly discuss applications to {\em quantum} causal models~\cite{leifer2013conditionalstates,pusey2014gdag,BeyondBellII,Chaves2015infoquantum,ried2015quantum,costa2016quantum,allen2016quantum} and causal models within generalized probabilistic theories~\cite{pusey2014gdag} (\cref{sec:classicallity}). In particular, we discuss when our inequalities are necessary conditions for a distribution of observed variables to be compatible with a given causal structure within any generalized probabilistic theory~\cite{hardy2001quantum,barrett2007information} rather than simply within classical probability theory.

\section{Basic Definitions of Causal Models and Compatibility}\label{sec:definitions}

A \tblue{causal model} consists of a pair of objects: a \tblue{causal structure} and a family of \tblue{causal parameters}. We define each in turn. 
First, recall that a directed acyclic graph (DAG) $G$ consists of a finite set of nodes $\nodes{G}$ and a set of directed edges $\edges{G}\subseteq\nodes{G}\times\nodes{G}$, meaning that an edge is an ordered pair of nodes, such that this directed graph is \emph{acylic}, which means that there is no way to start and end at the same node by traversing edges forward.
In the context of a causal model, each node $X\in\nodes{G}$ will be equipped with a random variable that we denote by the same letter $X$. A directed edge $X\to Y$ corresponds to the possibility of a direct causal influence from the variable $X$ to the variable $Y$. In this way, the edges represent causal relations.

Our terminology for the causal relations between the nodes in a DAG is the standard one. The parents of a node $X$ in $G$ are defined as those nodes from which an outgoing edge terminates at $X$, i.e.~$\Pa[G]{X} = \{\:Y\:|\:Y\to X\:\}$. When the graph $G$ is clear from the context, we omit the subscript. Similarly, the children of a node $X$ are defined as those nodes at which edges originating at $X$ terminate, i.e.~$\Ch[G]{X} = \{\:Y\:|\: X\to Y\:\}$. If $\bm{X}$ is a set of nodes, then we put $\Pa[G]{\bm{X}} := \bigcup_{X\in\bm{X}} \Pa[G]{X}$ and $\Ch[G]{\bm{X}} := \bigcup_{X\in\bm{X}} \Ch[G]{X}$. The \tblue{ancestors} of a set of nodes $\bm{X}$, denoted $\An[G]{\bm{X}}$, are defined as those nodes which have a directed path to some node in $\bm{X}$, including the nodes in $\bm{X}$ themselves\footnote{The inclusion of a node itself within the set of its ancestors is contrary to the colloquial use of the term ``ancestors''. 
One uses this definition so that any correlation between two variables can always be attributed to a common ``ancestor''. This includes, for instance, the case where one variable is a parent of the other.
}. 
Equivalently, $\An{\bm{X}} := \bigcup_{n\in\mathbb{N}} \mathsf{Pa}^n(\bm{X})$, where $\mathsf{Pa}^n(\bm{X})$ is defined inductively via $\mathsf{Pa}^0(\bm{X}) := \bm{X}$ and $\mathsf{Pa}^{n+1}(\bm{X}) := \mathsf{Pa}(\mathsf{Pa}^n(\bm{X}))$. 

A \tblue{causal structure} is a DAG that incorporates a distinction between two types of nodes: the set of observed nodes, 
and the set of latent nodes \footnote{\citet[Def. 2.3.2]{pearl2009causality} uses the term \emph{latent structure} when referring to a DAG supplemented by a specification of latent nodes, whereas here that specification is implicit in our term \emph{causal structure}.}. 
Following~\cite{pusey2014gdag}, we will depict the observed nodes by triangles and the latent nodes by circles, as in~\cref{fig:TriMainDAG}\footnote{Note that this convention differs from that of~\cite{leifer2013conditionalstates}, where triangles represent classical variables and circles represent quantum systems.}. Henceforth, we will use $G$ to refer to the causal structure rather than just the DAG, so that $G$ includes a specification of which variables are observed, denoted $\obsnodes{G}$, and which are latent, denoted $\latnodes{G}$.
Frequently, we will also imagine the causal structure to include a specification of the cardinalities of the observed variables. While these are finite in all of our examples, the inflation technique may apply in the case of continuous variables as well. 
Although we will not do so in this work, the inflation technique can also be applied in the presence of other types of constraints, e.g.~when all variables are assumed to be Gaussian. 

The second component of a causal model is a family of \tblue{causal parameters}.
The causal parameters specify, for each node $X$, the conditional probability distribution over the values of the random variable $X$, given the values of the variables in $\Pa{X}$. In the case of root nodes, we have $\Pa{X} = \emptyset$, and the conditional distribution is an unconditioned distribution.
We write $\pfunc{Y|X}$ for the conditional distribution of a variable $Y$ given a variable $X$, while the particular conditional probability of the variable $Y$ taking the value $y$ given that the variable $X$ takes the values $x$ is denoted\footnote{Although our notation suggests that all variables are either discrete or described by densities, we do not make this assumption. All of our equations can be translated straightforwardly into proper measure-theoretic notation.} $\p[Y|X]{y|x}$. Therefore, a family of causal parameters has the form
\begin{align}
 \{ \pfunc{X|\Pa[G]{X}} : X \in \nodes{G} \}.
\end{align}
Finally, a \tblue{causal model} $M$ consists of a causal structure together with a family of causal parameters,
\[
	M = ( G, \{ \pfunc{X|\Pa[G]{X}} : X \in \nodes{G} \}).
\]
A causal model specifies a joint distribution of all variables in the causal structure via
\begin{align}\label{Markov}
P_{\nodes{G}} = \prod_{X\in \nodes{G}} \pfunc{X|\Pa[G]{X}},
\end{align}
where $\prod$ denotes the usual product of functions, so that e.g.~$(P_{Y|X} \times P_Y)(x,y) = P_{Y|X}(y|x) P_X(x)$. A distribution $P_{\nodes{G}}$ arises in this way if and only if it satisfies the Markov conditions associated to $G$~\cite[Sec.~1.2]{pearl2009causality}.

The joint distribution of the observed variables is obtained from the joint distribution of all variables by marginalization over the latent variables,
\begin{align}\label{MarkovObserved}
P_{\obsnodes{G}} = \sum_{\{U :U \in\latnodes{G}\}} P_{\nodes{G}},
\end{align}
where $\sum_U$ denotes marginalization over the (latent) variable $U$, so that $(\sum_U P_{UV})(v):= \sum_u P_{UV}(uv)$.

\begin{definition}\label{def:compatible}
A given distribution $P_{\obsnodes{G}}$ is \tblue{compatible} with a given causal structure $G$ if there is some choice of the causal parameters that yields $P_{\obsnodes{G}}$ via \cref{Markov,MarkovObserved}. A given \emph{family} of distributions on a family of \emph{subsets} of observed variables is compatible with a given causal structure if and only if there exists some $P_{\obsnodes{G}}$ such that both \begin{compactenum}
\item $P_{\obsnodes{G}}$ is compatible with the causal structure, and 
\item $P_{\obsnodes{G}}$ yields the given family as marginals.
\end{compactenum}
\end{definition}

\section{The Inflation Technique for Causal Inference}

\subsection{Inflations of a Causal Model}

We now introduce the notion of \tblue{an inflation of a causal model}. If a causal model specifies a causal structure $G$, then an inflation of this model specifies a new causal structure, $G'$, which we refer to as an inflation of $G$. 
For a given causal structure $G$, there are many causal structures $G'$ constituting an inflation of $G$. We denote the set of such causal structures $\inflations{G}$. 
The particular choice of $G'\in \inflations{G}$ then determines how to map a causal model $M$ on $G$ into a causal model $M'$ on $G'$, 
since the family of causal parameters of $M'$ will be determined by a function $M' = \SmallNamedFunction[G\to G']{Inflation}{M}$ that we define below. We begin by defining when a causal structure $G'$ is an inflation of $G$, building on some preliminary definitions. 

For any subset of nodes $\bm{X}\subseteq\nodes{G}$, we denote the \tblue{induced subgraph} on $\bm{X}$ by $\subgraph[G]{\bm{X}}$. It consists of the nodes $\bm{X}$ and those edges of $G$ which have both endpoints in $\bm{X}$. Of special importance to us is the 
\tblue{ancestral subgraph} $\ansubgraph[G]{\bm{X}}$, which is the subgraph induced by the ancestry of $\bm{X}$, $\ansubgraph[G]{\bm{X}}\coloneqq\subgraph[G]{\An[G]{\bm{X}}}$. 

In an inflated causal structure $G'$, every node is also labelled by a node of $G$. That is, every node of the inflated causal structure $G'$ is a copy of some node of the original causal structure $G$, and the copies of a node $X$ of $G$ in $G'$ are denoted $X_1,\ldots, X_k$. 
The subscript that indexes the copies is termed the \tblue{copy-index}. 
A copy is classified as observed or latent according to the classification of the original. Similarly, any constraints on cardinality or other types of constraints such as Gaussianity are also inherited from the original. 
When two objects (e.g.~nodes, sets of nodes, causal structures, etc\ldots) are the same up to copy-indices, then we use $\sim$ to indicate this, as in $X_i\sim X_j\sim X$. In particular, $\bm{X}\sim\bm{X}'$ for sets of nodes $\bm{X}\subseteq\nodes{G}$ and $\bm{X}'\subseteq\nodes{G'}$ if and only if $\bm{X}'$ contains exactly one copy of every node in $\bm{X}$. Similarly, $\subgraph[G']{\bm{X}'}\sim\subgraph[G]{\bm{X}}$ means that in addition to $\bm{X}\sim\bm{X}'$, an edge is present between two nodes in $\bm{X}'$ if and only if it is present between the two associated nodes in $\bm{X}$.

In order to be an inflation, $G'$ must locally mirror the causal structure of $G$:
\begin{definition}
	The causal structure $G'$ is said to be an \tblue{inflation} of $G$, that is, $G' \in \inflations{G}$, if and only if for every $V_i\in\obsnodes{G'}$, the ancestral subgraph of $V_i$ in $G'$ is equivalent, under removal of the copy-index, to the ancestral subgraph of $V$ in $G$,
\begin{align}\label{eq:definflationDAG}
G' \in\inflations{G} \quad\text{ iff }\quad \forall V_i\in \obsnodes{G'}:\; \ansubgraph[G']{V_i}\sim\ansubgraph[G]{V}.
\end{align}
Equivalently, the condition can be restated wholly in terms of local causal relationships, i.e.
\begin{align}\label{eq:definflationDAGalt}
G' \in\inflations{G} \quad\text{ iff }\quad \forall X_i\in \nodes{G'}:\; \Pa[G']{X_i}\sim\Pa[G]{X}.
\end{align}
\end{definition}

In particular, this means that an inflation is a fibration of graphs~\cite{fibgraphs}, although there are fibrations that are not inflations.

To illustrate the notion of inflation, we consider the causal structure of \cref{fig:TriMainDAG}, which is called the {\em Triangle scenario} (for obvious reasons) and which has been studied recently by a number of authors [\citealp{pusey2014gdag}~(Fig.~E\#8), \citealp{WoodSpekkens}~(Fig.~18b), \citealp{fritz2012bell}~(Fig.~3), \citealp{chaves2014novel}~(Fig.~6a), \citealp{Chaves2015infoquantum}~(Fig.~1a), \citealp{BilocalCorrelations}~(Fig.~8), \citealp{steudel2010ancestors}~(Fig.~1b), \citealp{chaves2014informationinference}~(Fig.~4b)].
Different inflations of the Triangle scenario are depicted in \cref{fig:TriFullDouble,fig:Tri222,fig:simpleinflation,fig:simplestinflation,fig:TriDagSubA2B1C1}, which will be referred to as the {\em Web}, {\em Spiral}, {\em Capped}, and {\em Cut} inflation, respectively.

\begin{figure}[h]
\centering
\begin{minipage}[t]{0.23\linewidth}
\centering
\includegraphics[scale=1]{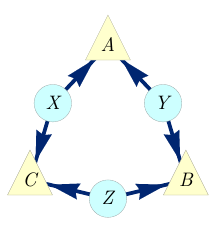}
\caption{The Triangle scenario.}\label{fig:TriMainDAG}
\end{minipage}
\hfill
\begin{minipage}[t]{0.43\linewidth}
\centering
\resizebox{\textwidth}{!}{\includegraphics[scale=1]{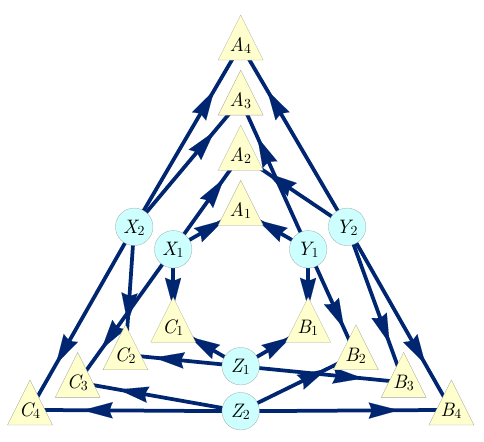}}
\caption{The Web inflation of the 
Triangle scenario where each latent node has been duplicated and each observed node has been quadrupled. The four copies of each observed node correspond to the four possible choices of parentage given the pair of copies of each latent parent of the observed node.}\label{fig:TriFullDouble}
\end{minipage}
\hfill
\begin{minipage}[t]{0.3\linewidth}
\centering
\includegraphics[scale=1]{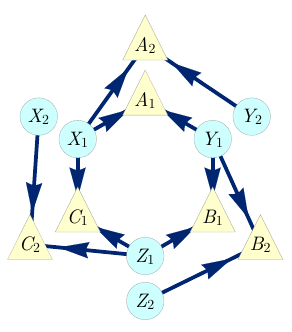}
\caption{The Spiral inflation of the Triangle scenario. Notably, this causal structure is the ancestral subgraph of the set $\{ A_1 A_2 B_1 B_2 C_1 C_2\}$ in the Web inflation (\cref{fig:TriFullDouble}).}
 \label{fig:Tri222}
\end{minipage}
\end{figure}

\begin{figure}[hb]
\centering
\begin{minipage}[t]{0.3\linewidth}
\centering
\includegraphics[scale=1]{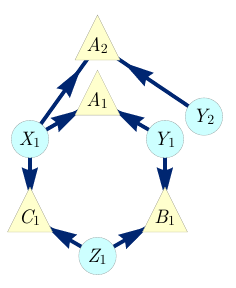}
\caption{The Capped inflation of the Triangle scenario; notably also the ancestral subgraph of the set $\{ A_1 A_2 B_1 C_1\}$ in the Spiral inflation (\cref{fig:Tri222}).}
\label{fig:simpleinflation}
\end{minipage}\hfill
\begin{minipage}[t]{0.275\linewidth}
\centering
\includegraphics[scale=1]{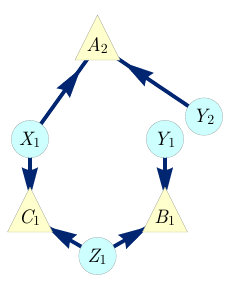}
\caption{The Cut inflation of the Triangle scenario; notably also the ancestral subgraph of the set $\{ A_2 B_1 C_1\}$ in the Capped inflation (\cref{fig:simpleinflation}). Unlike the other examples, this inflation does not contain the Triangle scenario as a subgraph. 
}
\label{fig:simplestinflation}
\end{minipage}
\hfill
\begin{minipage}[t]{0.325\linewidth}
\centering
\includegraphics[scale=1]{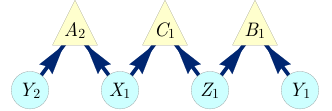}
\caption{A different depiction of the Cut inflation of \cref{fig:simplestinflation}. }\label{fig:TriDagSubA2B1C1}
\end{minipage}
\end{figure}

We now define the function $\mathsf{Inflation}_{G\to G'}$, that is, we specify how causal parameters are defined for a given inflated causal structure in terms of causal parameters on the original causal structure.

\begin{definition}
\label[definition]{def:inflat}
Consider causal models $M$ and $M'$ where $\DAG{M}=G$ and $\DAG{M'}=G'$, where $G'$ is an inflation of $G$. Then $M'$ is said to be the {\em \tblue{$G\to G'$ inflation of $M$}}, that is, $M' = \SmallNamedFunction[G\to G']{Inflation}{M}$, if and only if for every node $X_i$ in $G'$, the manner in which $X_i$ depends causally on its parents within $G'$ is the same as the manner in which $X$ depends causally on its parents within $G$. Noting that $X_i \sim X$ and that $\Pa[G']{X_i} \sim \Pa[G]{X}$ by~\cref{eq:definflationDAGalt}, one can formalize this condition as:
\begin{align}\label{eq:funcdependences}
 \forall X_i \in \nodes{G'}:\; \pfunc{X_i| \Pa[G']{X_i}}=\pfunc{X|\Pa[G]{X}}.
\end{align}
\end{definition}

For a given triple $G$, $G'$, and $M$, this definition specifies a unique inflation model $M'$, resulting in a well-defined function ${\operatorname{\mathsf{Inflation}}_{G\to G'}}$.

To sum up, the inflation of a causal model is a new causal model where (i) each variable in the original causal structure may have counterparts in the inflated causal structure with ancestral subgraphs mirroring those of the originals, and (ii) the manner in which a variable depends causally on its parents in the inflated causal structure is given by the manner in which its counterpart in the original causal structure depends causally on its parents. The operation of modifying a DAG and equipping the modified version with conditional probability distributions that mirror those of the original also appears in the \emph{do calculus} and \emph{twin networks} of~\citet{pearl2009causality}, and moreover bears some resemblance to the \emph{adhesivity} technique used in deriving non-Shannon-type entropic inequalities (see also \cref{sec:NonShannon}).

We are now in a position to describe the key property of the inflation of a causal model, the one that makes it useful for causal inference. With notation as in \cref{def:inflat}, let
$P_{\bm{X}}$ and $P_{\bm{X}'}$ denote marginal distributions on some $\bm{X}\subseteq\nodes{G}$ and $\bm{X}'\subseteq\nodes{G'}$, respectively. Then
\begin{align}\label{eq:coincidingdistrodef}
\quad\text{if }\quad \bm{X}'\sim \bm{X} \;\;\text{and}\;\; \ansubgraph[G']{\bm{X}'}\sim\ansubgraph[G]{\bm{X}}, \quad\text{then}\quad P_{\bm{X}'}=P_{\bm{X}}.
\end{align}
This follows from the fact that the distributions on $\bm{X}'$ and $\bm{X}$ depend only on their ancestral subgraphs and the parameters defined thereon, which by the definition of inflation are the same for $\bm{X}'$ and for $\bm{X}$.
It is useful to have a name for those sets of observed nodes in $G'$ which satisfy the antecedent of~\cref{eq:coincidingdistrodef}, that is, for which one can find a copy-index-equivalent set in the original causal structure $G$ with a copy-index-equivalent ancestral subgraph. We call such subsets of the observed nodes of $G'$ \tblue{injectable sets},
\begin{align}\begin{split}\label{eq:definjectable}
&\bm{V}'\in\SmallNamedFunction{InjectableSets}{G'} \\
&\quad\text{ iff }\quad \exists \bm{V}\subseteq \obsnodes{G} \;\; :\;\; \bm{V}'\sim\bm{V} \;\;\text{and}\;\; \ansubgraph[G']{\bm{V}'}\sim\ansubgraph[G]{\bm{V}}.
\end{split}\end{align}

Similarly, those sets of observed nodes in the original causal structure $G$ which satisfy the antecedent of~\cref{eq:coincidingdistrodef}, that is, for which one can find a corresponding set in the inflated causal structure $G'$ with a copy-index-equivalent ancestral subgraph, we describe as \tblue{images of the injectable sets} under the dropping of copy-indices,
\begin{align}\begin{split}\label{eq:defimageinjectable}
& \bm{V}\in\SmallNamedFunction{ImagesInjectableSets}{G} \\
& \quad\text{ iff }\quad \exists \bm{V}' \subseteq \obsnodes{G'} \;\; :\;\; \bm{V}'\sim\bm{V} \;\;\text{and}\;\; \ansubgraph[G']{\bm{V}'}\sim\ansubgraph[G]{\bm{V}}.
\end{split}\end{align}
Clearly, $\bm{V}\in\SmallNamedFunction{ImagesInjectableSets}{G}$ iff $\exists \bm{V}' \subseteq \SmallNamedFunction{InjectableSets}{G'}$ such that $\bm{V}\sim \bm{V}'$.

For example in the Spiral inflation of the Triangle scenario depicted in~\cref{fig:Tri222}, the set $\brackets{A_1 B_1 C_1}$ is injectable because its ancestral subgraph is equivalent up to copy-indices to the ancestral subgraph of $\brackets{A B C}$ in the original causal structure, and the set $\brackets{A_2 C_1}$ is injectable because its ancestral subgraph is equivalent to that of $\brackets{ A C}$ in the original causal structure.

A set of nodes in the inflated causal structure can only be injectable if it contains at most one copy of any node from the original causal structure. More strongly, it can only be injectable if its ancestral subgraph contains at most one copy of any observed or latent node from the original causal structure. 
Thus, in \cref{fig:Tri222}, $\brackets{A_1 A_2 C_1}$ is not injectable because it contains two copies of $A$, and $\brackets{A_2 B_1 C_1}$ is not injectable because its ancestral subgraph contains two copies of $Y$. 

We can now express \cref{eq:coincidingdistrodef} in the language of injectable sets,
\begin{align}\label{keyinference}
P_{\bm{V}'}=P_{\bm{V}}\quad\text{if } \;\; \bm{V}' \sim \bm{V}\;\; \text{and} \;\; \bm{V}'\in\SmallNamedFunction{InjectableSets}{G'}.
\end{align}

In the example of \cref{fig:Tri222}, injectability of the sets $\brackets{A_1 B_1 C_1}$ and $\brackets{A_2 C_1}$ thus implies that the marginals on each of these are equal to the marginals on their counterparts, $\brackets{A B C}$ and $\brackets{A C}$, in the original causal model, so that $P_{A_1 B_1 C_1} = P_{A B C}$ and $P_{A_2 C_1} = P_{A C}$.

\subsection{Witnessing Incompatibility}
\label{subsec:witnessingincompat}

Finally, we can explain why inflation is relevant for deciding whether a distribution is compatible with a causal structure. {For a distribution $P_{\obsnodes{G}}$ to be compatible with $G$, there must be a causal model $M$ that yields it. Per \cref{def:compatible}, given a $P_{\obsnodes{G}}$ compatible with $G$, the family of marginals of $P_{\obsnodes{G}}$ on the images of the injectable sets of observed variables in $G$, $\{ P_{\bm{V}} : \bm{V} \in \SmallNamedFunction{ImagesInjectableSets}{G}\}$, are \emph{also} said to be compatible with $G$. Looking at the inflation model $M' = \SmallNamedFunction[G\to G']{Inflation}{M}$, \cref{keyinference} implies that the family of distributions on the injectable sets given by $\{ P_{\bm{V}'} : \bm{V}' \in \SmallNamedFunction{InjectableSets}{G'}\}$ --- where $P_{\bm{V}'} = P_{\bm{V}}$ for $\bm{V}'\sim\bm{V}$ --- is compatible with $G'$.

{The same considerations apply for any family of distributions such that each set of variables 
 in the family corresponds to an injectable set (i.e., when the family of distributions is associated with an \emph{incomplete} collection of injectable sets.) Formally, }

\begin{lemma} \label[lemma]{mainlemma}
Let the causal structure $G'$ be an inflation of $G$. 
{Let $\mathbb{S}' \subseteq \SmallNamedFunction{InjectableSets}{G'}$ be a collection of injectable sets, and let $\mathbb{S} \subseteq \SmallNamedFunction{ImagesInjectableSets}{G}$ be the images of this collection under the dropping of copy-indices. If 
a distribution $P_{\obsnodes{G}}$ is compatible with $G$, then the family of distributions $\{ P_{\bm{V}} : \bm{V} \in \mathbb{S} \}$ is compatible with $G$ per \cref{def:compatible}. Furthermore 
 the corresponding family of distributions $\{ P_{\bm{V}'} : \bm{V}' \in \mathbb{S}' \}$, defined via $P_{\bm{V}'}= P_{\bm{V}}$ for $\bm{V}' \sim \bm{V}$, must be compatible with $G'$.}
\end{lemma}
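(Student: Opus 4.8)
The plan is to treat the lemma's two assertions in turn, dispatching the first directly from the definition of compatibility and reserving the real work for the second, where the inflation construction enters. For the first assertion I would observe that it is essentially immediate from \cref{def:compatible}. By hypothesis $P_{\obsnodes{G}}$ is compatible with $G$, and every $\bm{V}\in\mathbb{S}\subseteq\SmallNamedFunction{ImagesInjectableSets}{G}$ is a subset of $\obsnodes{G}$ whose associated $P_{\bm{V}}$ is the marginal of $P_{\obsnodes{G}}$ on $\bm{V}$. Hence $P_{\obsnodes{G}}$ itself serves as the witness required by the second clause of \cref{def:compatible}: it is compatible with $G$ and it yields $\{P_{\bm{V}}:\bm{V}\in\mathbb{S}\}$ as marginals, so that family is compatible with $G$.

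For the second assertion I would proceed constructively. Since $P_{\obsnodes{G}}$ is compatible with $G$, I fix a causal model $M$ with $\DAG{M}=G$ that reproduces it via \cref{Markov,MarkovObserved}. Applying the inflation map yields a causal model $M'=\SmallNamedFunction[G\to G']{Inflation}{M}$ on $G'$, which is well defined by \cref{def:inflat}, and I let $P_{\obsnodes{G'}}$ be the observed distribution that $M'$ induces; by construction this is compatible with $G'$. It then remains only to verify that $P_{\obsnodes{G'}}$ yields the prescribed family $\{P_{\bm{V}'}:\bm{V}'\in\mathbb{S}'\}$ as marginals. This is exactly where \cref{keyinference} does the work: for each $\bm{V}'\in\mathbb{S}'\subseteq\SmallNamedFunction{InjectableSets}{G'}$, with copy-index image $\bm{V}$ satisfying $\bm{V}'\sim\bm{V}$, \cref{keyinference} identifies the marginal of $P_{\obsnodes{G'}}$ on $\bm{V}'$ with the marginal $P_{\bm{V}}$ of $P_{\obsnodes{G}}$ on $\bm{V}$, which is by definition the prescribed $P_{\bm{V}'}$. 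Thus $P_{\obsnodes{G'}}$ is the witness establishing compatibility of $\{P_{\bm{V}'}:\bm{V}'\in\mathbb{S}'\}$ with $G'$.

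The only genuinely substantive input is \cref{keyinference} (equivalently \cref{eq:coincidingdistrodef}), and I do not expect it to pose a serious obstacle, since it has already been established: the marginal on an injectable set $\bm{V}'$ is fixed solely by $\ansubgraph[G']{\bm{V}'}$ and the causal parameters living on it, both of which mirror their counterparts for $\bm{V}$ in $G$ by the definition of inflation. The main points demanding care are purely bookkeeping: confirming that dropping copy-indices assigns to each $\bm{V}'$ a unique image $\bm{V}$, so the family $\{P_{\bm{V}'}\}$ is well defined; and noting that, because all of these marginals descend from the single global distribution $P_{\obsnodes{G'}}$, their mutual consistency is automatic and no separate gluing argument is needed even when $\mathbb{S}'$ is a strict, incomplete subcollection of the injectable sets.
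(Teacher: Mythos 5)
Your proposal is correct and follows essentially the same route as the paper: the first assertion is read off from \cref{def:compatible} with $P_{\obsnodes{G}}$ as the witness, and the second is established by inflating the witnessing model $M$ to $M' = \SmallNamedFunction[G\to G']{Inflation}{M}$ and invoking \cref{keyinference} (i.e.\ \cref{eq:coincidingdistrodef}) to identify the marginals of the inflated model's observed distribution on each $\bm{V}'\in\mathbb{S}'$ with the prescribed $P_{\bm{V}}$. This is precisely the argument the paper gives in the paragraph preceding the lemma.
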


We have thereby related a question about compatibility with the original causal structure to one about compatibility with the inflated causal structure. If one can show that the new compatibility question on $G'$ is answered in the negative, then it follows that the original compatibility question on $G$ is answered in the negative as well. Some simple examples serve to illustrate the idea.

\begin{example}[\tred{Incompatibility of perfect three-way correlation with the Triangle scenario}]
\label{example:noGHZ}


Consider the following causal inference problem. We are given a joint distribution of three binary variables, $P_{A B C}$, where the marginal on each variable is uniform and the three are perfectly correlated,
\begin{align}\label{eq:ghzdistribution1}
P_{A B C} =\frac{[000]+[111]}{2},\quad\text{i.e.,}\quad P_{A B C}(a b c)=\begin{cases}\tfrac{1}{2}&\text{if }\; a = b = c, \\ 0&\text{otherwise},\end{cases}
\end{align}
and we would like to determine whether it is compatible with the Triangle scenario (\cref{fig:TriMainDAG}). The notation $[abc]$ in \cref{eq:ghzdistribution1} is shorthand for the deterministic distribution where $A$, $B$, and $C$ take the values $a, b$, and $c$ respectively; in terms of the Kronecker delta, $[abc]:= \delta_{A,a} \delta_{B,b} \delta_{C,c}$.

Since there are no conditional independence relations among the observed variables in the Triangle scenario, there is no opportunity for ruling out the distribution on the grounds that it fails to satisfy the required conditional independences. 

To solve the causal inference problem, we consider the Cut inflation (\cref{fig:simplestinflation}). The injectable sets include $\brackets{A_2 C_1}$ and $\brackets{B_1 C_1}$. Their images in the original causal structure are $\brackets{AC}$ and $\brackets{BC}$, respectively.

We will show that the distribution of \cref{eq:ghzdistribution1} is not compatible with the Triangle scenario by demonstrating that the contrary assumption of compatibility implies a contradiction. If the distribution of \cref{eq:ghzdistribution1} were compatible with the Triangle scenario, then so too would its pair of marginals on $\brackets{AC}$ and $\brackets{BC}$, which are given by:
\begin{align*}
P_{A C} = P_{B C} = \frac{[00]+[11]}{2}.
\end{align*}
By \cref{mainlemma}, this compatibility assumption would entail that the marginals
\begin{align}\label{ghzmarginals}
P_{A_2 C_1} = P_{B_1 C_1} = \frac{[00]+[11]}{2}
\end{align}
are compatible with the Cut inflation of the Triangle scenario. We now show that the latter compatibility cannot hold, thereby obtaining our contradiction. It suffices to note that (i) the only joint distribution that exhibits perfect correlation between $A_2$ and $C_1$ and between $B_1$ and $C_1$ also exhibits perfect correlation between $A_2$ and $B_1$, and (ii) $A_2$ and $B_1$ have no common ancestor in the Cut inflation and hence must be marginally independent in any distribution that is compatible with it. 

We have therefore certified that the distribution $P_{A B C}$ of~\cref{eq:ghzdistribution1} is not compatible with the Triangle scenario, recovering a result originally proven by \citet{steudel2010ancestors}.
\end{example}

\begin{example}[\tred{Incompatibility of the W-type distribution with the Triangle scenario}]
\label{example:noWdist}


Consider another causal inference problem on the Triangle scenario, namely, that of determining whether the distribution 
\begin{align}\label{eq:wdistribution1}
P_{A B C}=\frac{[100]+[010]+[001]}{3},\quad\text{i.e.,}\quad P_{A B C}(a b c)=\begin{cases}\tfrac{1}{3}&\text{if }\; a + b + c = 1, \\ 0&\text{otherwise}.\end{cases}
\end{align}
is compatible with it. We call this the W-type distribution\footnote{The name stems from the fact that this distribution is reminiscent of 
the famous quantum state appearing in~\cite{3Qubits2Ways}, called the \emph{W state}.}. To settle this compatibility question, we consider the Spiral inflation of the Triangle scenario (\cref{fig:Tri222}).
The injectable sets in this case include $\{A_1 B_1 C_1\}$, $\{A_2 C_1\}$, $\{B_2 A_1\}$, $\{C_2 B_1\}$, $\{A_2\}$, $\{B_2\}$ and $\{C_2\}$. 

Therefore, we turn our attention to determining whether the marginals of the W-type distribution on the images of these injectable sets are compatible with the Triangle scenario. These marginals are:
\begin{align}
P_{A B C}&= \frac{[100]+[010]+[001]}{3}, \label{V4}\\
P_{A C}= P_{B A} = P_{C B} & = \frac{[10]+[01]+[00]}{3}, \label{V1}\\
P_{A}= P_B = P_C & = \frac{2}{3}[0] + \frac{1}{3}[1]. \label{V5}
\end{align}
By \cref{mainlemma}, this compatibility holds only if the associated marginals for the injectable sets, namely, 
\begin{align}
P_{A_1 B_1 C_1}&= \frac{[100]+[010]+[001]}{3}, \label{W4}\\
P_{A_2 C_1} = P_{B_2 A_1} = P_{C_2 B_1} & = \frac{[10]+[01]+[00]}{3}, \label{W1}\\
P_{A_2} = P_{B_2} = P_{C_2} & = \frac{2}{3}[0] + \frac{1}{3}[1], \label{W5}
\end{align}
are compatible with the Spiral inflation (\cref{fig:Tri222}). \cref{W1}
implies that $C_1 \eql 0$ whenever $A_2 \eql 1$. It similarly implies that $A_1 \eql 0$ whenever $B_2 \eql 1$, and that $B_1 \eql 0$ whenever $C_2 \eql 1$, 
\begin{align} 
\begin{split}\label{Ws}
&A_2 \eql 1 \:\implies\: C_1 \eql 0,\\
&B_2 \eql 1 \:\implies\: A_1 \eql 0,\\
&C_2 \eql 1 \:\implies\: B_1 \eql 0.
\end{split}
\end{align}
The Spiral inflation is such that $A_2$, $B_2$ and $C_2$ have no common ancestor and consequently are marginally independent in any distribution compatible with it. Together with the fact that each value of these variables has a nonzero probability of occurrence (by \cref{W5}), this implies that
\begin{align} \label{WW2}
\text{Sometimes} \quad &A_2 \eql 1\,\text{ and }\, B_2 \eql 1\,\text{ and }\, C_2 = 1.
\end{align} 
Finally, \cref{Ws} together with \cref{WW2} entails
\begin{align} \label{WW3}
\text{Sometimes} \quad &A_1 \eql 0\,\text{ and }\, B_1 \eql 0\,\text{ and }\, C_1 \eql 0.
\end{align}
This, however, contradicts~\cref{W4}. Consequently, the family of marginals described in \cref{W4,W1,W5} is \emph{not} compatible with the causal structure of~\cref{fig:Tri222}. By \cref{mainlemma}, this implies that the family of marginals described in \cref{V4,V1,V5}---and therefore the W-type distribution of which they are marginals---is not compatible with the Triangle scenario.

To our knowledge, this is a new result. In fact, the incompatibility of the W-type distribution with the Triangle scenario cannot be derived via any of the 
 existing causal inference techniques. In particular:
\begin{enumerate}
\item Checking conditional independence relations is not relevant here, as there are \emph{no} conditional independence relations between any observed variables in the Triangle scenario. 
\item The relevant Shannon-type entropic inequalities for the Triangle scenario have been classified, and they do not witness the incompatibility~\cite{fritz2013marginal,chaves2014novel,chaves2014informationinference}. 
\item Moreover, \emph{no} entropic inequality can witness the W-type distribution as unrealizable. \citet{weilenmann2016entropic} have constructed an inner approximation to the entropic cone of the Triangle causal structure, and the entropies of the W-distribution form a point in this cone. In other words, a distribution with the same entropic profile as the W-type distribution \emph{can} arise from the Triangle scenario.
\item The newly-developed method of covariance matrix causal inference due to \citet{kela2016covariance}, which gives tighter constraints than entropic inequalities for the Triangle scenario, also cannot detect the incompatibility.
\end{enumerate}
Therefore, in this case at least, the inflation technique appears to be more powerful. 

We have arrived at our incompatibility verdict by combining inflation with reasoning reminiscent of Hardy's version of Bell's theorem~\cite{L.Hardy:PRL:1665,Mansfield2012}. \cref{sec:TSEM} will present a generalization of this kind of argument and its applications to causal inference. 
\end{example}

\begin{example}[\tred{Incompatibility of PR-box correlations with the Bell scenario}]
\label{example:noPR}

Bell's theorem~\cite{bell1964einstein,Brunner2013Bell,bell1966lhvm,CHSHOriginal} concerns the question of whether the distribution obtained in an experiment involving a pair of systems that are measured at space-like separation is compatible with a causal structure of the form of \cref{fig:NewBellDAG1}. Here, the observed variables are $\brackets{A,B,X,Y}$, and $\Lambda$ is a latent variable acting as a common cause of $A$ and $B$. We shall term this causal structure the \emph{Bell scenario}. While the causal inference formulation of Bell's theorem is not the traditional one, several recent articles have introduced and advocated this perspective~[\citealp{WoodSpekkens}~(Fig.~19), \citealp{pusey2014gdag}~(Fig.~E\#2), \citealp{BeyondBellII}~(Fig.~1), \citealp{chaves2014novel}~(Fig.~1), \citealp{wolfe2015nonconvexity}~(Fig.~2b), \citealp{steeg2011relaxation}~(Fig.~2)]. 

\begin{figure}[ht]
\centering
\begin{minipage}[t]{0.45\linewidth}
\centering
\includegraphics[scale=1]{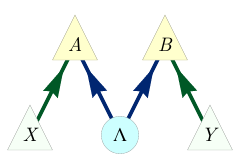}
\caption{The Bell scenario causal structure. The local outcomes, $A$ and $B$, of a pair of measurements are assumed to each be a function of some latent common cause and their independent local experimental settings, $X$ and $Y$.}\label{fig:NewBellDAG1}
\end{minipage}
\hfill
\begin{minipage}[t]{0.45\linewidth}
\centering
\includegraphics[scale=1]{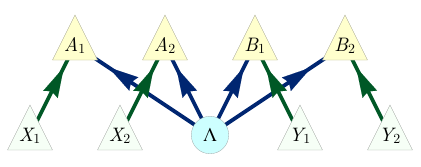}
\caption{An inflation of the Bell scenario causal structure, where both local settings and outcome variables have been duplicated.}\label{fig:BellDagCopy1}
\end{minipage}
\end{figure}

We consider the distribution ${P_{A B X Y} = P_{A B | X Y} P_{X} P_{Y}}$, where $P_{X}$ and $P_{Y}$ are arbitrary full-support distributions on $\{0,1\}$\footnote{In the literature on the Bell scenario, the variables $X$ and $Y$ are termed ``settings''. Generally, we may think of observed root variables as settings, coloring them light green in the figures. They are natural candidates for variables to condition on.},
 and
\begin{align}\begin{split}\label{eq:PRbox}
&P_{A B | X Y}=\begin{cases}
\frac{1}{2}\parenths{[00]+[11]}&\text{if }x\eql 0,y\eql 0\\
\frac{1}{2}\parenths{[00]+[11]}&\text{if }x\eql 1,y\eql 0\\
\frac{1}{2}\parenths{[00]+[11]}&\text{if }x\eql 0,y\eql 1\\
\frac{1}{2}\parenths{[01]+[10]}&\text{if }x\eql 1,y\eql 1\end{cases},\qquad
\text{i.e.,}\;\;P_{A B | X Y}\parens{a b |x y}=\begin{cases}\tfrac{1}{2}&\text{if }\; a \oplus b = x \cdot y , \\ 0&\text{otherwise}.\end{cases}
\end{split}\end{align}
This conditional distribution was discovered by Tsirelson~\cite{Tsirelson1980} and later independently by Popescu and Rohrlich~\cite{PROriginal,PRUnit}. It has become known in the field of quantum foundations as the \emph{PR-box} after the latter authors.\footnote{The PR-box is of interest because it represents a manner in which experimental observations could deviate from the predictions of quantum theory while still being consistent with relativity.}

The Bell scenario implies nontrivial conditional independences\footnote{Recall that variables $X$ and $Y$ are conditionally independent given $Z$ if $P_{XY|Z}(xy|z) = P_{X|Z}(x|z) P_{Y|Z}(y|z)$ for all $z$ with $P_{Z}(z)>0$. Such a conditional independence is denoted by $X\indep Y \:|\: Z$.} among the observed variables, namely, $X \indep Y$, $A \indep Y| X$, and
 $B \indep X|Y$, as well as those that can be generated from these by the semi-graphoid axioms \cite{WoodSpekkens}.
It is straightforward to check that these conditional independence relations are respected by the $P_{ABXY}$ resulting from~\cref{eq:PRbox}. It is well-known that this distribution is nonetheless incompatible with the Bell scenario, since it violates the CHSH inequality.
Here we present a proof of incompatibility in the style of Hardy’s proof of Bell’s theorem~\cite{L.Hardy:PRL:1665} in terms of the inflation technique, using the inflation of the Bell scenario depicted in \cref{fig:BellDagCopy1}.

We begin by noting that $\{A_1 B_1 X_1 Y_1\}$, $\{A_2 B_1 X_2 Y_1\}$, $\{A_1 B_2 X_1 Y_2\}$, $\{A_2 B_2 X_2 Y_2\}$, $\{X_1\}$, $\{X_2\}$, $\{Y_1\}$, and $\{Y_2\}$ are all injectable sets. 
By \cref{mainlemma}, it follows that any causal model that recovers $P_{ABXY}$ inflates to a model that results in marginals
\begin{align}
P_{A_1 B_1 X_1 Y_1}=P_{A_2 B_1 X_2 Y_1}=P_{A_1 B_2 X_1 Y_2}=P_{A_2 B_2 X_2 Y_2}&=P_{A B X Y},\label{PR1}\\
P_{X_1}=P_{X_2}=P_X, \qquad P_{Y_1}=P_{Y_2}&=P_Y.\label{PR5}
\end{align}
Using the definition of conditional probability, we infer that
\begin{align}
P_{A_1 B_1 |X_1 Y_1}=P_{A_2 B_1 |X_2 Y_1}=P_{A_1 B_2 |X_1 Y_2}=P_{A_2 B_2 |X_2 Y_2}=P_{A B |X Y}\label{PRb}.
\end{align}
Because $\{X_1\}$, $\{X_2\}$, $\{Y_1\}$, and $\{Y_2\}$ have no common ancestor in the inflated causal structure, these variables must be marginally independent in any distribution compatible with it, so that $P_{X_1 X_2 Y_1 Y_2} = P_{X_1} P_{X_2} P_{Y_1} P_{Y_2}$. Given the assumption that the distributions $P_{X}$ and $P_{Y}$ have full support, it follows from~\cref{PR5} that
\begin{align}\label{PRs}
\text{Sometimes} \quad &X_1 = 0\,\text{ and }\, X_2 =1\,\text{ and }\, Y_1 = 0\,\text{ and }\, Y_2 = 1.
\end{align} 
On the other hand, from~\cref{PRb} together with the definition of PR-box,~\cref{eq:PRbox}, we conclude that 
\begin{align} 
\begin{split}
\label{PRsi}
&X_1 \eql 0,\: Y_1 \eql 0 \:\implies\: A_1 = B_1,\\[-1ex]
&X_1 \eql 0,\: Y_2 \eql 1 \:\implies\: A_1 = B_2,\\[-1ex]
&X_2 \eql 1,\: Y_1 \eql 0 \:\implies\: A_2 = B_1,\\[-1ex]
&X_2 \eql 1,\: Y_2 \eql 1 \:\implies\: A_2\ne B_2.
\end{split}
\end{align}
Combining this with~\cref{PRs}, we obtain
\begin{align}
\text{Sometimes} \quad &A_1 = B_1\,\text{ and }\, A_1 = B_2\,\text{ and }\, A_2 = B_1\,\text{ and }\, A_2\neq B_2.
\end{align} 
No values of $A_1$, $A_2$, $B_1$, and $B_2$ can jointly satisfy these conditions. So we have reached a contradiction, showing that our original assumption of compatibility of $P_{ABXY}$ with the Bell scenario must have been false.

The structure of this argument parallels that of standard proofs of the incompatibility of the PR-box with the Bell scenario. Standard proofs focus on a set of variables $\{A_0 A_1 B_0 B_1\}$ where $A_x$ is the value of $A$ when $X=x$ and $B_y$ is the value of $B$ when $Y=y$. Note that the distribution $\displaystyle\sum_{\Lambda} P_{A_0|\Lambda}P_{A_1|\Lambda}P_{B_0|\Lambda}P_{B_1|\Lambda}P_{\Lambda}$ 
 is a joint distribution of these four variables for which the marginals on pairs $\{ A_0 B_0\}$, $\{ A_0 B_1\}$, $\{ A_1 B_0\}$ and $\{ A_1 B_1\}$ are those that can arise in the Bell scenario. 
 The existence of such a joint distribution rules out the possibility of having $A_1 = B_1$, $A_1 = B_2$, $A_2 = B_1$ but $ A_2\neq B_2$, and therefore shows that the PR-box distribution is incompatible with the Bell scenario~\cite{LSW,roberts_thesis}. In light of our use of \cref{PRs}, the reasoning based on the inflation of Fig.~\ref{fig:BellDagCopy1} is really the same argument in disguise.

\cref{sec:Bellscenarios} shows that the inflation of the Bell scenario depicted in~\cref{fig:BellDagCopy1} is sufficient to witness the incompatibility of any distribution that is incompatible with the Bell scenario.
\end{example}

\subsection{Deriving Causal Compatibility Inequalities}
\label{Sec:DerivingInequalities}

The inflation technique can be used not only to witness the incompatibility of a given distribution with a given causal structure, but also to derive necessary conditions that a distribution must satisfy to be compatible with the given causal structure. These conditions can always be expressed as inequalities, and we will refer to them as {\em causal compatibility inequalities}\footnote{Note that we can include equality constraints for causal compatibility within the framework of causal compatibility inequalities alone; it suffices to note that an equality constraint can always be expressed as a pair of inequalities, i.e. satisfying $x=y$ is equivalent to satisfying both $x\leq y$ and $x\geq y$. The requirement that a distribution must be Markov (or Nested Markov) relative to a DAG is usually formulated as a set of equality constraints.}. Formally, we have:

\begin{definition}
Let $G$ be a causal structure and let $\mathbb{S}$ be a family of subsets of the observed variables of $G$, $\mathbb{S} \subseteq 2^{\obsnodes{G}}$. Let $I_{\mathbb{S}}$ denote an inequality that operates on the corresponding family of distributions, $\{ P_{\bm{V}}: \bm{V} \in \mathbb{S}\}$. Then $I_{\mathbb{S}}$ is a \tblue{\em causal compatibility inequality for the causal structure $G$} whenever it is satisfied by every family of distributions $\{ P_{\bm{V}}: \bm{V} \in \mathbb{S}\}$ that is compatible with $G$.
\end{definition}
While violation of a causal compatibility inequality witnesses the incompatibility with the causal structure, satisfaction of the inequality does not guarantee compatibility. This is the sense in which it merely provides a {\em necessary} condition for compatibility. 

The inflation technique is useful for deriving causal compatibility inequalities because of the following consequence of \cref{mainlemma}:

\begin{corollary} \label[corollary]{maincorollary}
Suppose that $G'$ is an inflation of $G$. Let $\mathbb{S}' \subseteq \SmallNamedFunction{InjectableSets}{G'}$ be a family of injectable sets and $\mathbb{S} \subseteq \SmallNamedFunction{ImagesInjectableSets}{G}$ the images of members of $\mathbb{S}'$ under the dropping of copy-indices.
Let $I_{\mathbb{S}'}$ be a causal compatibility inequality for $G'$ operating on families $\{ P_{\bm{V}'} : \bm{V}' \in \mathbb{S}'\}$. Define an inequality $I_{\mathbb{S}}$ as follows: in the functional form of $I_{\mathbb{S}'}$, replace every occurrence of a term $P_{\bm{V}'}$ by $P_{\bm{V}}$ for the unique $\bm{V}\in S$ with $\bm{V} \sim \bm{V}'$. Then $I_{\mathbb{S}}$ is a causal compatibility inequality for $G$ operating on families $\{ P_{\bm{V}} : \bm{V}\in \mathbb{S}\}$.
\end{corollary}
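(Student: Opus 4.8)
The plan is to obtain this statement as an essentially immediate consequence of \cref{mainlemma} together with the definition of a causal compatibility inequality, so the proof is a short logical chaining rather than a computation. Recall that $I_{\mathbb{S}}$ counts as a causal compatibility inequality for $G$ precisely when it is satisfied by \emph{every} family $\{P_{\bm{V}} : \bm{V} \in \mathbb{S}\}$ that is compatible with $G$. The entire argument therefore reduces to fixing one arbitrary such compatible family and checking that the substituted inequality holds on it.

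First I would let $\{P_{\bm{V}} : \bm{V} \in \mathbb{S}\}$ be an arbitrary family compatible with $G$. By \cref{def:compatible}, this compatibility means there is a distribution $P_{\obsnodes{G}}$ that is compatible with $G$ and yields each $P_{\bm{V}}$ as a marginal. Applying \cref{mainlemma} to this $P_{\obsnodes{G}}$ then produces the corresponding family $\{P_{\bm{V}'} : \bm{V}' \in \mathbb{S}'\}$, defined by $P_{\bm{V}'} = P_{\bm{V}}$ whenever $\bm{V}' \sim \bm{V}$, and guarantees that this family is compatible with $G'$. Since $I_{\mathbb{S}'}$ is by hypothesis a causal compatibility inequality for $G'$, it is satisfied by the family $\{P_{\bm{V}'}\}$.

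The remaining step is to transfer this satisfied inequality back to the original family. By construction $I_{\mathbb{S}}$ is obtained from $I_{\mathbb{S}'}$ by replacing each occurrence of a term $P_{\bm{V}'}$ with $P_{\bm{V}}$ for the unique image $\bm{V} \sim \bm{V}'$. Because a copy $X_i$ carries the same cardinality as its original $X$, dropping copy-indices identifies the outcome space of $\bm{V}'$ with that of $\bm{V}$, and under this identification the defining relation $P_{\bm{V}'} = P_{\bm{V}}$ makes the two distributions numerically equal as functions. Hence the value of $I_{\mathbb{S}}$ evaluated on $\{P_{\bm{V}}\}$ equals, term by term, the value of $I_{\mathbb{S}'}$ evaluated on $\{P_{\bm{V}'}\}$, so $I_{\mathbb{S}}$ is satisfied as well. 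As the compatible family was arbitrary, $I_{\mathbb{S}}$ is a causal compatibility inequality for $G$.

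I do not expect a genuine obstacle, since the content is bookkeeping built on top of \cref{mainlemma}; the one point demanding care is the well-definedness of the substitution. The image map $\bm{V}' \mapsto \bm{V}$ need not be injective---for instance, in the Bell inflation of \cref{fig:BellDagCopy1} the injectable sets $\{A_1 B_1 X_1 Y_1\}$, $\{A_2 B_1 X_2 Y_1\}$, $\{A_1 B_2 X_1 Y_2\}$ and $\{A_2 B_2 X_2 Y_2\}$ all share the single image $\{ABXY\}$---so I would note that distinct injectable sets with a common image are assigned the very same marginal by \cref{mainlemma}. This keeps the replacement consistent and renders the two inequalities genuinely identical after relabeling, which is exactly what the term-by-term equality above requires.
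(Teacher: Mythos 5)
Your proposal is correct and follows essentially the same route as the paper's proof: fix an arbitrary compatible family, invoke \cref{mainlemma} to transfer it to a family compatible with $G'$, apply $I_{\mathbb{S}'}$ there, and note that $I_{\mathbb{S}}$ evaluated on the original family coincides term by term with $I_{\mathbb{S}'}$ evaluated on the transferred one. Your added remark on the non-injectivity of the image map (as in the Bell inflation) is a sensible clarification but does not change the argument.
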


\begin{proof}
Suppose that the family $\{ P_{\bm{V}} : \bm{V} \in \mathbb{S}\}$ is compatible with $G$. By \cref{mainlemma}, it follows that the family $ \{ P_{\bm{V}'} : \bm{V}' \in \mathbb{S}'\}$ where $P_{\bm{V}'}:= P_{\bm{V}}$ for $\bm{V}' \sim \bm{V}$ is compatible with $G'$. Since $I_{\mathbb{S}'}$ is a causal compatibility inequality for $G'$, it follows that $\{ P_{\bm{V}'} : \bm{V}' \in \mathbb{S}'\}$ satisfies $I_{\mathbb{S}'}$. But by the definition of $I_{\mathbb{S}}$, its evaluation on $\{ P_{\bm{V}} : \bm{V} \in \mathbb{S}\}$ is equal to $I_{\mathbb{S}'}$ evaluated on $\{ P_{\bm{V}'} : \bm{V}' \in \mathbb{S}'\}$. It therefore follows that $\{ P_{\bm{V}} : \bm{V} \in \mathbb{S}\}$ satisfies $I_{\mathbb{S}}$. Since $\{ P_{\bm{V}} : \bm{V}\in \mathbb{S}\}$ was an arbitrary family compatible with $G$, we conclude that $I_{\mathbb{S}}$ is a causal compatibility inequality for $G$.
\end{proof}

We now present some simple examples of causal compatibility inequalities for the Triangle scenario that one can derive from the inflation technique via \cref{maincorollary}. Some terminology and notation will facilitate their description. We refer to a pair of nodes which do not share any common ancestor as being \tblue{ancestrally independent}. This is equivalent to being $d$-separated by the empty set~\cite{pearl2009causality,spirtes2011causation,studeny2005probabilistic,koller2009probabilistic}. Given that the conventional notation for $X$ and $Y$ being $d$-separated by $Z$ in a DAG is $X\aindep Y|Z$, we denote $X$ and $Y$ being ancestrally independent within $G$ as $X\aindep Y$. Generalizing to sets, $\bm{X}\aindep \bm{Y}$ indicates that no node in $\bm{X}$ shares a common ancestor with any node in $\bm{Y}$ within the causal structure $G$, 
\begin{align}
\bm{X}\aindep \bm{Y} \quad \text{iff} \quad \An[G]{\bm{X}}\cap\An[G]{\bm{Y}}=\emptyset.
\end{align}
Ancestral independence is closed under union; that is, $\bm{X}\aindep \bm{Y}$ and $\bm{X}\aindep \bm{Z}$ implies $\bm{X}\aindep (\bm{Y{\cup}Z)}$. Consequently, pairwise ancestral independence implies joint factorizability; i.e. $\forall_{i\neq j} \bm{X}_i\aindep \bm{X}_j$ implies that $P_{\cup_i \bm{X}_i}=\prod_i P_{\bm{X}_i}$.


\begin{example}[\tred{A causal compatibility inequality in terms of \tpurp{\emph{correlators}}}]
\label{example:polytriangle}

As in \cref{example:noGHZ} of the previous subsection, consider the Cut inflation of the Triangle scenario (\cref{fig:simpleinflation}), where all observed variables are binary. For technical convenience, we assume that they take values in the set $\{-1,+1\}$, rather than taking values in $\{0,1\}$ as was presumed in the last subsection.

The injectable sets that we make use of are $\brackets{A_2 C_1}$, $\brackets{B_1 C_1}$, $\{ A_2\}$, and $\brackets{B_1}$. From \cref{maincorollary}, any causal compatibility inequality for the inflated causal structure that operates on the marginal distributions of $\brackets{A_2 C_1}$, $\brackets{B_1 C_1}$, $\{ A_2\}$, and $\brackets{B_1}$ will yield a causal compatibility inequality for the original causal structure that operates on the marginal distributions on $\brackets{A C}$, $\brackets{B C}$, $\brackets{A}$, and $\brackets{B}$. We begin by noting that for {\em any} distribution on three binary variables $\{A_2 B_1 C_1\}$, that is, {\em regardless} of the causal structure in which they are embedded, the marginals on $\brackets{A_2 C_1}$, $\brackets{B_1 C_1}$ and $\brackets{A_2 B_1}$ satisfy the following inequality for expectation values~\cite{pitowsky_boole_1994,Pitowsky1989,kellerer_marginal_1964,leggett_garg_1985,araujo_cycle_2013},
\begin{equation}
	\label{eq:polymonogamyraw}
	\expec{ A_2 C_1} + \expec{ B_1 C_1 } \leq 1 + \expec{ A_2 B_1 }.
\end{equation}
This is an example of a constraint on pairwise correlators that arises from the presumption that they are consistent with a joint distribution. (The problem of deriving such constraints is the {\em marginal constraint problem}, discussed in detail in \cref{sec:ineqs}.)

But in the Cut inflation of the Triangle scenario~(\cref{fig:simpleinflation}), $A_2$ and $B_1$ have no common ancestor and consequently any distribution compatible with this inflated causal structure must make $A_2$ and $B_1$ marginally independent. In terms of correlators, this can be expressed as 
\begin{align}\label{corrfact}
A_2 \aindep B_1 \implies A_2 \indep B_1 \implies \expec{ A_2 B_1 } = \expec{ A_2} \expec{ B_1 }.
\end{align}
Substituting this into~\cref{eq:polymonogamyraw}, we have
\begin{equation}
	\expec{ A_2 C_1} + \expec{ B_1 C_1 } \leq 1 + \expec{ A_2 } \expec{ B_1}.
\end{equation}
This is an example of a simple but nontrivial causal compatibility inequality for the causal structure of~\cref{fig:simpleinflation}. Finally, by \cref{maincorollary}, we infer that 
\begin{equation}
	\label{eq:polymonogamy}
	\expec{ A C} + \expec{ B C} \leq 1 + \expec{ A} \expec{ B}
\end{equation}
is a causal compatibility inequality for the Triangle scenario. This inequality expresses the fact that as long as $A$ and $B$ are not completely biased, there is a tradeoff between the strength of $AC$ correlations and the strength of $BC$ correlations. 

Given the symmetry of the Triangle scenario under permutations and sign flips of $A$, $B$ and $C$, it is clear that the image of inequality~\eqref{eq:polymonogamy} under any such symmetry is also a valid causal compatibility inequality. Together, these inequalities constitute a type of monogamy\footnote{We are here using the term ``monogamy'' in the same sort of manner in which it is used in the context of entanglement theory~\cite{horo4}.} of correlations in the Triangle scenario with binary variables: if any two observed variables with unbiased marginals are perfectly correlated, then they are both independent of the third.

Moreover, since inequality \eqref{eq:polymonogamyraw} is valid even for continuous variables with values in the interval $[-1,+1]$, it follows that the polynomial inequality \eqref{eq:polymonogamy} is valid in this case as well.

Note that inequality \eqref{eq:polymonogamyraw} serves as a robust witness certifying the incompatibility of 3-way perfect correlation (described in \cref{eq:ghzdistribution1}) with the Triangle scenario. Inequality \eqref{eq:polymonogamyraw} is robust in the sense that it demonstrates the incompatibility of distributions \emph{close} to 3-way perfect correlation.

One might be curious as to \emph{how} close to perfect correlation one can get while still being compatible with the Triangle scenario. To partially answer this question, we used \cref{eq:polymonogamyraw} to \emph{rule out} many distributions close to perfect correlation and we also pursued explicit model-construction to \emph{rule in} various distributions sufficiently \emph{far} from perfect correlation. Explicitly, we found that distributions of the form
\begin{align}\label{eq:ghznoisy}
P_{A B C} =\alpha\frac{[000]+[111]}{2}+(1-\alpha)\frac{[\text{else}]}{6},\quad\text{i.e.,}\quad P_{A B C}(a b c)=\begin{cases}\tfrac{\alpha}{2}&\text{if }\; a = b = c, \\ \tfrac{1-\alpha}{6}&\text{otherwise},\end{cases}
\end{align}
 where $[\text{else}]$ denotes any point distribution $[abc]$ other than $[000]$ or $[111]$, are incompatible for the range $\tfrac{5}{8}=0.625< \alpha \leq 1$ as a consequence of \cref{eq:polymonogamyraw}. On the other hand, we found a family of explicit models allowing us to certify the \emph{compatibility} of distributions for $0\leq \alpha \leq \tfrac{1}{2}$. 

{The presence of this gap between our inner and outer constructions could reflect either the inadequacy of our limited model constructions or the inadequacy of relatively small inflations of the Triangle causal structure to generate suitably sensitive inequalities. We defer closing the gap to future work\footnote{Using the Web inflation of the Triangle as depicted in \cref{fig:TriFullDouble} we were able to slightly improve the range of certifiably incompatible $\alpha$, namely we find that $P_{A B C}$ is incompatible with the Triangle scenario for all $\tfrac{3\sqrt{3}}{2}-2 \approx 0.598< \alpha$. The relevant causal compatibility inequality justifying the improved bound is $6 \expec{\_ \_} + \expec{\_ \_}^2-4\expec{\_}^2 \leq 3$, where $\expec{\_ \_}\coloneqq \frac{\expec{A B}+\expec{B C}+\expec{A C}}{3}$ and $\expec{\_ }\coloneqq \frac{\expec{A}+\expec{B}+\expec{C}}{3}$.}.}

\end{example}

\begin{example}[\tred{A causal compatibility inequality in terms of \tpurp{\emph{entropic quantities}}}]
\label{ex:entropic}

One way to derive constraints that are independent of the cardinality of the observed variables is to express these in terms of the mutual information between observed variables rather than in terms of correlators. The inflation technique can also be applied to achieve this.
To see how this works in the case of the Triangle scenario, consider again the Cut inflation~(\cref{fig:simpleinflation}). 

One can follow the same logic as in the preceding example, but starting from a different constraint on marginals. For any distribution on three variables $\{A_2 B_1 C_1\}$ of arbitrary cardinality (again, regardless of the causal structure in which they are embedded), the marginals on $\brackets{A_2 C_1}$, $\brackets{B_1 C_1}$ and $\brackets{A_2 B_1}$ satisfy the inequality~\cite[Eq.~(29)]{fritz2013marginal}
\label{example:entropic}\begin{align}\label{eq:MIraw}
	I(A_2 : C_1) + I(C_1 : B_1) \leq H(C_1) + I(A_2 : B_1),	
\end{align}
where $H(X)$ denotes the Shannon entropy of the distribution of $X$, and $I(X: Y)$ denotes the mutual information between $X$ and $Y$ with respect to the marginal joint distribution on the pair of variables $X$ and $Y$. The fact that $A_2$ and $B_1$ have no common ancestor in the inflated causal structure implies that in any distribution that is compatible with it, $A_2$ and $B_1$ are marginally independent. This is expressed entropically as the vanishing of their mutual information, 
\begin{align}\label{entropicfact}
A_2 \aindep B_1 \implies A_2 \indep B_1 \implies I(A_2 : B_1) =0.
\end{align}
Substituting the latter equality into~\cref{eq:MIraw}, we have
\begin{align}
	I(A_2 : C_1) + I(C_1 : B_1) \leq H(C_1).
\end{align}
This is another example of a nontrivial causal compatibility inequality for the causal structure of~\cref{fig:simpleinflation}. By \cref{maincorollary}, it follows that 
\begin{align}\label{eq:monogomyofcorrelations}
	I(A : C) + I(C : B) \leq H(C)
\end{align}
is also a causal compatibility inequality for the Triangle scenario. This inequality was originally derived in~\cite{fritz2012bell}. Our rederivation in terms of inflation coincides with the proof found by~\citet{pusey2014gdag}.
\end{example}

{Standard algorithms already exist for deriving entropic casual compatibility inequalities given a causal structure~\cite{fritz2013marginal,chaves2014novel,chaves2014informationinference}. We do not expect the methodology of causal inflation to offer any computation advantage in the task of deriving entropic inequalities. The advantage of the inflation approach is that it provides a narrative for explaining an entropic inequality without reference to unobserved variables. As elaborated in \cref{sec:classicallity}, this consequently has applications to quantum information theory. A further advantage is the potential of the inflation approach to give rise to non-Shannon type inequalities, starting from Shannon type inequalities; see \cref{sec:NonShannon} for further discussion.}

\begin{example}[\tred{A causal compatibility inequality in terms of \tpurp{\emph{joint distributions}}}]
\label{example:probineq}

Consider the Spiral inflation of the Triangle scenario (\cref{fig:Tri222}) with the injectable sets $\{A_1 B_1 C_1\}$, $\{A_1 B_2\}$, $\{B_1 C_2\}$, $\{ A_1, C_2\}$, $\{A_2\}$, $\{B_2\}$, and $\{C_2\}$. We derive a causal compatibility inequality under the assumption that the observed variables are binary, adopting the convention that they take values in $\{0,1\}$.

We begin by noting that the following is a constraint that holds for any joint distribution of $\{A_1 B_1 C_1 A_2 B_2 C_2\}$, regardless of the causal structure, 
\begin{align}\label{eq:FritzF3raw}
	P_{A_2 B_2 C_2}(111) \leq P_{A_1 B_2 C_2}(111) + P_{B_1 C_2 A_2}(111) + P_{A_2 C_1 B_2}(111) + P_{A_1 B_1 C_1}(000).
\end{align}
To prove this claim, it suffices to check that the inequality holds for each of the $2^6$ deterministic assignments of outcomes to $\{A_1 B_1 C_1 A_2 B_2 C_2\}$, from which the general case follows by convex linearity. A more intuitive proof will be provided in~\cref{sec:TSEM}.

Next, we note that certain sets of variables have no common ancestors with other sets of variables in the inflated causal structure, which implies the marginal independence of these sets. Such independences are expressed in the language of joint distributions as factorizations,
\begin{align}\begin{split}\label{eq:tri222fac}
A_1 B_2 \aindep C_2 \:\implies\:	P_{A_1 B_2 C_2} &= P_{A_1 B_2} P_{C_2}, \\
B_1 C_2 \aindep A_2 \:\implies\:	P_{B_1 C_2 A_2} &= P_{B_1 C_2} P_{A_2}, \\
A_2 C_1 \aindep B_2 \:\implies\:	P_{A_2 C_1 B_2} &= P_{A_2 C_1} P_{B_2}, \\
A_2 \aindep B_2 \aindep C_2 \:\implies\:	P_{A_2 B_2 C_2} &= P_{A_2} P_{B_2} P_{C_2} .
\end{split}\end{align}
Substituting these factorizations into~\cref{eq:FritzF3raw}, we obtain the polynomial inequality
\begin{equation}
	P_{A_2}(1) P_{B_2}(1) P_{C_2}(1) \leq P_{A_1 B_2}(11) P_{C_2}(1) + P_{B_1 C_2}(11) P_{A_2}(1) + P_{A_2 C_1}(11) P_{B_2}(1) + P_{A_1 B_1 C_1}(000).
\end{equation}
This, therefore, is a causal compatibility inequality for the inflated causal structure. Finally, by \cref{maincorollary}, we infer that 
\begin{equation}\label{eq:FritzF3}
	P_{A}(1) P_{B}(1) P_{C}(1) \leq P_{AB}(11) P_C(1) + P_{BC}(11) P_A(1) + P_{AC}(11) P_B(1) + P_{ABC}(000)
\end{equation}
is a causal compatibility inequality for the Triangle scenario. 

What is distinctive about this inequality is that---through the presence of the term $P_{ABC}(000)$---it takes into account genuine three-way correlations, while the inequalities we derived earlier only depend on the two-variable marginals. This inequality is strong enough to demonstrate the incompatibility of the W-type distribution of \cref{eq:wdistribution1} with the Triangle scenario: for this distribution, the right-hand side of the inequality vanishes while the left-hand side does not.
\end{example}

Of the known techniques for witnessing the incompatibility of a distribution with a causal structure or deriving necessary conditions for compatibility, the most straightforward one is to consider the constraints implied by ancestral independences among the observed variables of the causal structure. 
The constraints derived in the last two sections have all made use of this basic technique, but at the level of the inflated causal structure rather than the original causal structure. The constraints that one thereby infers for the original causal structure reflect facts about it that cannot be expressed in terms of ancestral independences among its observed variables. The inflation technique exposes these facts in the ancestral independences among observed variables of the inflated causal structure.

In the rest of this article, we shall continue to rely only on the ancestral independences among observed variables within the inflated causal structure to derive examples of compatibility constraints on the original causal structure. Nonetheless, it seems plausible that the inflation technique can also amplify the power of {\em other} techniques that do not merely consider ancestral independences among the observed variables. We consider some prospects in \cref{sec:otherprospects}.

\section{Systematically Witnessing Incompatibility and Deriving Inequalities}
\label{sec:ineqs}

This section considers the problem of how to generalize the above examples of causal inference via the inflation technique to a systematic procedure. We start by introducing the crucial concept of an \emph{expressible set},
 which figures implicitly in our earlier examples. By reformulating \cref{example:noGHZ}, we sketch our general method and explain why solving a \emph{marginal problem} is an essential subroutine of our method. Subsequently, \cref{step:findpreinjectable} explains how to systematically identify, for a given inflated causal structure, all of the 
 sets that are expressible by virtue of ancestral independences.
 \cref{step:marginalsproblem} describes how to solve any sort of marginal problem. This may involve determining all the facets of the \emph{marginal polytope}, which is computationally costly (\cref{sec:projalgorithms}). It is therefore useful to also consider relaxations of the marginal problem that are more tractable by deriving valid linear inequalities which may or may not bound the marginal polytope tightly. We describe one such approach based on possibilistic Hardy-type paradoxes and the hypergraph transversal problem in \cref{sec:TSEM}.

As far as causal compatibility inequalities are concerned, we limit ourselves to those expressed in terms of probabilities\footnote{Or, for binary variables, equivalently in terms of correlators, as in the first example of \cref{Sec:DerivingInequalities}.}, as these are generally the most powerful. However, essentially the same techniques can be used to derive inequalities expressed in terms of entropies~\cite{fritz2013marginal}, as demonstrated in \cref{ex:entropic}. 

In the examples from the previous section, the initial inequality---a constraint upon marginals that is independent of the causal structure---involves sets of observed variables that are \emph{not} all injectable sets. However, the Markov conditions on the inflated causal structures nevertheless allowed us to \emph{express} the distribution on these sets in terms of the known distributions on the injectable sets. For instance, in \cref{example:polytriangle}, the set $\{ A_2 B_1\}$ is not injectable, but it can be partitioned into the singleton sets $\{ A_2 \}$ and $\{ B_1\}$ which are ancestrally independent, so that one has $P_{A_2 B_1} = P_{A_2} P_{B_1} = P_A P_B$ in every inflated causal model. This motivates us to define the notion of an \tblue{expressible set} of variables in an inflated causal structure as one for which the joint distribution can be expressed as a function of distributions over injectable sets by making repeated use of the conditional independences implied by $d$-separation relations as well as marginalization. More formally,

\begin{definition}\label{def:expressible}
Consider an inflation $G'$ of a causal structure $G$. Sufficient conditions for a set of variables $\bm{V}' \subset \obsnodes{G'}$ to be \tblue{expressible} include $\bm{V}'\in\SmallNamedFunction{InjectableSets}{G'}$, or if $\bm{V}'$ can be obtained from a collection of injectable sets by recursively applying the following rules:
\begin{enumerate}
\item For $\bm{X}',\bm{Y}',\bm{Z}'\subseteq\obsnodes{G'}$, if $\bm{X}'\aindep \bm{Y}'\:|\:\bm{Z}'$ and $\bm{X}' \cup \bm{Z}'$ and $\bm{Y}' \cup \bm{Z}'$ are expressible, then $\bm{X}'\cup \bm{Y}'\cup \bm{Z}'$ is also expressible. {This follows by constructing $\p[\bm{X}'\bm{Y}'\bm{Z}']{\bm{x}\bm{y}\bm{z}}=\begin{cases}\frac{\p[\bm{X}'\bm{Z}']{\bm{x}\bm{z}}\p[\bm{Y}'\bm{Z}']{\bm{y}\bm{z}}}{\p[\bm{Z}']{\bm{z}}} & \quad\text{if }\p[\bm{Z}']{\bm{z}}>0, \\
 0 & \quad\text{if }\p[\bm{Z}']{\bm{z}}= 0.
 \end{cases}$}
\item If $\bm{V}'\subseteq\obsnodes{G'}$ is expressible, then so is every subset of $\bm{V}'$. {This follows by marginalization.}
\end{enumerate}
\end{definition}
An expressible set is \emph{maximal} if it is not a proper subset of another expressible set.

Expressible sets are important since in an inflated model, the distribution of the variables making up an expressible set can be computed explicitly from the known distributions on the injectable sets, by repeatedly using the conditional independences implied by $d$-separation and taking marginals. \cref{example:Pienaar} provides a good example.

With the exception of \cref{sec:interestingproof}, in the remainder of this article we will limit ourselves to working with expressible sets of a particularly simple kind and leave the investigation of more general expressible sets to future work.

\begin{definition}
A set of nodes $\bm{V}'\subseteq\obsnodes{G'}$ is \tblue{ai-expressible} if it can be written as a union of injectable sets that are ancestrally independent,
\begin{align}\label{eq:defpreinj}
\begin{split}
\bm{V}'\in & \SmallNamedFunction{AI-ExpressibleSets}{G'} \\
	& \quad\text{ iff }\quad \exists \{ \bm{X}'_i \in \SmallNamedFunction{InjectableSets}{G'} \} \quad \text{s.t.}\quad \bm{V}'=\bigcup_i \bm{X}'_i \quad\text{and} \quad \forall_{i\ne j}: \bm{X}'_i \aindep \bm{X}'_j \text{ in }G'.
\end{split}
\end{align}
An ai-expressible set is \emph{maximal} if it is not a proper subset of another ai-expressible set.
\end{definition}

Because ancestral independence in $G'$ implies statistical independence for any compatible distribution, it follows that if $\bm{V}'$ is an ai-expressible set with ancestrally independent and injectable components $\bm{V}'_1,\ldots,\bm{V}'_n$, then we have the factorization
\begin{align}\label{eq:preinjfactor}
P_{\bm{V}'} = P_{\bm{V}'_1} \cdots P_{\bm{V}'_n}
\end{align}
for any distribution compatible with $G'$. The situation, therefore, is this: for any constraint that one can derive for the marginals on the ai-expressible sets based on the existence of a joint distribution---and hence without reference to the causal structure---one can infer a constraint that {\em does} refer to the causal structure by substituting within the derived constraint a factorization of the form of~\cref{eq:preinjfactor}. This results in a causal compatibility inequality on $G'$ of a very weak form that only takes into account the independences between observed variables.

As a build-up to our exposition of a systematic application of the inflation technique, we now revisit \cref{example:noGHZ}. As before, to demonstrate the incompatibility of the distribution of \cref{eq:ghzdistribution1} with the Triangle scenario, we assume compatibility and derive a contradiction. Given the distribution of \cref{eq:ghzdistribution1}, \cref{mainlemma} implies that the marginal distributions on the injectable sets of the Cut inflation of the Triangle scenario are 
\begin{equation}
P_{A_2 C_1} = P_{B_1 C_1} = \frac{1}{2} [00] +\frac{1}{2} [11], 
\label{marginals1}
\end{equation}
and
\begin{equation}
\qquad P_{A_2} = P_{B_1}=\frac{1}{2} [0] +\frac{1}{2} [1].
\label{marginals1prime}
\end{equation}
From the fact that $A_2$ and $B_1$ are ancestrally independent in the Cut inflation, we also infer that the distribution on the ai-expressible set $\{A_2 B_1\}$ must be
\begin{equation}
P_{A_2 B_1} = P_{A_2}P_{B_1} = \left(\frac{1}{2} [0] +\frac{1}{2} [1]\right)\times\left(\frac{1}{2} [0] +\frac{1}{2} [1]\right)=\frac{1}{4} [00]+\frac{1}{4} [01]+\frac{1}{4} [10]+\frac{1}{4} [11].
\label{marginals2}
\end{equation}
But there is no three-variable distribution $P_{A_2 B_1 C_1}$ that would have as its two-variable marginals the distributions of \cref{marginals1,marginals2}. For as we noted in our prior discussion of this example, the perfect correlation between $A_2$ and $C_1$ exhibited by $P_{A_2 C_1}$ and the perfect correlation between $B_1$ and $C_1$ exhibited by $P_{B_1 C_1}$ would entail perfect correlation between $A_2$ and $B_1$ as well, which is at odds with \eqref{marginals2}. 
We have therefore derived a contradiction and consequently can infer the incompatibility of the distribution of \cref{eq:ghzdistribution1} with the Triangle scenario.

Generalizing to an arbitrary causal structure, therefore, the procedure is as follows:
\begin{enumerate}
\item Based on the inflation under consideration, identify the ai-expressible sets and how they each partition into ancestrally independent injectable sets.
\item From the given distribution on the original causal structure, infer the family of distributions on the ai-expressible sets of the inflated causal structure as follows: the distribution on any injectable set is equal to the corresponding distribution on its image in the original causal structure; the distribution on any ai-expressible set is the product of the distributions on the injectable sets into which it is partitioned.
\item Determine whether the family of distributions obtained in step 2 are the marginals of a single joint distribution. If not, then the original distribution is incompatible with the original causal structure.
\end{enumerate}

{We have just described how to test a specified joint distribution for compatibility with a given causal structure by means of considering an inflation of that causal structure. Passing the inflation-based test is a necessary but not sufficient requirement for the specified joint distribution to be compatible with a given causal structure.} {The procedure is to focus on a particular family of marginals (on the images of injectable sets) of the given joint distribution, then from products of these, obtain the distribution on each of the ai-expressible sets. Finally, one asks simply whether the family of distributions on the ai-expressible sets are consistent in the sense of all being marginals of a single joint distribution. By analogous logic, the following technique allows one to systematically derive causal compatibility inequalities: find the constraints that any family of distributions on the ai-expressible sets must satisfy if these are to be consistent in the sense of all being marginals of a single joint distribution. Next, express each distribution of this family as a product of distributions on the injectable sets, according to \cref{eq:preinjfactor}, and rewrite the constraints in terms of the family of distributions on the injectable sets. These constraints constitute causal compatibility inequalities for the inflated causal structure. Finally, one can rewrite the constraints in terms of the family of distributions on the images of the injectable sets, using \cref{maincorollary}, to obtain causal compatibility inequalities for the original causal structure.}


In summary, we have used the contrapositive of \cref{mainlemma} in order to show:

\begin{theorem}
Let $G'$ be an inflation of $G$. Let a distribution $P_{\obsnodes{G}}$ be given. Consider the family of 
distributions
 $\{ P_{\bm{V}'} : \bm{V}' \in \SmallNamedFunction{AI-ExpressibleSets}{G'} \}$. Following ~\cref{eq:preinjfactor}, each distribution in that set factorizes according to ${P_{\bm{V}'} = \prod_{i{=}1}^n P_{\bm{V}'_i}}$, where the variable subsets $\bm{V}'_1 \cdots \bm{V}'_n$ associated with the factorization are precisely the injectable components of the ai-expressible set $\bm{V}'$. Additionally, for every injectable set $\bm{V}'_i$, let $P_{\bm{V}_i'}= P_{\bm{V}_i}$ where $P_{\bm{V}_i}$ is the marginal on $\bm{V}_i$ of $P_{\obsnodes{G}}$, and where $\bm{V}_i' \sim  \bm{V}_i$. 
 If the family of 
distributions $\{ P_{\bm{V}'} : \bm{V}' \in \SmallNamedFunction{AI-ExpressibleSets}{G'} \}$ does not arise as the family of marginals of some joint distribution, then the original distribution $P_{\obsnodes{G}}$ is not compatible with $G$.
\end{theorem}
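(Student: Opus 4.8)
The plan is to prove the contrapositive: assuming that $P_{\obsnodes{G}}$ \emph{is} compatible with $G$, I will exhibit a single joint distribution of which the entire family $\{ P_{\bm{V}'} : \bm{V}' \in \SmallNamedFunction{AI-ExpressibleSets}{G'}\}$ is the family of marginals. The natural candidate for that common joint distribution is the full observed distribution $P_{\obsnodes{G'}}$ of the inflated model (restricted, if desired, to the union of the ai-expressible sets), so the whole argument reduces to showing that each member of the constructed family coincides with the corresponding marginal of $P_{\obsnodes{G'}}$.

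First I would invoke \cref{mainlemma}. Let $\mathbb{S}'$ be the collection of all injectable sets that appear as a component of some ai-expressible set, and let $\mathbb{S}$ be its images under the dropping of copy-indices. Since $P_{\obsnodes{G}}$ is assumed compatible with $G$, \cref{mainlemma} supplies a distribution $P_{\obsnodes{G'}}$ compatible with $G'$ whose marginal on each component $\bm{V}'_i \in \mathbb{S}'$ equals the copied distribution $P_{\bm{V}'_i} = P_{\bm{V}_i}$ prescribed in the statement. This pins down the injectable-component marginals of the common joint distribution to be exactly the ones the theorem uses.

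Next I would exploit compatibility of $P_{\obsnodes{G'}}$ with $G'$ to control the joint behaviour of the components of each ai-expressible set. By definition, each $\bm{V}' = \bigcup_i \bm{V}'_i$ decomposes into pairwise ancestrally independent injectable components; since ancestral independence in $G'$ forces statistical independence in any distribution compatible with $G'$, \cref{eq:preinjfactor} gives that the $\bm{V}'$-marginal of $P_{\obsnodes{G'}}$ factorizes as $\prod_i P_{\bm{V}'_i}$. Combining this with the previous step, that marginal equals $\prod_i P_{\bm{V}_i}$, which is precisely the distribution $P_{\bm{V}'}$ built in the statement. Hence every member of the constructed family is a genuine marginal of the one distribution $P_{\obsnodes{G'}}$, so the family \emph{does} arise as the marginals of a common joint distribution, contradicting the hypothesis; therefore $P_{\obsnodes{G}}$ cannot be compatible with $G$.

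The individual steps are routine given the earlier results, so the only point demanding care, and the one I expect to be the crux, is the \emph{simultaneous} consistency of the whole family: it is not enough that each ai-expressible distribution be realizable on its own, as they must all descend from one global joint distribution. This is exactly what is secured by tying everything to the single $P_{\obsnodes{G'}}$ of \cref{mainlemma} — the injectable marginals are fixed in the first step and the product factorizations of \cref{eq:preinjfactor} are properties of that same distribution, so no independent choices are made that could fail to cohere. The factorization \cref{eq:preinjfactor} is thus the linchpin, since it is what guarantees that the product-defined family agrees with honest marginals of $P_{\obsnodes{G'}}$ rather than being a mere formal product of separately realizable pieces.
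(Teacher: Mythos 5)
Your proof is correct and follows essentially the same route as the paper, which presents the theorem precisely as the contrapositive of \cref{mainlemma} combined with the factorization \cref{eq:preinjfactor}: compatibility of $P_{\obsnodes{G}}$ with $G$ yields a single $P_{\obsnodes{G'}}$ compatible with $G'$ whose marginals on the ai-expressible sets are exactly the constructed products of injectable-set marginals. Your emphasis on anchoring every member of the family to that one global distribution, rather than checking each ai-expressible set separately, is exactly the point the paper's argument relies on.
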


The ai-expressible sets play a crucial role in linking the original causal structure with the inflated causal structure. They are precisely those sets of variables whose joint distributions in the inflation model are fully specified by the causal model on the original causal structure, as they can be computed using \cref{eq:preinjfactor} and \cref{mainlemma}. So we begin with the problem of identifying the ai-expressible sets systematically.

\subsection{Identifying the AI-Expressible Sets}
\label{step:findpreinjectable}

To identify the ai-expressible sets of an inflated causal structure $G'$, we must first identify the injectable sets. This problem can be reduced to identifying the injectable pairs of nodes, because if all of the pairs in a set of nodes are injectable, then so too is the set itself. This can be proven as follows. 
Let $\varphi : G' \to G$ be the projection map from $G'$ to the original causal structure $G$, corresponding to removing copy-indices. Then $\varphi$ has the characteristic feature that it preserves and reflects edges: if $A \to B$ in $G'$, then also $\varphi(A) \to \varphi(B)$ in $G$, and vice versa; this follows from the assumption that $G'$ is an inflation of $G$. 
A set $\bm{V} \subseteq \obsnodes{G'}$ is injectable if and only if the
restriction of $\varphi$ to $\An{\bm{V}}$ is an injective map. 
But now injectivity of a map means precisely that no two different
elements of the domain get mapped to the same element of the codomain.
So if $\bm{V}$ is injectable, then so is each of its two-element subsets;
conversely, if $\bm{V}$ is not injectable, then $\varphi$ maps two nodes among the
ancestors of $\bm{V}$ to the same node, which means that there are two nodes in the
ancestry that differ only by copy-index. Each of these two nodes must be
an ancestor of at least some node in $\bm{V}$; if one chooses two such
descendants, then one gets a two-element subset of $\bm{V}$ such that $\varphi$ is not
injective on the ancestry of that subset, and therefore this two-element
set of observed nodes is not injectable.

To enumerate the injectable sets, it is therefore useful to encode certain features of the inflated causal structure in an undirected graph which we call the \tblue{injection graph}. The nodes of the injection graph are the observed nodes of the inflated causal structure, and a pair of nodes $A_i$ and $B_j$ share an edge if the pair $\{ A_i B_j\}$ is injectable. For example, \cref{fig:injection222} shows the injection graph of the Spiral inflation of the Triangle scenario (\cref{fig:Tri222}).
The property noted above states that the injectable sets are precisely the cliques\footnote{A \emph{clique} is a set of nodes in an undirected graph any two of which share an edge.} of the injection graph.
While for many other applications only the maximal cliques are of interest, our application of the inflation technique requires knowledge of all nonempty cliques. 

Given a list of the injectable sets, the ai-expressible sets can be read off from the \tblue{ai-expressability graph}.
The nodes of the ai-expressibility graph are taken to be the injectable sets in $G'$, and two nodes share an edge if the associated injectable sets are ancestrally independent.
\cref{fig:preinjectiongraph222} depicts an example. 
The ai-expressible sets correspond to the cliques of the ai-expressibility graph: the union of all the injectable sets that make up the nodes of a clique is an ai-expressible set, while the individual nodes already give us the partition into injectable sets relevant for the factorization relation of \cref{eq:preinjfactor}. 
For our purposes, it is sufficient to enumerate the maximal ai-expressible sets, so that one only needs to consider the maximal cliques of the ai-expressibility graph.

\begin{figure}[t]
\centering
\begin{minipage}[t]{0.3\linewidth}
\centering
\includegraphics[scale=1]{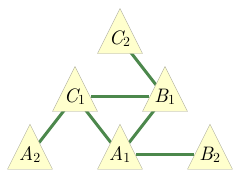}
\caption{The injection graph corresponding to the Spiral inflation of the Triangle scenario (\cref{fig:Tri222}), wherein the cliques are the injectable sets.
}\label{fig:injection222}
\end{minipage}
\hfill
\begin{minipage}[t]{0.33\linewidth}
\centering
\includegraphics[scale=1]{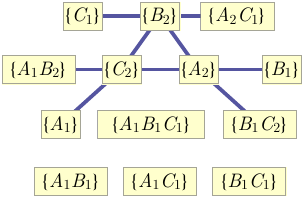}
\caption{The ai-expressibility graph corresponding to the Spiral inflation of the Triangle scenario (\cref{fig:Tri222}), wherein two injectable sets are adjacent iff they are ancestrally independent. A set of nodes is ai-expressible iff it arises as a union of sets that form a clique in this graph.}\label{fig:preinjectiongraph222}
\end{minipage}
\hfill
\begin{minipage}[t]{0.3\linewidth}
\centering
\includegraphics[scale=0.25]{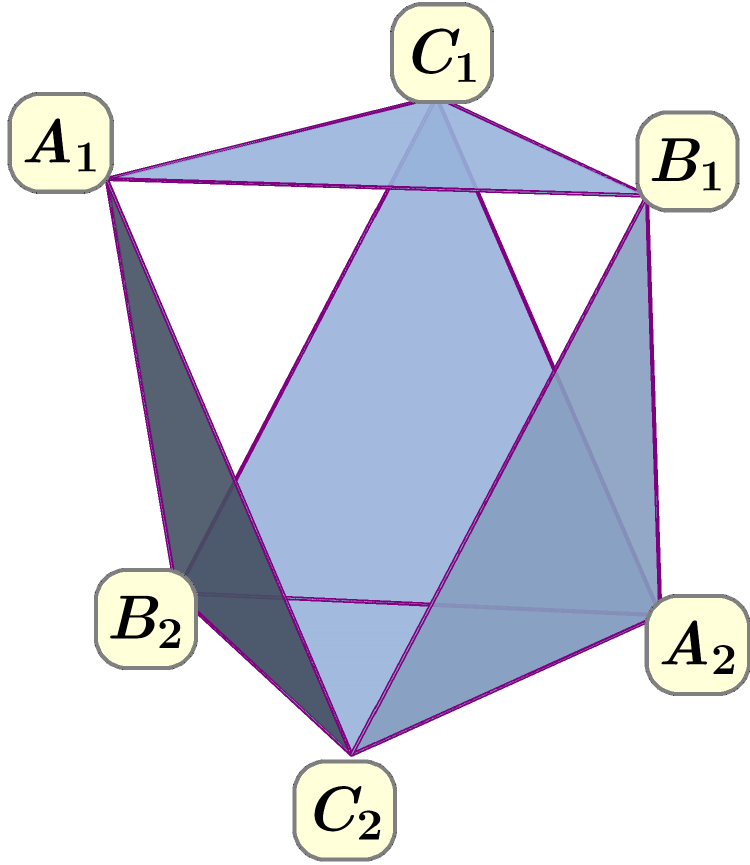}
\caption{The simplicial complex of ai-expressible sets for the Spiral inflation of the Triangle scenario (\cref{fig:Tri222}). The 5 facets correspond to the maximal ai-expressible sets, namely $\{A_1 B_1 C_1\}$, $\{A_1 B_2 C_2\}$, $\{A_2 B_1 C_2\}$, $\{A_2 B_2 C_1\}$ and $\{A_2 B_2 C_2\}$.}\label{fig:simplicialcomplex222}
\end{minipage}
\end{figure}

From \cref{fig:injection222,fig:preinjectiongraph222}, we easily infer the injectable sets and the maximal ai-expressible sets, as well as the partition of the maximal ai-expressible sets into ancestrally independent subsets. For the Spiral example, this results in:
\begin{align}\label{eq:basicsetup222}
\qquad
{\underbrace{\begin{matrix}
\, \\
\brackets{A_1},\:\brackets{B_1},\:\brackets{C_1},\\
\brackets{A_2},\:\brackets{B_2},\:\brackets{C_2},\\
\brackets{A_1 B_1},\:\brackets{A_1 C_1},\:\brackets{B_1 C_1},\\
\brackets{A_1 B_2}.\:\brackets{A_2 C_1},\:\brackets{B_1 C_2},\\
\brackets{A_1 B_1 C_1}
\end{matrix}}_{\substack{\text{The injectable sets}}}}
\qquad\qquad
{\underbrace{\begin{matrix}
\brackets{A_1 B_1 C_1} \\
\brackets{A_1 B_2 C_2} \\
\brackets{B_1 C_2 A_2} \\
\brackets{C_1 A_2 B_2} \\
\brackets{A_2 B_2 C_2}
\end{matrix}}_{\substack{\text{The maximal}\\\text{ai-expressible sets}}}}
\qquad\qquad
{\underbrace{\begin{matrix}
\\
\{A_1 B_2\} \aindep \{ C_2\} \\
\{B_1 C_2\} \aindep \{ A_2\} \\
\{C_1 A_2\} \aindep \{ B_2\} \\
\{A_2\} \aindep \{B_2\} \aindep \{ C_2\}
\end{matrix}}_{\substack{\text{The relevant}\\\text{ancestral independences}}}}
\end{align}

Having identified the ai-expressible sets and how they partition into injectable sets, we now infer the factorization relations implied by ancestral independences, which is \cref{eq:tri222fac} in the Spiral example. Next, we discuss the other ingredient of our systematic procedure: the marginal problem.

\subsection{The Marginal Problem and its Solution}
\label{step:marginalsproblem}

The third step in our procedure is determining whether the given distributions on ai-expressible sets can arise as marginals of one joint distribution on all observed nodes of the inflated causal structure. In general, the problem of determining whether a given family of distributions can arise as marginals of some joint distribution is known as the {\em marginal problem}\footnote{For further references and an outline of the long history of the marginal problem, see~\cite{fritz2013marginal}. An alternative account using the language of presheaves can also be found in~\cite{abramsky_contextuality_2011}.}. In order to derive causal compatibility inequalities, one must solve the closely related problem of determining necessary and sufficient \emph{constraints} that a family of marginal distributions must satisfy in order for the marginal problem to have a solution. For better clarity, we distinguish these two variants of the marginal problem as the \tblue{marginal satisfiability problem} and the \tblue{marginal constraint problem}. The generic \tblue{marginal problem} will be used as an umbrella term referring to both types.

To specify either sort of marginal problem, one must specify the full set of variables to be considered, denoted $\bm{V}$, together with a family of subsets of $\mathbb{S}$, denoted $(\bm{V}_1,\ldots,\bm{V}_n)$ and called \tblue{contexts}. The family of contexts can be visualized through the simplicial complex that it generates, as illustrated in~\cref{fig:simplicialcomplex222}. A \tblue{marginal scenario} consists of a specification of contexts together with a specification of the cardinality of each variable. Every joint distribution $P_{\bm{V}}$ defines a family of marginal distributions $(P_{\bm{V}_1},\ldots,P_{\bm{V}_n})$ through marginalization, $P_{\bm{V}_i} := \sum_{\bm{V} \setminus \bm{V}_i} P_{\bm{V}}$. The marginal problem concerns the converse inference. In the marginal satisfiability problem, a concrete family of distributions $(P_{\bm{V}_1},\ldots,P_{\bm{V}_n})$ is given, and one wants to decide whether there exists a joint distribution $\hat{P}_{\bm{V}}$ such that $P_{\bm{V}_i} = \sum_{\bm{V}\setminus\bm{V}_i} \hat{P}_{\bm{V}}$ for all $i$. In the marginal constraint problem, one seeks to find conditions on the family of distributions $(P_{\bm{V}_1},\ldots,P_{\bm{V}_n})$, considered as parameters, for when a joint distribution $\hat{P}_{\bm{V}}$ exists which reproduces these as marginals, $P_{\bm{V}_i} = \sum_{\bm{V}\setminus\bm{V}_i} \hat{P}_{\bm{V}}$ for all $i$.
 
In order for $\hat{P}_{\bm{V}}$ to exist, distributions on different contexts must be consistent on the intersection of contexts, that is, marginalizing $P_{\bm{V}_i}$ to those variables in the intersection $\bm{V}_i\cap\bm{V}_j$ must result in the same distribution as marginalizing $P_{\bm{V}_j}$ to that intersection. 
 In many cases, this is not sufficient\footnote{Depending on how the contexts intersect with one another, this \emph{may} be sufficient. A precise characterization for when this occurs has been found by~\citet{vorobev_extension_1960}. See also \citet[Thm. 2]{chaves2016limitedmarginals} for an application of this characterization enabling computationally significant shortcuts in solving the marginal constraint problem.}; indeed, we have already seen examples of additional constraints, namely, the inequalities \eqref{eq:polymonogamyraw}, \eqref{eq:MIraw} and \eqref{eq:FritzF3raw} from \cref{Sec:DerivingInequalities}.
So what are the necessary and sufficient conditions? To answer this question, it helps to realize two things:
\begin{itemize}
	\item The set of all valid (positive, normalized) distributions $P_{\bm{V}}$ is precisely the convex hull of the deterministic assignments of values to $\bm{V}$ (the deterministic distributions), and 
	\item The map $P_{\bm{V}}\mapsto (P_{\bm{V}_1},\ldots,P_{\bm{V}_n})$, describing marginalization to each of the contexts in $(\bm{V}_1,\ldots,\bm{V}_n)$, is linear.
\end{itemize}
Hence the \emph{image} of the set of possibilities for the distribution $P_{\bm{V}}$ under the map $P_{\bm{V}}\mapsto (P_{\bm{V}_1},\ldots,P_{\bm{V}_n})$ is exactly the convex hull of the deterministic assignments of values to $(\bm{V}_1,\ldots,\bm{V}_n)$ which are consistent where these contexts overlap. Since there are only finitely many such deterministic assignments, this convex hull is a polytope; it is called the \tblue{marginal polytope}~\cite{kahle_marginal_2010}. Together with the above equations on coinciding submarginals, the facet inequalities of this polytope solve the marginal constraint problem. The marginal satisfiability problem asks about membership in the polytope; by the above, this becomes a linear program with the joint probabilities $P_{\bm{V}}$ as the unknowns.

To express this more concretely, we write the marginal satisfiability problem in the form of a generic linear program.

Let the \tblue{joint distribution vector} $\bm{v}$ be the vector associated with the joint probability distribution $P_{\bm{V}}$, that is, the vector whose components are the probabilities $P_{\bm{V}}(v)$. Let the \tblue{marginal distribution vector} $\bm{b}$ be the vector that is the concatenation over $i$ of the vectors associated with the distributions $P_{\bm{V}_i}$.
Finally, let the \tblue{marginal description matrix} $\bm{M}$ be the matrix representation of the linear map corresponding to marginalization on each of the contexts, that is, 
$P_{\bm{V}}\to(P_{\bm{V}_1},\ldots,P_{\bm{V}_n})$ where 
$P_{\bm{V}_i} = \sum_{\bm{V}\setminus\bm{V}_i} P_{\bm{V}}$.
The components of $\bm{M}$ all take the value zero or one. 

In this notation, the marginal satisfiability problem consists of determining whether, for a given vector $\bm{b}$, the following constraints are feasible:
\begin{align}\label{eq:marginalproblemgeneric}
 \exists\, \bm{v} :{\bm{v} \geq \bm{0}} ,\; \bm{M}\bm{v}=\bm{b},
\end{align}
where the component-wise inequality $\bm{v}\geq\bm{0}$ enforces the constraint that $P_{\bm{V}}$ is a nonnegative probability distribution. 
This is clearly a linear program.

In the example of \cref{fig:simplicialcomplex222} with binary variables, $\bm{M}$ is a $48\times 64$ matrix, so that $\bm{M}\bm{v}=\bm{b}$ represents 48 equations and $\bm{v} \geq \bm{0}$ represents 64 inequalities; explicit representations of $\bm{M}$, $\bm{v}$, and $\bm{b}$ for the simpler example of the Cut inflation can be found in \cref{sec:explicitmatrix}.
A single linear program can then assess whether there is a solution in $\bm{v}$ for a given marginal distribution vector $\bm{b}$. If this is not the case, then the marginal satisfiability problem has a negative answer. 

Since linear programming is quite easy, probing specific distributions for compatibility for a given inflated causal structure is computationally inexpensive. For instance, using the Web inflation of the Triangle scenario (\cref{fig:TriFullDouble}), which contains a large number of observed variables, our numerical computations have reproduced the result of~\cite[Theorem~2.16]{fritz2012bell}, that a certain distribution considered therein is incompatible with the Triangle scenario\footnote{This distribution {\em is}, however, quantum-compatible with the Triangle scenario (\cref{sec:classicallity}).}.

In the case of the marginal constraint problem, the vector $\bm{b}$ is not given, but one rather wants to find conditions on $\bm{b}$ that hold if and only if \cref{eq:marginalproblemgeneric} has a solution. As per the above, this is a problem of \tblue{facet enumeration}\footnote{In \cref{sec:projalgorithms}, we provide an overview of techniques for facet enumeration.} for the marginal polytope. Equivalently, it is the problem of \tblue{linear quantifier elimination}\footnote{Linear quantifier elimination has already been used in causal inference for deriving entropic causal compatibility inequalities \cite{chaves2014novel,chaves2014informationinference}. In that task, however, the unknowns being eliminated are entropies on sets of variables of which one or more is latent. By contrast, the unknowns being eliminated above are all probabilities on sets of variables all of which are observed---but on the inflated causal structure rather than the original causal structure.} for the system of \cref{eq:marginalproblemgeneric}: one tries to find a system of linear equations and inequalities in $\bm{b}$ such that some $\bm{b}$ satisfies the system if and only if \cref{eq:marginalproblemgeneric} has a solution. There is a unique minimal system achieving this, and it consists of the constraints of consistency on the intersections of contexts (mentioned above), together with the facet inequalities of the marginal polytope. Taken together, these form a system of linear equations and inequalities that is equivalent to \cref{eq:marginalproblemgeneric}, but does not contain any quantifiers. In our application, the equations expressing consistency on the intersections of contexts are guaranteed to hold automatically, so that only the facet inequalities are of interest to us.

In terms of \cref{eq:marginalproblemgeneric}, a valid inequality for the marginal distribution vector $\bm{b}$---such as a facet inequality of the marginal polytope---can always be expressed as $\bm{y}^T\bm{b}\geq 0$ for some vector $\bm{y}$.
Validity of an inequality $\bm{y}^T\bm{b}\geq 0$ means precisely that $\bm{y}^T\bm{M}\geq\bm{0}$, since the columns of $\bm{M}$ are the vertices of the marginal polytope. The marginal satisfiability problem for a given vector $\bm{b}_0$ has no solution if and only if there is a vector $\bm{y}$ that yields a valid inequality but for which $\bm{y}^T\bm{b}_0 < 0$. Necessity follows by noting that if \cref{eq:marginalproblemgeneric} does have a solution $\bm{v}$ for a given vector $\bm{b}_0$, then the fact that $\bm{y}^T\bm{M}\geq \bm{0}$ and the fact that $\bm{v}\geq \bm{0}$ implies that $\bm{y}^T\bm{b}_0 = \bm{y}^T\bm{M}\bm{v} \geq 0$. Sufficiency follows from Farkas' lemma. Most linear programming tools are capable of returning a \emph{Farkas infeasibility certificate} \cite{infeasibilitycertificates} whenever a linear program has no solution. In our case, if the marginal problem is infeasible for a vector $\bm{b}_0$, then the certificate is a vector $\bm{y}$ that yields a valid inequality but for which $\bm{y}^T\bm{b}_0 < 0$.\footnote{Farkas infeasibility certificates are available in \href[pdfnewwindow]{http://docs.mosek.com/8.0/pythonapi/optimizer-task-gety.html}{\textit{Mosek}}, \href[pdfnewwindow]{https://www.gurobi.com/documentation/6.5/refman/farkasdual.html}{\textit{Gurobi}}, and \href[pdfnewwindow]{http://www-01.ibm.com/support/docview.wss?uid=swg21400058}{\textit{CPLEX}}, as well as by accessing dual variables in \href[pdfnewwindow]{http://cvxr.com/cvx/doc/basics.html\#dual-variables}{\textit{cvxr}}/\href[pdfnewwindow]{http://cvxopt.org/userguide/coneprog.html\#linear-cone-programs}{\textit{cvxopt}}.}.

Upon substituting the factorization relations of \cref{eq:preinjfactor} and deleting copy indices, any valid inequality for the marginal problem
turns into a causal compatibility inequality. This applies both to facet inequalities of the marginal polytope, and to Farkas infeasibility certificates. In the latter case, one obtains an explicit causal compatibility inequality which witnesses the given distribution as incompatible with the given causal structure. In other words, if a given distribution is witnessed as incompatible with a causal structure using the technique we have described, then with little additional numerical effort, one can also obtain a causal compatibility inequality that exhibits the incompatibility. This may have applications to problems where the facet enumeration is computationally intractable.

Summarizing, we have shown how to leverage the marginal satisfiability problem to witness causal incompatibility of particular distributions, and how to leverage the marginal constraint problem to derive causal compatibility inequalities.

\subsection{A List of Causal Compatibility Inequalities for the Triangle scenario}
\label{sec:CCineqs}

As an example of the above method, we have considered the Triangle scenario with binary observed variables and derived all causal compatibility inequalities which follow by means of using ancestral independences in the Spiral inflation (\cref{fig:Tri222}). We found that there are 4884 inequalities corresponding to the facets of the relevant marginal polytope, which results in 4884 polynomial causal compatibility inequalities for the Triangle scenario.

However, most inequalities in this set have turned out to be redundant, where an inequality is considered redundant if there is no distribution that violates this inequality but none of the others. We thus have looked for a subset of inequalities that is \emph{irredundant} (does not contain any redundant inequality) but nevertheless \emph{complete} (defines the same set of distributions as the full set). While a finite system of linear inequalities always has a unique irredundant complete subset, this need not be the case for finite systems of polynomial inequalities; we therefore speak of ``a'' complete irredundant set instead of ``the'' complete irredundant set.

We exploited linear programming techniques to quickly identify a 1433-inequalities complete subset of our original 4884 inequalities; concretely, the copy isomorphisms of \cref{sec:coincidingdetails} yield an additional list of linear equations satisfied by all inflation models, and from every set of inequalities that differ by merely a linear combination of these equations we choose one representative. To further prune away redundant inequalities, we successively employed nonlinear constrained maximization on each inequality's left-hand-side, to determine numerically if it could be violated pursuant to all the other inequalities as constraints. An inequality is found to be redundant if the solution to the constrained maximization does not exceed the inequality's right-hand side. Such an inequality was immediately dropped from the set before testing the next candidate for redundancy\footnote{It is advantageous to group the inequalities into symmetry classes \emph{prior} to pruning away redundant inequalities, so that entire classes of inequalities can be discarded when finding that a single representative is redundant \emph{to the other classes}.}. This post-processing led us to identify 60 irredundant inequalities which defines the same set of satisfying distributions as the original 4884. Of the remaining 60, we recognized 8 as uninteresting positivity inequalities, $\p[ABC]{abc}\geq 0$, so that our irredundant complete system consists of 52 polynomial inequalities.

To present those inequalities in an efficient manner, we further grouped them into four symmetry classes. In \cref{eq:ghzrejectclass,eq:wrejectclass,eq:symclass3,eq:symclass4} we present one representative from each class; the multiplicity of inequalities contained in each symmetry class is marked in parentheses. 
The symmetry group for any causal structure with finite-cardinality observed variables is generated by those permutations of the observed variables which can be extended to automorphisms of the (original) DAG, as well as any permutation among the discrete values assigned to an individual observed variable (i.e., bijections on the sample space of that variable). 
In the case of the Triangle scenario with binary observed variables, the symmetry group therefore has 48 elements, comprised of the 6 permutations of the three observed variables, the three local binary-value relabellings, and all their compositions $(48=6\times 2\times 2\times 2)$. 

We choose to express our inequalities in terms of correlators (where the two possible values of each variables to be $\{-1,+1\}$), rather than in terms of joint probabilities, because such a presentation is more compact:
\begin{align}\label{eq:ghzrejectclass}
0\leq 1& - \expec{A C} - \expec{B C} + \expec{A} \expec{B} &(\times 12)
\\\label{eq:wrejectclass}
\begin{split}0\leq 3& - \expec{A} - \expec{B} - \expec{C} +2 \expec{A B} +2 \expec{A C} +2 \expec{B C} \\&\quad+ \expec{A B C}+ \expec{A} \expec{B} + \expec{A} \expec{C} + \expec{B} \expec{C} \\&\quad- \expec{A} \expec{B C} - \expec{B}
 \expec{A C} - \expec{C} \expec{A B} + \expec{A} \expec{B} \expec{C} \end{split}&(\times 8)
 \\\label{eq:symclass3}\begin{split}
0\leq 4& +2 \expec{C} -2 \expec{A B} -3 \expec{A C} -2 \expec{B C} - \expec{A B C} + \expec{A} \expec{B} \expec{C} 
\\&\quad +2 \expec{A} \expec{B} + \expec{A} \expec{C} - \expec{A} \expec{B C} - \expec{C} \expec{A B} \end{split}&(\times 24)
 \\\label{eq:symclass4}\begin{split}
0\leq 4& -2 \expec{A B} -2 \expec{A C} -2 \expec{B C} - \expec{A B C} 
 \\&\quad +2 \expec{A} \expec{B} +2 \expec{A} \expec{C} +2 \expec{B} \expec{C} 
 \\&\quad - \expec{A} \expec{B C} - \expec{B} \expec{A C} - \expec{C} \expec{A B} \end{split}&(\times 8)
\end{align}
All the inequalities \labelcref{eq:ghzrejectclass,eq:wrejectclass,eq:symclass3,eq:symclass4} have no slack in the sense that they can be saturated by distributions compatible with the Triangle scenario. Indeed, all the inequalities are saturated by the \emph{deterministic} distribution $\expec{A}\eql\expec{B}\eql\expec{C}\eql 1$, except for \cref{eq:wrejectclass} which is saturated by the deterministic distribution $\expec{A}\eql\expec{B}\eql-\expec{C}\eql 1$. Generally speaking, any polynomial inequality generated by a facet of the marginal polytope (i.e. corresponding to some linear inequality in the variables of the inflated causal structure) will be saturated by some deterministic distribution.

A machine-readable and closed-under-symmetries version of this list of inequalities may be found in \cref{sec:38ineqs}.

\subsection{Causal Compatibility Inequalities via Hardy-type Inferences from Logical Tautologies}\label{sec:TSEM}

Enumerating all the facets of the marginal polytope is computationally feasible only for small examples. But our method transforms \emph{every} inequality that bounds the marginal polytope into a causal compatibility inequality. We now present a general approach for deriving a special type of such inequalities very quickly.

In the literature on Bell inequalities, it has been noticed that incompatibility with the Bell causal structure can sometimes be witnessed by merely looking at which joint outcomes have zero probability and which ones have nonzero probability. In other words, instead of considering the \emph{probability} of an outcome, the inconsistency of some marginal distributions can be evident from considering only the \emph{possibility} or \emph{impossibility} of each outcome. This insight is originally due to~\citet{L.Hardy:PRL:1665}, and versions of Bell's theorem that are based on the violation of such \tblue{possibilistic constraints} are known as \tblue{Hardy-type paradoxes}~\cite{Garuccio95,CabelloHardyInequality,Braun08,Mancinska14,LSW}; a partial classification of these can be found in~\cite{Mansfield2012}. The method that we describe in the second half of this section can be used to compute a complete classification of possibilistic constraints for \emph{any} marginal problem.

Possibilistic constraints follow from a consideration of {\em logical relations} that can hold among deterministic assignments to the observed variables. Such logical constraints can also be leveraged to derive probabilistic constraints instead of possibilistic ones, as shown in~\cite{Pitowsky1989,Ghirardi08}. This results in a partial solution to any given (probabilistic) marginal problem. Essentially, we solve a possibilistic marginal problem \cite{Mansfield2012}, then upgrade the possibilistic constraints into probabilistic inequalities, resulting in a set of probabilistic inequalities whose satisfaction is a necessary but insufficient condition for satisfying the corresponding probabilistic marginal problem. We now demonstrate how to systematically derive all inequalities of this type.

We have already provided a simple example of a Hardy-type argument in \cref{example:noWdist}, in the logic used to demonstrate that the family of distributions of \cref{W4,W1,W5} cannot arise as the marginals of a single joint distribution. For our present purposes, it is useful to recast that argument into a new but manifestly equivalent form. First, for the family of distributions in question, we have
\begin{align} 
\begin{split}\label{WWs}
&A_2 \eql 1 \:\implies\: C_1 \eql 0,\\
&B_2 \eql 1 \:\implies\: A_1 \eql 0,\\
&C_2 \eql 1 \:\implies\: B_1 \eql 0,\\
\text{Never} &\quad A_1 \eql 0\,\text{ and }\, B_1 \eql 0\,\text{ and }\, C_1 \eql 0.
\end{split}
\end{align}
From the last constraint one infers that at least one of $A_1$, $B_1$ and $C_1$ must be 1, which from the three other constraints implies that at least one of $A_2$, $B_2$ and $C_2$ must be 0, so that it is not the case that all of $A_2$, $B_2$ and $C_2$ are 1. Thus~\cref{WWs} implies
\begin{align} \label{consequent2}
\text{Never} \quad &A_2 \eql 1\,\text{ and }\, B_2 \eql 1\,\text{ and }\, C_2 \eql 1.
\end{align}
However, the Spiral inflation (\cref{fig:Tri222}) is such that $A_2$, $B_2$, and $C_2$ have no common ancestor and consequently the distribution on the ai-expressible set $\{A_2 B_2 C_2\}$ is the product of the distributions on $A_2$, $B_2$ and $C_2$. Since each of the latter has full support (\cref{W5}), it follows that the distribution on $\{A_2 B_2 C_2\}$ also has full support, which contradicts \cref{consequent2}.

We are here interested in recasting the argument in a manner amenable to systematic generalization. This is done as follows. We work in a marginal scenario where the contexts are $\{A_2 B_2 C_2\}$, $\{A_2 C_1\}$, $\{B_2 A_1\}$, $\{C_2 B_1\}$, and $\{A_1 B_1 C_1\}$, and all variables are binary. The first step of the argument is to note that\footnote{Here, $\land$, $\lor$ and $\lnot$ denote conjunction, disjunction and negation respectively.}
\begin{align}\begin{split}\label{tautology1}
&\lnot [A_2 \eql 1, C_1 \eql 1] \bigwedge \lnot [B_2 \eql 1, A_1 \eql 1] \bigwedge \lnot [C_2 \eql 1, B_1 \eql 1] \bigwedge \lnot [A_1 \eql 0, B_1 \eql 0, C_1 \eql 0]\\
 &\qquad\implies
\lnot [A_2 \eql 1, B_2 \eql 1, C_2 \eql 1].
\end{split}\end{align}
is a logical tautology for binary variables. It can be understood as a constraint on marginal {\em deterministic assignments}, which can be thought of as a logical counterpart of a linear inequality bounding the marginal polytope. The second and final step of the argument notes that the given marginal distributions are such that the antecedent is always true, while the consequent is sometimes false.

To see how to translate this into a constraint on marginal {\em distributions}, we rewrite \cref{tautology1} in its contrapositive form,
\begin{align}\begin{split}\label{tautology2}
&[\mgreen{A_2 \eql 1} , \mgreen{B_2 \eql 1} , \mgreen{C_2 \eql 1}] \implies [\mgreen{A_2 \eql 1}, C_1 \eql 1] \lor [\mgreen{B_2 \eql 1}, A_1 \eql 1] \lor [\mgreen{C_2 \eql 1}, B_1 \eql 1] \lor [A_1 \eql 0, B_1 \eql 0, C_1 \eql 0].
\end{split}\end{align}
Next, we note that if a logical tautology can be expressed as
\begin{align}\label{eq:inference}
 E_0 \implies E_1 \lor \ldots \lor E_n,
\end{align}
then by applying the union bound---which asserts that the probability of at least one of a set of events occurring is no greater than the sum of the probabilities of each event occurring---one obtains
\begin{align}\label{eq:possinference}
\p{E_0}\leq \sum\limits_{j=1}^n{\p{E_j}}.
\end{align}
Applying this to \cref{tautology2} in particular yields
\begin{align}\label{eq:F3rawweak}
P_{A_2 B_2 C_2}\parens{\mgreen{1} \mgreen{1} \mgreen{1}} \leq P_{A_1 B_2}\parens{1 \mgreen{1 }}+P_{ B_1 C_2}\parens{ 1 \mgreen{1}}+P_{A_2 C_1}\parens{\mgreen{1 } 1}+P_{A_1 B_1 C_1}\parens{0 0 0},
\end{align}
which is a constraint on the marginal {\em distributions}.
 
This inequality allows one to demonstrate the incompatibility of the family of distributions of \cref{W4,W1,W5} with the Spiral inflation just as easily as one can with the tautology of \cref{tautology1}. The fact that $A_2$, $B_2$ and $C_2$ are ancestrally independent in the Spiral inflation implies that $P_{A_2 B_2 C_2} = P_{A_2} P_{B_2} P_{C_2}$. 
 It then suffices to note that for the given family of distributions, the probability on the left-hand side of \cref{eq:F3rawweak} is nonzero (which corresponds to the consequent of \cref{tautology1} being sometimes false) while every probability on the right-hand side is zero (which corresponds to the antecedent of \cref{tautology1} being always true). But, of course, the inequality can witness many other incompatibilities in addition to this one.

As another example, consider the marginal problem where the variables are $A$, $B$ and $C$, with each being binary, and the contexts are the pairs $\{AB\}$, $\{AC\}$, and $\{BC\}$. 
The following tautology provides a constraint on marginal deterministic assignments:\footnote{This is a tautology since $E \land F \implies E \land F \land (G \lor \lnot G) = (E \land F \land G) \lor (E\land F \land \lnot G) \implies (E \land G) \lor (F \land \lnot G)$.}
\begin{align}\label{GHZtautology}
 \bracks{\mgreen{A \eql 0}, \mgreen{C \eql 0}} \implies \bracks{\mgreen{A \eql 0}, B \eql 0} \lor \bracks{B \eql 1, \mgreen{C \eql 0}}.
\end{align}
Applying the union bound, one obtains a constraint on marginal distributions,\footnote{This inequality is equivalent to \cref{eq:polymonogamyraw}.}
\[
P_{AC}(\mgreen{0 0}) \leq P_{AB}(\mgreen{0} 0) + P_{BC}(1 \mgreen{0}).
\]

In this section, we seek to determine, for any marginal scenario, the set of \emph{all} inequalities that can be derived in this manner. We do so by \tblue{enumerating} the full set of tautologies of the form of \cref{tautology1,GHZtautology}. This boils down to solving the possibilistic version of the marginal constraint problem.

We now describe the general procedure. As before, we express a constraint on marginal deterministic assignments as a logical implication, having a valuation (assignment of outcomes) on one context as the \tblue{antecedent} and a disjunction over valuations on contexts as the \tblue{consequent}. In the following, we explain how to generate \emph{all} such implications which are tight in the sense that the consequent is minimal, i.e., involves as few terms as possible in the disjunction. 

First, we fix the antecedent by choosing some context and a joint valuation of its variables. In order to generate all constraints on marginal deterministic assignments, one will have to perform this procedure for \emph{every} context as the antecedent and \emph{every} choice of valuation thereon. For the sake of concreteness, we take the above Spiral inflation example with $\bracks{\mgreen{A_2 \eql 1}, \mgreen{B_2 \eql 1}, \mgreen{C_2 \eql 1}}$ as the antecedent. 
Each logical implication we consider is required to have the property that any variable that appears in both the antecedent and the consequent must be given the same value in both. 

To formally determine all valid consequents, it is useful to introduce two hypergraphs associated to the problem. Recall the definition of the incidence matrix of a hypergraph: if vertex $i$ is contained in edge $j$ of the hypergraph, the component in the $i$th row and $j$th column of the matrix is 1; otherwise it is 0. 

The first hypergraph we consider is the one whose incidence matrix is the marginal description matrix $\bm{M}$ for the marginal problem being considered, as introduced near \cref{eq:marginalproblemgeneric}. 
Each vertex in this hypergraph corresponds to a valuation on some particular context. Each hyperedge corresponds to a possible joint valuation of \emph{all} the variables. A hyperedge contains a vertex if the valuation represented by the hyperedge is an extension of the valuation represented by the vertex. For example, the hyperedge $\bracks{A_1 \eql 0, \mgreen{A_2 \eql 1}, B_1 \eql 0, \mgreen{B_2 \eql 1}, C_1 \eql 1, \mgreen{C_2 \eql 1}}$ contains the vertex $\bracks{A_1 \eql 0, \mgreen{B_2 \eql 1}, \mgreen{C_2 \eql 1}}$. In our example following \cref{fig:simplicialcomplex222}, this initial hypergraph has $5\cdot 2^3 = 40$ vertices and $2^6 = 64$ hyperedges. 

The second hypergraph is a subhypergraph of the first one. We delete from the first hypergraph all vertices and hyperedges which contradict the outcomes supposed by the antecedent. In our example, because the vertex $\bracks{\mgreen{A_2 \eql 1}, \mred{B_2 \eql 0}, C_1 \eql 1}$ contradicts the antecedent $\bracks{\mgreen{A_2 \eql 1}, \mgreen{B_2 \eql 1}, \mgreen{C_2 \eql 0}}$, we delete it. We also delete the vertex corresponding to the antecedent itself. In our example, this second hypergraph has $2^3 + 3\cdot 2^1 = 14$ vertices and $2^3 = 8$ hyperedges.

All valid (minimal) consequents are (minimal) \tblue{transversals} of this latter hypergraph. A transversal is a set of vertices which has the property that it intersects every hyperedge in at least one vertex. In order to get implications which are as tight as possible, it is sufficient to enumerate only the minimal transversals. Doing so is a well-studied problem in computer science with various natural reformulations and for which manifold algorithms have been developed~\cite{eiter_dualization_2008}.

In our example, it is not hard to check that the consequent of
\begin{align}\begin{split}\label{eq:F3implicationform}
	\bracks{\mgreen{A_2 \eql 1}, \mgreen{B_2 \eql 1}, \mgreen{C_2 \eql 1}} \quad\Longrightarrow\quad & \bracks{A_1 \eql 1, \mgreen{B_2 \eql 1}, \mgreen{C_2 \eql 1}} \lor \bracks{\mgreen{A_2 \eql 1}, B_1 \eql 1, \mgreen{C_2 \eql 1}} \\
	 \lor\: & \bracks{\mgreen{A_2 \eql 1}, \mgreen{B_2 \eql 1}, C_1 \eql 1} \lor \bracks{A_1 \eql 0, B_1 \eql 0, C_1 \eql 0}
\end{split}\end{align}
is such a minimal transversal: every assignment of values to all variables which extends the assignment on the left-hand side satisfies at least one of the terms on the right, but this ceases to hold as soon as one removes any one term on the right. 

We convert these implications into inequalities in the usual way via the union bound (i.e., replacing ``$\Rightarrow$'' by ``$\leq$'' at the level of probabilities and the disjunction by summation). Thus \cref{eq:F3implicationform} translates into the constraint on marginal distributions
\begin{align}\label{eq:F3rawprobform}
 P_{A_2 B_2 C_2}\parens{\mgreen{1} \mgreen{1} \mgreen{1}} \leq P_{A_1 B_2 C_2}\parens{1 \mgreen{1 1}}+P_{A_2 B_1 C_2}\parens{\mgreen{1} 1 \mgreen{1}}+P_{A_2 B_2 C_1}\parens{\mgreen{1 1} 1} + P_{A_1 B_1 C_1}\parens{0 0 0}.
\end{align}
This inequality constitutes a strengthening of \cref{eq:F3rawweak} that we had used as \cref{eq:FritzF3raw} as the starting point for deriving a causal compatibility inequality for the Triangle scenario, \cref{eq:FritzF3}. 

Inequalities that one derives from hypergraph transversals are generally weaker than those that result from a complete solution of the marginal problem. Nevertheless, many Bell inequalities are of this form, the CHSH inequality among them \cite{Ghirardi08}. So it seems that this method is still sufficiently powerful to generate plenty of interesting inequalities. At the same time, the method is significantly less computationally costly than the full-fledged facet enumeration, even if one does it for every possible antecedent. Interestingly, \emph{all} of the irredundant polynomial inequalities represented in \cref{eq:ghzrejectclass,eq:wrejectclass,eq:symclass3,eq:symclass4} are found to be derivable by means of hypergraph transversals.

In conclusion, facet enumeration is the preferred method for deriving inequalities for the marginal problem when it is computationally tractable. When it is not, enumerating hypergraph transversals presents a good alternative. 


\section{Further Prospects for the Inflation Technique}\label{sec:otherprospects}

\cref{mainlemma} and \cref{maincorollary} state that any causal inference technique on an inflated causal structure $G'$ can be transferred to the original causal structure $G$. In the previous section, we have found that even extremely weak techniques on $G'$---namely the constraints implied by the existence of a joint distribution together with ancestral independences---can lead to significant and new results for causal inference on $G$. In the following three subsections, we consider some additional possibilities for constraints that might be exploited in this way to enhance the power of inflation further.

\subsection{Appealing to \textit{d}-Separation Relations in the Inflated Causal Structure beyond Ancestral Independance}\label{sec:fulldsep}

In \cref{sec:ineqs}, we considered the inflation technique using sets of observed variables on the inflated causal structure that were 
ai-expressible, that is, that can be written as a union of injectable sets that are ancestrally independent.
However, it is standard practice when deriving causal compatibility conditions for a causal structure to make use not just of ancestral independences, but of arbitrary $d$-separation relations among variables, and for this reason we had also introduced the notion of \emph{expressible set} in \cref{sec:ineqs}. We now comment on the utility of general expressible sets for the inflation technique.

In a given causal structure, if sets of variables $\bm{X}$ and $\bm{Y}$ are $d$-separated\footnote{The notion of $d$-separation is treated at length in~\cite{pearl2009causality,studeny2005probabilistic,WoodSpekkens,pusey2014gdag}, so we elect not to review it here.} by $\bm{Z}$, denoted $\bm{X}\perp_d \bm{Y}|\bm{Z}$, then a distribution is compatible with that causal structure only if it satisfies the conditional independence relation $\bm{X}\indep\bm{Y}|\bm{Z}$, that is,
 $\forall{\bm{x} \bm{y} \bm{z}}: P_{\bm{X}\bm{Y}|\bm{Z}}(\bm{x}\bm{y}|\bm{z})=P_{\bm{X}|\bm{Z}}(\bm{x}|\bm{z})P_{\bm{Y}|\bm{Z}}(\bm{y}|\bm{z})$. In terms of unconditioned probabilities, this reads
\begin{equation}\label{CIuncondprobs}
\forall{\bm{x} \bm{y} \bm{z}}: \p[\bm{X}\bm{Y}\bm{Z}]{\bm{x}\bm{y}\bm{z}}\p[\bm{Z}]{\bm{z}}=\p[\bm{X}\bm{Z}]{\bm{x}\bm{z}}\p[\bm{Y}\bm{Z}]{\bm{y}\bm{z}}.
\end{equation}
For $\bm{Z} = \emptyset$, $d$-separation of $\bm{X}$ and $\bm{Y}$ relative to $\bm{Z}$ is simply ancestral independence of $\bm{X}$ and $\bm{Y}$, and we infer factorization of the distribution on $\bm{X}$ and $\bm{Y}$. So it is natural to ask: can the inflation technique make use of arbitrary $d$-separation relations among sets of observed variables? 

The answer is that it can. 
Consider an inflation $G'$ wherein $\bm{X}'$ and $\bm{Y}'$ are $d$-separated by $\bm{Z}'$ and moreover where the sets $\bm{X}'\cup\bm{Z}'$, $\bm{Y}'\cup\bm{Z}'$ and $\bm{Z}'$ are injectable. In such an instance, the distribution on $\cup\bm{X}'\cup\bm{Y}'\cup\bm{Z}'$ can be inferred exclusively from distributions on injectable sets,
\begin{align}\label{eq:dsepsub}
 \p[\bm{X}'\bm{Y}'\bm{Z}']{\bm{x}\bm{y}\bm{z}}=\begin{cases}\frac{\p[\bm{X}'\bm{Z}']{\bm{x}\bm{z}}\p[\bm{Y}'\bm{Z}']{\bm{y}\bm{z}}}{\p[\bm{Z}']{\bm{z}}} & \quad\text{if }\p[\bm{Z}']{\bm{z}}>0, \\
 0 & \quad\text{if }\p[\bm{Z}']{\bm{z}}= 0.
 \end{cases}
\end{align}
It follows that if one includes expressible sets such as $\bm{X}'\cup\bm{Y}'\cup\bm{Z}'$ in the set of contexts defining the marginal problem,
then this simply increases the number of given marginal distributions, and one can solve the marginal problem as before by linear programming techniques. In the case where one derives inequalities on the marginal distributions, these remain linear inequalities, but ones that now include the joint probabilities $\p[\bm{X}'\bm{Y}'\bm{Z}']{\bm{x}\bm{y}\bm{z}}$. Upon substituting conditional independence relations such as \cref{eq:dsepsub} in order to derive causal compatibility inequalities, one still ends up with polynomial inequalities, as in the case of using ai-expressible sets only, after multiplying by the denominators.
As before, these causal compatibility inequalities for the inflation are translated into polynomial causal compatibility inequalities for the original causal structure per \cref{maincorollary}. 

In~\cref{sec:interestingproof}, we provide a concrete example of how a $d$-separation relation distinct from ancestral independence can be useful both for the problem of witnessing the incompatibility of a specific distribution with a causal structure and for the problem of deriving causal compatibility inequalities.

Per \cref{def:expressible}, the notion of expressibility is recursive: The set $\bm{X}'\cup\bm{Y}'\cup\bm{Z}'$ is expressible if $\bm{X}'\aindep\bm{Y}'|\bm{Z}'$ and $\bm{X}'\cup\bm{Z}'$, $\bm{Y}'\cup\bm{Z}'$ and $\bm{Z}'$ are all expressible. 
In general, one can obtain stronger causal compatibility inequalities, and stronger witnessing power when testing the compatibility of a specific distribution, by determining the maximal expressible sets instead of restricting attention to the maximal ai-expressible sets. 

\subsection{Imposing Symmetries from Copy-Index-Equivalent Subgraphs of the Inflated Causal Structure}\label{sec:copyindexequivalence}

By the definition of an inflation model (\cref{def:inflat}), if two variables in the inflated causal structure $G'$ are copy-index-equivalent, $A_i \sim A_j$, then each depends on its parents in the same fashion as $A$ depends on its parents in the original causal structure $G$, meaning that $\pfunc{A_i| \Pa[G']{A_i}}=\pfunc{A|\Pa[G]{A}}$ and $\pfunc{A_j| \Pa[G']{A_j}}=\pfunc{A|\Pa[G]{A}}$. Thus by transitivity, also $A_i$ and $A_j$ have the same dependence on their parents,
\begin{align}\label{eq:copyindexequivalence}
\pfunc{A_i| \Pa[G']{A_i}}=\pfunc{A_j|\Pa[G']{A_j}}.
 \end{align}
The ancestral subgraphs of $A_i$ and $A_j$ are also equivalent, and consequently equations like \cref{eq:copyindexequivalence} also hold for all of the ancestors of $A_i$ and $A_j$. We conclude that the marginal distributions of $A_i$ and $A_j$ must also be equal, $\pfunc{A_i}=\pfunc{A_j}$.
More generally, it may be possible to find pairs of contexts in $G'$ of any size such that constraints of the form of~\cref{eq:copyindexequivalence} imply that the marginal distributions on these two contexts must be equal. 

For example, consider the pair of contexts $\brackets{A_1 A_2 B_1}$ and $\brackets{A_1 A_2 B_2}$ for the marginal scenario defined by the Spiral inflation (\cref{fig:Tri222}). Neither of these two contexts is an injectable set. Nonetheless, because of~\cref{eq:copyindexequivalence}, we can conclude that their marginal distributions coincide in any inflation model,
\begin{align}\label{CIEconsequence}
\forall{a a' b}:\;\p[A_1 A_2 B_1]{a a' b} = \p[A_1 A_2 B_2]{a a' b}.
\end{align}
We can similarly conclude that in the inflation model these marginal distributions satisfy $P_{A_1 A_2 B_1}=P_{A_2 A_1 B_2}$---where now the order of $A_1$ and $A_2$ is opposite on the two sides of the equation---or equivalently, 
\begin{align}\label{CIEconsequence2}
\forall{a a' b}:\;\p[A_1 A_2 B_1]{a a' b} = \p[A_1 A_2 B_2]{a' a b}.
\end{align}
These constraints entail that $P_{A_1 A_2 B_2}$ must be symmetric under exchange of $A_1$ and $A_2$, which in itself is another equation of the type above.

Parameters such as $\p[A_1 A_2 B_1]{a_1 a_2 b}$, $\p[A_1 A_2 B_2]{a_1 a_2 b}$ and $\p[A_1 A_2]{a_1 a_2}$ can each be expressed as sums of the unknowns $\p[A_1 A_2 B_1 B_2 C_1 C_2]{a_1 a_2 b_1 b_2 c_1 c_2}$, so that each equation like \cref{CIEconsequence,CIEconsequence2} can be added to the system of equations and inequalities that constitute the starting point of the satisfiability problem (if one is seeking to test the compatibility of a given distribution with the inflated causal structure) or the quantifier elimination problem (if one is seeking to derive causal compatibility inequalities for the inflated causal structure). If any such additional relation yields stronger constraints at the level of the inflated causal structure, then one may obtain stronger constraints at the level of the original causal structure.

The general problem of finding pairs of contexts in the inflated causal structure for which relations of copy-index-equivalence imply equality of the marginal distributions, and the conditions under which such equalities may yield tighter inequalities, are discussed in more detail in \cref{sec:coincidingdetails}.

\subsection{Incorporating Nonlinear Constraints} \label{sec:nonlinearsatproblem}

In deriving causal compatibility inequalities and in witnessing causal incompatibility of a specific distribution, we restricted ourselves to starting from the marginal problem where the contexts are the (ai-)expressible sets, and wherein one imposes only linear constraints derived from the marginal problem. In this approach, facts about the causal structure only get incorporated in the construction of the marginal distribution on each expressible set, and the quantifier elimination step of the computational algorithm is linear. However, one can also incorporate facts about the causal structure as constraints on the quantifier elimination problem at the cost making the quantifier elimination problem nonlinear. 

Take the Spiral inflation of the Triangle scenario as an example. There is an ancestral independence therein that we did not use in our previous application of the inflation technique, namely, $A_1 A_2 \perp_d C_2$. It was not used because $\{A_1 A_2 C_2\}$ is not an expressible set. Nonetheless, we can incorporate this ancestral independence as an additional constraint in the quantifier elimination problem, namely,
\begin{align}\label{nonlinearequality2}
\forall{a_1 a_2 c_2}: \p[A_1 A_2 C_2]{a_1 a_2 c_2}=\p[A_1 A_2 ]{a_1 a_2 }\p[ C_2]{c_2}.
\end{align}
Recall that in the marginal problem, one seeks to eliminate the unknowns $\p[A_1 A_2 B_1 B_2 C_1 C_2]{a_1 a_2 b_1 b_2 c_1 c_2}$ from a set of linear equalities that define the marginal distributions, such as for instance
\begin{align}\label{lineareqex}
\p[A_2 B_2]{a_2 b_2} = \sum\nolimits_{a_1 b_1 c_1 c_2}\p[A_1 A_2 B_1 B_2 C_1 C_2]{a_1 a_2 b_1 b_2 c_1 c_2},
\end{align}
together with linear inequalities expressing the nonnegativity of the $\p[A_1 A_2 B_1 B_2 C_1 C_2]{a_1 a_2 b_1 b_2 c_1 c_2}$. 
We can incorporate the ancestral independence $A_1 A_2 \perp_d C_2$ as an additional constraint by defining a variant of the marginal problem wherein the set of linear equations such as \cref{lineareqex} is supplemented by the nonlinear \cref{nonlinearequality2} when one replaces every term therein with the corresponding sum over the $\p[A_1 A_2 B_1 B_2 C_1 C_2]{a_1 a_2 b_1 b_2 c_1 c_2}$. 
We can then proceed with quantifier elimination as we did before, eliminating the unknowns $\p[A_1 A_2 B_1 B_2 C_1 C_2]{a_1 a_2 b_1 b_2 c_1 c_2}$ from the system of equations in order to obtain constraints that involve only joint probabilities on expressible sets.

One can incorporate {\em any} $d$-separation relation in the inflated causal structure in this manner. For instance, if ${\bm X} \perp_d {\bm Y}| {\bm Z}$, then this implies the conditional independence relation of \cref{CIuncondprobs}, which can be incorporated as an additional nonlinear equality constraint when eliminating the unknowns $\p[A_1 A_2 B_1 B_2 C_1 C_2]{a_1 a_2 b_1 b_2 c_1 c_2}$. For instance, in the Spiral inflation of the Triangle scenario~(\cref{fig:Tri222}), the $d$-separation relation $A_1\aindep C_2|A_2 B_2$ implies the conditional independence relation
\begin{align}\label{nonlinearequality}
\forall{a_1 a_2 b_2 c_2}: \p[A_1 A_2 B_2 C_2]{a_1 a_2 b_2 c_2}\p[A_2 B_2]{a_2 b_2}=\p[A_1 A_2 B_2]{a_1 a_2 b_2}\p[A_2 B_2 C_2]{a_2 b_2 c_2}.
\end{align}
However, because $\{ A_1 A_2 B_2 C_2\}$ is not an expressible set, the method of \cref{sec:ineqs} does not take this $d$-separation relation into account. However, it can be incorporated if \cref{nonlinearequality} is included as an additional {\em nonlinear} constraint in the quantifier elimination problem. 

On the one hand, many modern computer algebra systems do have functions capable of tackling nonlinear quantifier elimination symbolically\footnote{For example \textit{Mathematica$^{_{\textit{\tiny\texttrademark}}}$}'s \href[pdfnewwindow]{http://reference.wolfram.com/language/ref/Resolve.html}{\texttt{Resolve}} command, \textit{Redlog}'s \href[pdfnewwindow]{http://www.redlog.eu/documentation/reals/rlqe.php}{\texttt{rlposqe}}, or \textit{Maple$^{_{\textit{\tiny\texttrademark}}}$}'s \href[pdfnewwindow]{http://maplesoft.com/support/help/Maple/view.aspx?path=RegularChains/SemiAlgebraicSetTools/RepresentingQuantifierFreeFormula}{\texttt{RepresentingQuantifierFreeFormula}}.}. 
Currently, however, it is generally not practical to perform nonlinear quantifier elimination on large polynomial systems with many unknowns to be eliminated. It may help to exploit results on the concrete algebraic-geometric structure of these particular systems~\cite{Garcia2}. 

If one is seeking merely to assess the compatibility of a {\em given} distribution with the causal structure, then one can avoid the quantifier elimination problem and simply try and solve an existence problem: after substituting the values that the given distribution prescribes for the outcomes on ai-expressible sets into the polynomial system in terms of the unknown global joint probabilities, one must only determine whether that system has a solution. Most computer algebra systems can resolve such \emph{satisfiability} questions quite easily\footnote{For example \textit{Mathematica$^{_{\textit{\tiny\texttrademark}}}$}'s \href[pdfnewwindow]{http://reference.wolfram.com/language/Experimental/ref/ExistsRealQ.html}{\texttt{Reduce\`{}ExistsRealQ}} function. Specialized satisfiability software such as SMT-LIB's \href[pdfnewwindow]{http://smtlib.cs.uiowa.edu/solvers.shtml}{\texttt{check-sat}} \cite{BarFT-SMTLIB} are particularly apt for this purpose.}.

It is also possible to use a mixed strategy of linear and nonlinear quantifier elimination, such as \citet{ChavesPolynomial} advocates. The explicit results of~\cite{ChavesPolynomial} are directly causal implications of the \emph{original} causal structure, achieved by applying a mixed quantifier elimination strategy. Perhaps further causal compatibility inequalities will be derivable by applying such a mixed quantifier elimination strategy to the inflated causal structure.

\begin{table}[ht]
\centering
\caption{
A comparison of different approaches for deriving constraints on compatibility at the level of the inflated causal structure, which then translate into constraints on compatibility at the level of the original causal structure.
}
\begin{tabular}{ |c|p{5.6cm}lp{4.8cm}|c| } 
\toprule
\parbox[c][][c]{5cm}{\justifying\noindent Type of constraints imposed on the joint distribution over all observed variables in the inflated graph} & \parbox[t][][t]{5cm}{\centering General problem} & $\to$ & Standard algorithm(s) & Difficulty \\
\midrule
\midrule
	\multirow{ 2}{*}{\parbox[b][][c]{4.7cm}{\justifying\noindent Marginal compatibility, i.e. the joint distribution should recover all expressible (or ai-expressible) distributions as marginals (\cref{sec:fulldsep}).}} & \raggedleft\parbox[c][1cm][c]{5.5cm}{\justifying\noindent Facet enumeration of marginal polytope (\cref{step:marginalsproblem})} & $\to$ & see \cref{sec:projalgorithms} & Hard \\\cline{2-5}

	& \raggedleft\parbox[c][1.5cm][c]{5.5cm}{\justifying\noindent Finding possibilistic constraints by identifying hypergraph transversals (\cref{sec:TSEM})} & $\to$ & see~\citet{eiter_dualization_2008} & Very easy \\

\hline
\parbox[c][3cm][c]{4.7cm}{\justifying\noindent Whenever two equivalent-up-to-copy-indices sets of observed variables have ancestral subgraphs which are also equivalent-up-to-copy-indices, then the marginals over said variables must coincide (\cref{sec:copyindexequivalence}). } & \raggedleft\parbox[c][1cm][c]{5.5cm}{\justifying\noindent Marginal problem with additional equality constraints, therefore linear quantifier elimination \linebreak (\cref{sec:coincidingdetails})} & $\to$ & \raggedright Fourier-Motzkin elimination~\cite{fordan1999projection,DantzigEaves,Bastrakov2015,BalasProjectionCone,Jones2008}, \linebreak Equality set projection \cite{JonesThesis2005,jones2004equality} & Hard \\\hline
\parbox[c][2cm][c]{4.7cm}{\justifying\noindent The joint distribution should satisfy all conditional independence relations implied by $d$-separation conditions on the observed variables (\cref{sec:nonlinearsatproblem}). } & \raggedleft\parbox[c][1cm][c]{5.5cm}{\justifying\noindent Real (nonlinear) quantifier elimination} & $\to$ & \raggedright Cylindrical algebraic\linebreak decomposition~\cite{ChavesPolynomial} & Very hard \\

\bottomrule
\end{tabular}
\label{table:difficulties}
\end{table}

\subsection{Implications of the Inflation Technique for Quantum Physics and Generalized Probabilistic Theories}
\label{sec:classicallity}

{This specialized subsection is intended specifically for those readers already somewhat proficient with fundamental concepts in quantum theory. Non-physicists may wish to skip ahead to the \hyperref[sec:conclusions]{conclusions}.}

Recent work has sought to explore quantum generalizations of the notion of a causal model, termed {\em quantum causal models} \cite{leifer2013conditionalstates,pusey2014gdag,BeyondBellII,Chaves2015infoquantum,ried2015quantum,costa2016quantum,allen2016quantum}. 
We here use 
the quantum generalization that is implied by the approach of~\cite{pusey2014gdag} and closely related to the one of~\cite{BeyondBellII}.

The causal structures are still represented by DAGs, supplemented with a distinction between observed and latent nodes. However, the latent nodes are now associated with families of quantum channels and the observed nodes are now associated with families of quantum measurements. Observed nodes are still labelled by random variables, which represent the outcome of the associated measurement. 
One also makes a distinction between edges in the DAG that carry classical information and edges that carry quantum information.\footnote{In many
cases this notion of quantum causal model can also be formulated in a manner that does not require a distinction between two kinds of edges~\cite{BeyondBellII}.} 
 An observed node can have incoming edges of either type: 
those that come from other observed nodes carry classical information, while those that come from latent nodes carry quantum information. Each quantum measurement in the set that is associated to an observed node acts on the collection of quantum systems received by this node (i.e., on the tensor product of the Hilbert spaces associated to the incoming edges). The classical variables that are received by the node act collectively as a control variable, determining which measurement in the set is implemented. Finally, the random variable that is associated to the node encodes the outcome of the measurement. All of the outgoing edges of an observed node are classical and simply broadcast the outcome of the measurement to the children nodes. 
A latent node can also have incoming edges that carry classical variables as well as incoming edges that carry quantum systems. Each quantum channel in the set that is associated to a latent node takes the collection of quantum systems associated to the incoming edges as its quantum input and the collection of quantum systems associated to the outgoing edges as its quantum output (the input and output spaces need not have the same dimension). The classical variables that are received by the node act collectively as a control variable, determining which channel in the set is implemented.

A quantum causal model is still ultimately in the service of explaining joint distributions of observed classical variables. The joint distribution of these variables is the only experimental data with which one can confront a given quantum causal model. The basic problem of causal inference for quantum causal models, therefore, concerns the compatibility of a joint distribution of observed classical variables with a given causal structure, where the model supplementing the causal structure is allowed to be quantum, in the sense defined above. If this happens, we say that the distribution is {\em quantumly compatible} with the causal structure.

One motivation for studying quantum causal models is that they offer a new perspective on an old problem in the field of quantum foundations: that of establishing precisely which of the principles of classical physics must be abandoned in quantum physics. It was noticed by~\citet{fritz2012bell} and~\citet{WoodSpekkens} that Bell's theorem~\cite{bell1966lhvm} states that there are distributions on observed nodes of the Bell causal structure that are quantumly compatible but not classically compatible with it. Moreover, it was shown in~\cite{WoodSpekkens} that these distributions cannot be explained by \emph{any} causal structure while complying with the additional principle that conditional independences should not be fine-tuned, i.e.,~while demanding that any observed conditional independence should be accounted for by a $d$-separation relation in the DAG. These results suggest that quantum theory is perhaps best understood as revising our notions of the nature of unobserved entities, and of how one represents causal dependences thereon and incomplete knowledge thereof, while 
nonetheless {\em preserving} the spirit of causality and the principle of no fine-tuning~\cite{leifer2013conditionalstates,Spekkens2015paradigm,henson2011ontic}.

Another motivation for studying quantum causal models is a practical one. Violations of Bell inequalities have been shown to constitute resources for information processing \cite{NoSigPolytope,scarani2012device,BancalDIApproach}. Hence it seems plausible that if one can find more causal structures for which there exist distributions that are quantumly compatible but not classically so, then this quantum-classical separation may also find applications to information processing. 
For example, it has been shown that in addition to the Bell scenario, such a quantum-classical separation also exists 
in the bilocality scenario \cite{BilocalCorrelations} and the Triangle scenario~\cite{fritz2012bell}, and it is likely that many more causal structures with this property will be found, some with potential applicability to information processing.

So for both foundational and practical reasons, there is good reason to find examples of causal structures that exhibit a quantum-classical separation.
However, this is by no means an easy task.
The set of distributions that are quantumly compatible with a given causal structure is quite hard to separate from the set of distributions that are classically compatible~\cite{pusey2014gdag,fritz2012bell}. For example, both the classical and quantum sets respect the conditional independence relations among observed nodes that are implied by the $d$-separation relations of the DAG~\cite{pusey2014gdag}, and entropic inequalities are only of very limited use~\cite{chaves2012entropic,fritz2012bell}. We hope that the inflation technique will provide better tools for finding such separations.

In addition to quantum generalizations of causal models, one can define generalizations for other operational theories that are neither classical nor quantum~\cite{pusey2014gdag,BeyondBellII}.
Such generalizations are formalized using the framework of {\em generalized probabilistic theories} (GPTs) \cite{Barnum2012GPT,Janotta2014GPT}, which is sufficiently general to describe any operational theory that makes statistical predictions about the outcomes of experiments and passes some basic sanity checks. Some constraints on compatibility can be proven to be \emph{theory-independent} in that they apply not only to classical and quantum causal models, but to any kind of generalized probabilistic causal model~\cite{pusey2014gdag}. For example, the classically-valid conditional independence relations that hold among observed variables in a causal structure are all also valid in the GPT framework.
Another example is the entropic monogamy inequality \cref{eq:monogomyofcorrelations}, which was proven in \cite{pusey2014gdag} to be GPT valid as well. These kinds of constraints are of interest because they clarify what any conceivable theory of physics must satisfy on a given causal structure. 

The essential element in deriving such constraints is to only make reference to the observed nodes, as done in~\cite{pusey2014gdag}. In fact, we now understand the argument of~\cite{pusey2014gdag} to be an instance of the inflation technique. Nonetheless, we have seen that the inflation technique often yields inequalities that hold for the {\em classical} notion of compatibility, while having quantum and GPT violations, such as the Bell inequalities of \cref{example:noPR} of \cref{subsec:witnessingincompat} and \cref{sec:Bellscenarios}. In fact, inflation can be used to derive inequalities with quantum violations for the Triangle scenario as well \cite{TC2016trianglequantum}.

So what distinguishes applications of the inflation technique that yield inequalities for GPT compatibility from those that yield inequalities for classical compatibility? The distinction rests on a structural feature of the inflation:

\begin{definition}
In $G'\in\inflations{G}$, an \tblue{inflationary fan-out} is a latent node that has two or more children that are copy-index equivalent. 
\end{definition}

 The Web and Spiral inflations of the Triangle scenario, depicted in \cref{fig:TriFullDouble} and \cref{fig:Tri222} respectively,
contain one or more inflationary fan-outs, as does the inflation of the Bell causal structure that is depicted in \cref{fig:BellDagCopy1}. On the other hand, the simplest inflation of the Triangle scenario that we consider in this article, the Cut inflation depicted in \cref{fig:simplestinflation}, does not contain any inflationary fan-outs.

Our main observation is that if one uses an inflation without an inflationary fan-out, then the resulting inequalities derived by the inflation technique will all be GPT valid. In other words, one can only hope to detect a GPT-classical separation if one uses an inflation that \emph{has} at least one inflationary fan-out. We now explain the intuition for why this is the case. 
In the classical causal model obtained by inflation, the copy-index-equivalent children of an inflationary fan-out causally depend on their parent node in precisely the same way as their counterparts in the original causal structure do. For example, this dependence may be such that these two children are exact \emph{copies} of the inflationary fan-out node. So when one tries to write down a GPT version of our notion of inflation, one quickly runs into trouble: in quantum theory, the {\em no-broadcasting theorem} shows that such duplication is impossible in a strong sense~\cite{NoCloningQuantum1996}, and an analogous theorem holds for GPTs~\cite{NoCloningGeneral2006}. This is why in the presence of an inflationary fan-out, one cannot expect our inequalities to hold in the quantum or GPT case, which is consistent with the fact that they often do have quantum and GPT violations.

On the other hand, for any inflation that does not contain an inflationary fan-out, the notion of an inflation model generalizes to all GPTs; we sketch how this works for the case of quantum theory. By the definition of inflation, any node in $G'$ has a set of incoming edges equivalent to its counterpart in $G$, while by the assumption that the inflated causal structure does not contain any inflationary fan-outs, any node in $G'$ has either the equivalent set of outgoing edges as its counterpart in $G$, or some pruning of this set. In the former case, 
one associates to this node the same set of quantum channels (if it is a latent node) or measurements (if it is an observed node) that are associated to its counterpart. In the latter case, one simply applies the partial trace operation on the pruned edges (if it is a latent node) or a marginalization on the pruned edges (if it is an observed node). 
That these prescriptions make sense depends crucially on the assumption that $G'$ is an inflation of $G$, so that the ancestries of any node in $G'$ mirrors the ancestry of the corresponding node in $G$ perfectly. Hence for inflations $G'$ without inflationary fan-outs, we have quantum analogues of \cref{mainlemma} and \cref{maincorollary}. The problem of quantum causal inference on $G$ therefore translates into the corresponding problem on $G'$, and any constraint that we can derive on $G'$ translates back to $G$. In particular, our \cref{example:noGHZ,example:polytriangle,example:entropic} also hold for quantum causal inference: perfect correlation is not only classically incompatible with the Triangle scenario, it is quantumly incompatible as well, and the inequalities \cref{eq:polymonogamy,eq:monogomyofcorrelations} have no quantum violations. 

All of these assertions about inflations that do not contain any inflationary fan-outs apply not only to quantum causal models, but to GPT causal models as well, using the definition of the latter provided in \cite{pusey2014gdag}. 

In the remainder of this section, we discuss the relation between the quantum and the GPT case. Since quantum theory is a particular generalized probabilistic theory, quantum compatibility trivially implies GPT compatibility. Through the work of Tsirelson \cite{Tsirelson1980} and \citet{PROriginal}, it is known that the converse is not true: the Bell scenario manifests a GPT-quantum separation. The identification of distributions witnessing this difference, and the derivation of quantum causal compatibility inequalities with GPT violations, has been a focus of much foundational research in recent years. Traditionally, the foundational question has always been: why does quantum theory predict correlations that are {\em stronger} than one would expect classically? But now there is a new question being asked: why does quantum theory only allow correlations that are {\em weaker} than those predicted by other GPTs? There has been some interesting progress in identifying physical principles that can pick out the precise correlations that are exhibited by quantum theory \cite{PopescuReviewNatureComm,ScaraniML,Rohrlich2014,InfoCausArXiv,LONatureComm,LOExploring,EPNBody,barnum2014interference,AlmostQuantum}. Further opportunities for identifying such principles would be useful. This motivates the problem of classifying causal structures into those which have a quantum-classical separation, those which have a GPT-quantum separation and those which have both. Similarly, one can try to classify causal compatibility \emph{inequalities} into those which are GPT-valid, those which are GPT-violable but quantumly valid, and those which are quantum-violable but classically valid. 

The problem of deriving inequalities that are GPT-violable but quantumly valid is particularly interesting. 
\citet{Chaves2015infoquantum} have derived some entropic inequalities that can do so. At present, however, we do not see a way of applying the inflation technique to this problem. 

\section{Conclusions}\label{sec:conclusions}

We have described the \emph{inflation technique} for causal inference in the presence of latent variables.

We have shown how many existing techniques for witnessing incompatibility and for deriving causal compatibility inequalities can be enhanced by the inflation technique, independently of whether these pertain to entropic quantities, correlators or probabilities. The computational difficulty of achieving this enhancement depends on the seed technique. We summarize the computational difficulty of the approaches that we have considered in~\cref{table:difficulties}. A similar table could be drawn for the satisfiability problem, with relative difficulties preserved, but where none of the variants of the problem are computationally hard.

Especially in \cref{sec:ineqs}, we have focused on one particular seed technique: the existence of a joint distribution on all observed nodes together with ancestral independences. We have shown how a complete or partial solution of the marginal problem for the ai-expressible sets of the inflated causal structure can be leveraged to obtain criteria for causal compatibility, both at the level of witnessing particular distributions as incompatible and deriving causal compatibility inequalities. These inequalities are polynomial in the joint probabilities of the observed variables. They are capable of exhibiting the incompatibility of the W-type distribution with the Triangle scenario, while entropic techniques cannot, so that our polynomial inequalities are stronger than entropic inequalities in at least some cases (see \cref{example:noWdist} of \cref{subsec:witnessingincompat}). As far as we can tell, our inequalities are not related to the nonlinear causal compatibility inequalities which have been derived specifically to constrain classical networks \cite{TavakoliStarNetworks,RossetNetworks,TavakoliNoncyclicNetworks}, nor to the nonlinear inequalities which account for interventions to a given causal structure \cite{kang2007polynomialconstraints,steeg2011relaxation}.

We have shown that \emph{some} of the causal compatibility inequalities we derive by the inflation technique are necessary conditions not only for compatibility with a classical causal model, but also for compatibility with a causal model in {\em any} generalized probabilistic theory, which includes quantum causal models as a special case. It would be enlightening to understand the general extent to which our polynomial inequalities for a given causal structure can be violated by a distribution arising in a quantum causal model. A variety of techniques exist for estimating the amount by which a Bell inequality \cite{NPA2008Long,I3322NPA1} is violated in quantum theory, but even finding a quantum violation of one of our \emph{polynomial} inequalities for causal structures other than the Bell scenario presents a new task for which we currently lack a systematic approach. Nevertheless, we know that there exists a difference between classical and quantum also beyond Bell scenarios~\cite[Theorem~2.16]{fritz2012bell}, and we hope that our polynomial inequalities will perform better in probing this separation than entropic inequalities do~\cite{pusey2014gdag,Chaves2015infoquantum}. 

We have shown that the inflation technique can also be used to derive causal compatibility inequalities that hold for arbitrary generalized probabilistic theories, a significant generalization of the results of \cite{pusey2014gdag}. Such inequalities are also very significant insofar as they constitute a restriction on the sorts of statistical correlations that could arise in a given causal scenario even if quantum theory is superseded by some alternative physical theory. As long as the successor theory falls within the framework of generalized probabilistic theories, the restriction will hold. 

Finally, an interesting question is whether it might be possible to modify our methods somehow to derive causal compatibility inequalities that hold for quantum theory and are violated by some GPT. Since the initial drafting of this manuscript, such a modification has been identified~\cite{Wolfe2018quantuminflation}.

A single causal structure has an unlimited number of potential inflations. Selecting a good inflation from which strong polynomial inequalities can be derived is an interesting challenge. To this end, it would be desirable to understand how particular features of the original causal structure are exposed when different nodes in the causal structure are duplicated. By isolating which features are exposed in each inflation, we could conceivably quantify the utility for causal inference of each inflation. In so doing, we might find that inflations beyond a certain level of variable duplication need not be considered. The multiplicity beyond which further inflation is irrelevant may be related to the maximum degree of those polynomials which tightly characterize a causal scenario. Presently, however, it is not clear how to upper bound either number, 
though a finite upper bound on the maximum degree of the polynomials follows from the semialgebraicity of the compatible distributions, per Ref.~\cite{rosset2016finite}.

Causal compatibility inequalities are, by definition, merely {\em necessary} conditions for compatibility. Depending on what kind of causal inference methods one uses at the level of an inflated causal structure $G'$, one may or may not obtain sufficient conditions. An interesting question is: if one only uses the existence of a joint distribution and ancestral independences at the level of $G'$, then does one obtain sufficient conditions as $G'$ varies? In other words: if a given distribution is such that for every inflation $G'$, the marginal problem of \cref{sec:ineqs} is solvable, then is the distribution compatible with the original causal structure? This occurs for the Bell scenario, where it is enough to consider only one particular inflation (\cref{sec:Bellscenarios}). 

Significantly, since the initial drafting of this manuscript, Ref.~\cite{Navascues2017completelysolves} has proven that the inflation technique indeed gives necessary and sufficient conditions for causal compatibility: any incompatible distribution is witnessed as incompatible by a suitably large inflation. Ref.~\cite{Navascues2017completelysolves} also provides other interesting results, such as a prescription for how to generate all relevant inflations, as well as an explicit demonstration of the inflation technique as applied to Pearl's instrumental scenario. 


\begin{acknowledgments}
E.W.~would like to thank Rafael Chaves, Miguel Navascues, and T.C. Fraser for suggestions which have improved this manuscript. T.F.~would like to thank Nihat Ay and Guido Mont\'ufar for discussion and references. Part of this research was conducted while T.F.~was with the Max Planck Institute for Mathematics in the Sciences. This project/publication was made possible in part through the support of grant \href[pdfnewwindow]{https://www.templeton.org/grant/quantum-causal-structures}{\#69609} from the John Templeton Foundation. The opinions expressed in this publication are those of the authors and do not necessarily reflect the views of the John Templeton Foundation. This research was supported in part by Perimeter Institute for Theoretical Physics. Research at Perimeter Institute is supported in part by the Government of Canada through the Department of Innovation, Science and Economic Development Canada and by the Province of Ontario through the Ministry of Economic Development, Job Creation and Trade. 
\end{acknowledgments}

\appendix
\numberwithin{equation}{section}
\let\stdsection\section
\renewcommand{\section}{\clearpage\stdsection} 

\section{Algorithms for Solving the Marginal Constraint Problem}\label{sec:projalgorithms}

By solving the marginal constraint problem, what we mean is to determine all the facets of the marginal polytope for a given marginal scenario. Since the vertices of this polytope are precisely the deterministic assignments of values to all variables, which are easy to enumerate, solving the marginal constraint problem is an instance of a \tblue{facet enumeration problem}: given the vertices of a convex polytope, determine its facets. This is a well-studied problem in combinatorial optimization for which a variety of algorithms are available~\cite{avis_convexhull_2015}. 

A generic facet enumeration problem takes a matrix $\bm{V}\in\mathbb{R}^{d\times n}$, which lists the vertices as its columns, and asks for an inequality description of the set of vectors $\bm{b}\in\mathbb{R}^d$ that can be written as a convex combination of the vertices using weights $\bm{x}\in\mathbb{R}^n$ that are nonnegative and normalized,
\begin{align}
	\label{projsimplex}
	\left\{\: \bm{b}\in\mathbb{R}^d \quad\bigg|\quad \exists \bm{x}\in\mathbb{R}^n:\; \bm{b} = \bm{V}\bm{x} ,\;\; \bm{x}\geq \bm{0},\;\; {{\sum_i}{x_i}}=1 \:\right\}.
\end{align}
To solve the marginal problem one uses the marginal description matrix introduced in \cref{step:marginalsproblem} as the input to the facet enumeration algorithm, i.e. $\bm{V}=\bm{M}$, see \cref{eq:marginalproblemgeneric}.

The oldest-known method for facet enumeration relies on \tblue{linear quantifier elimination} in the form of Fourier-Motzkin (FM) elimination~\cite{fordan1999projection,DantzigEaves}. This refers to the fact that one starts with the system $\bm{b}= \bm{V}\bm{x}$, $\bm{x}\geq \bm{0}$ and ${{\sum_i}{x_i}}=1$, which is the half-space representation of a convex polytope (a simplex), and then one needs to project onto $\bm{b}$-space by \emph{eliminating} the variables $\bm{x}$ to which the existential \emph{quantifier} $\exists \bm{x}$ refers. The Fourier-Motzkin algorithm is a particular method for performing this quantifier elimination one variable at a time; when applied to~\cref{projsimplex}, it is equivalent to the \emph{double description method}~\cite{DantzigEaves,Fukuda1996}. Linear quantifier elimination routines are available in many software tools\footnote{For example \textit{MATLAB$^{_{\textit{\tiny\texttrademark}}}$}'s \href[pdfnewwindow]{http://people.ee.ethz.ch/~mpt/2/docs/refguide/mpt/@polytope/projection.html}{\texttt{MPT2}}/\href[pdfnewwindow]{http://ellipsoids.googlecode.com/svn-history/r2740/branches/issue_119_vrozova/tbxmanager/toolboxes/mpt/3.0.14/all/mpt3-3_0_14/mpt/modules/geometry/sets/@Polyhedron/projection.m}{\texttt{MPT3}}, \textit{Maxima}'s \href[pdfnewwindow]{http://maxima.sourceforge.net/docs/manual/de/maxima_75.html}{\texttt{fourier\_elim}}, \textit{lrs}'s \href[pdfnewwindow]{http://cgm.cs.mcgill.ca/~avis/C/lrslib/USERGUIDE.html\#fourier}{\texttt{fourier}}, or \textit{Maple$^{_{\textit{\tiny\texttrademark}}}$}'s (v17+) \href[pdfnewwindow]{http://www.maplesoft.com/support/help/maple/view.aspx?path=RegularChains/SemiAlgebraicSetTools/LinearSolve}{\texttt{LinearSolve}} and \href[pdfnewwindow]{http://www.maplesoft.com/support/help/Maple/view.aspx?path=RegularChains/SemiAlgebraicSetTools/Projection}{\texttt{Projection}}. The efficiency of most of these software tools, however, drops off markedly when the dimension of the final projection is much smaller than the initial space of the inequalities. Fast facet enumeration aided by Chernikov rules \cite{Shapot2012,Bastrakov2015} is implemented in \href[pdfnewwindow]{https://www.inf.ethz.ch/personal/fukudak/cdd_home/}{\textit{cdd}},
\href[pdfnewwindow]{http://comopt.ifi.uni-heidelberg.de/software/PORTA/}{\textit{PORTA}}, \href[pdfnewwindow]{http://sbastrakov.github.io/qskeleton/}{\textit{qskeleton}}
, and \href[pdfnewwindow]{http://www.uic.unn.ru/~zny/skeleton/}{\textit{skeleton}}. In the authors experience \href[pdfnewwindow]{http://www.uic.unn.ru/~zny/skeleton/}{\textit{skeleton}} seemed to be the most efficient. Additionally, the package \href[pdfnewwindow]{https://polymake.org/doku.php/researchdata/polymakeilp}{\textit{polymake}} offers multiple algorithms as options for computing convex hulls.}. The authors found it convenient to custom-code a linear quantifier elimination routine in \textit{Mathematica$^{_{\textit{\tiny\texttrademark}}}$}.

Other algorithms for facet enumeration that are not based on linear quantifier elimination include the following. \emph{Lexicographic reverse search} (LRS)~\cite{Avis2000lrs} explores the entire polytope by repeatedly pivoting from one facet to an adjacent one, and is implemented in~\href[pdfnewwindow]{http://cgm.cs.mcgill.ca/~avis/C/lrslib/USERGUIDE.html#Installation\%20Section}{\texttt{lrs}}. Equality Set Projection (ESP)~\cite{jones2004equality,JonesThesis2005} is also based on pivoting from facet to facet, though its implementation is less stable\footnote{ESP \cite{jones2004equality,JonesThesis2005,Jones2008} is supported by \href[pdfnewwindow]{http://people.ee.ethz.ch/~mpt/2/docs/refguide/mpt/@polytope/projection.html}{\texttt{MPT2}} but not \href[pdfnewwindow]{http://people.ee.ethz.ch/~mpt/3/}{\texttt{MPT3}}, and by the (undocumented) option of \href[pdfnewwindow]{https://github.com/tulip-control/polytope/blob/master/polytope/polytope.py\#L1412}{projection} in the \href[pdfnewwindow]{https://pypi.python.org/pypi/polytope}{\textit{polytope}} (v0.1.2 2016-07-13) python module.}. These algorithms could be interesting to use in practice, since each pivoting step churns out a new facet; by contrast, Fourier-Motzkin type algorithms only generate the entire list of facets at once, after all the quantifiers have been eliminated one by one, see Ref.~\cite{Chaves2018FMprojection} for a recent comparative review.

It may also be possible to exploit special features of marginal polytopes in order to facilitate their facet enumeration, such as their high degree of symmetry: permuting the outcomes of each variable maps the polytope to itself, which already generates a sizeable symmetry group, and oftentimes there are additional symmetries given by permuting some of the variables. This simplifies the problem of facet enumeration~\cite{bremner_symmetries_2009,Schurmann2013}, and it may be interesting to apply dedicated software\footnote{Such as \href[pdfnewwindow]{http://comopt.ifi.uni-heidelberg.de/software/PANDA/}{\textit{PANDA}}, \href[pdfnewwindow]{http://mathieudutour.altervista.org/Polyhedral/}{\textit{Polyhedral}}, or \href[pdfnewwindow]{http://www.math.uni-rostock.de/~rehn/software/sympol.html}{\textit{SymPol}}. The authors found \textit{SymPol} to be rather effective for some small test problems, using the options ``\texttt{./sympol -a --cdd}".} to the facet enumeration problem of marginal polytopes~\cite{Kaibel2010,rehn_tools_2012,panda_2015}.

\section{Explicit Marginal Description Matrix of the Cut Inflation with Binary Observed Variables}\label{sec:explicitmatrix}

The three maximal ai-expressible sets of the Cut inflation (\cref{fig:simpleinflation} on Pg. \pageref{fig:simpleinflation}) are $\{A_2 B_1\}$, $\{B_1 C_1\}$, and $\{A_2 C_1\}$. Taking the variables to be binary, each ai-expressible set corresponds to $2^2=4$ equations pertinent to the marginal problem. The three \emph{sets} of equations which relate the marginal probabilities to a posited joint distribution are given by
\begin{align}\label[eqs]{eq:marginalequalities222}
\begin{split}
&\forall{a_2 b_1}:\;\p[A_2 B_1]{a_2 b_1} = \sum\nolimits_{c_1}\p[A_2 B_1 C_1]{a_2 b_1 c_1 },\\
&\forall{b_1 c_1}:\;\p[B_1 C_1]{b_1 c_1} = \sum\nolimits_{a_2}\p[A_2 B_1 C_1]{a_2 b_1 c_1 },\\
&\forall{a_2 c_1}:\;\p[A_2 C_1]{a_2 c_1} = \sum\nolimits_{b_1}\p[A_2 B_1 C_1]{a_2 b_1 c_1 }.
\end{split}
\end{align}
As we noted in the main text, such conditions can be expressed in terms of a single matrix equality, $\bm{M} \bm{v} = \bm{b}$ where $\bm{v}$ is the \tblue{joint distribution vector}, $\bm{b}$ is the \tblue{marginal distribution vector} and $\bm{M}$ is the \tblue{marginal description matrix}. In the Cut inflation example, the joint distribution vector $\bm{v}$ has 8 elements, whereas the marginal distribution vector $\bm{b}$ has 12, i.e.
\begin{align}
\bm{v}=\begin{pmatrix}
 P_{A_2 B_1 C_1}(0 0 0) \\
 P_{A_2 B_1 C_1}(0 0 1) \\
 P_{A_2 B_1 C_1}(0 1 0) \\
 P_{A_2 B_1 C_1}(0 1 1) \\
 P_{A_2 B_1 C_1}(1 0 0) \\
 P_{A_2 B_1 C_1}(1 0 1) \\
 P_{A_2 B_1 C_1}(1 1 0) \\
 P_{A_2 B_1 C_1}(1 1 1)
\end{pmatrix}
,\qquad
\bm{b}=
\begin{pmatrix}
 p_{A_2 B_1}(0 0) \\
 p_{A_2 B_1}(0 1) \\
 p_{A_2 B_1}(1 0) \\
 p_{A_2 B_1}(1 1) \\
 p_{A_2 C_1}(0 0) \\
 p_{A_2 C_1}(0 1) \\
 p_{A_2 C_1}(1 0) \\
 p_{A_2 C_1}(1 1) \\
 p_{B_1 C_1}(0 0) \\
 p_{B_1 C_1}(0 1) \\
 p_{B_1 C_1}(1 0) \\
 p_{B_1 C_1}(1 1)
\end{pmatrix}=
\begin{pmatrix}
 P_{A_2 B_1 C_1}(0 0 \_) \\
 P_{A_2 B_1 C_1}(0 1 \_) \\
 P_{A_2 B_1 C_1}(1 0 \_) \\
 P_{A_2 B_1 C_1}(1 1 \_) \\
 P_{A_2 B_1 C_1}(0 \_ 0) \\
 P_{A_2 B_1 C_1}(0 \_ 1) \\
 P_{A_2 B_1 C_1}(1 \_ 0) \\
 P_{A_2 B_1 C_1}(1 \_ 1) \\
 P_{A_2 B_1 C_1}(\_ 0 0) \\
 P_{A_2 B_1 C_1}(\_ 0 1) \\
 P_{A_2 B_1 C_1}(\_ 1 0) \\
 P_{A_2 B_1 C_1}(\_ 1 1) 
\end{pmatrix},
\end{align}
and hence the marginal description matrix $\bm{M}$ is a $12\times 8$ matrix of zeroes and ones, i.e.
\begin{align}
\bm{M}=\begin{pmatrix}
 \bm{1} & \bm{1} & {\scriptscriptstyle ^0} & {\scriptscriptstyle ^0} & {\scriptscriptstyle ^0} & {\scriptscriptstyle ^0} & {\scriptscriptstyle ^0} & {\scriptscriptstyle ^0} \\
 {\scriptscriptstyle ^0} & {\scriptscriptstyle ^0} & \bm{1} & \bm{1} & {\scriptscriptstyle ^0} & {\scriptscriptstyle ^0} & {\scriptscriptstyle ^0} & {\scriptscriptstyle ^0} \\
 {\scriptscriptstyle ^0} & {\scriptscriptstyle ^0} & {\scriptscriptstyle ^0} & {\scriptscriptstyle ^0} & \bm{1} & \bm{1} & {\scriptscriptstyle ^0} & {\scriptscriptstyle ^0} \\
 {\scriptscriptstyle ^0} & {\scriptscriptstyle ^0} & {\scriptscriptstyle ^0} & {\scriptscriptstyle ^0} & {\scriptscriptstyle ^0} & {\scriptscriptstyle ^0} & \bm{1} & \bm{1} \\
 \bm{1} & {\scriptscriptstyle ^0} & \bm{1} & {\scriptscriptstyle ^0} & {\scriptscriptstyle ^0} & {\scriptscriptstyle ^0} & {\scriptscriptstyle ^0} & {\scriptscriptstyle ^0} \\
 {\scriptscriptstyle ^0} & \bm{1} & {\scriptscriptstyle ^0} & \bm{1} & {\scriptscriptstyle ^0} & {\scriptscriptstyle ^0} & {\scriptscriptstyle ^0} & {\scriptscriptstyle ^0} \\
 {\scriptscriptstyle ^0} & {\scriptscriptstyle ^0} & {\scriptscriptstyle ^0} & {\scriptscriptstyle ^0} & \bm{1} & {\scriptscriptstyle ^0} & \bm{1} & {\scriptscriptstyle ^0} \\
 {\scriptscriptstyle ^0} & {\scriptscriptstyle ^0} & {\scriptscriptstyle ^0} & {\scriptscriptstyle ^0} & {\scriptscriptstyle ^0} & \bm{1} & {\scriptscriptstyle ^0} & \bm{1} \\
 \bm{1} & {\scriptscriptstyle ^0} & {\scriptscriptstyle ^0} & {\scriptscriptstyle ^0} & \bm{1} & {\scriptscriptstyle ^0} & {\scriptscriptstyle ^0} & {\scriptscriptstyle ^0} \\
 {\scriptscriptstyle ^0} & \bm{1} & {\scriptscriptstyle ^0} & {\scriptscriptstyle ^0} & {\scriptscriptstyle ^0} & \bm{1} & {\scriptscriptstyle ^0} & {\scriptscriptstyle ^0} \\
 {\scriptscriptstyle ^0} & {\scriptscriptstyle ^0} & \bm{1} & {\scriptscriptstyle ^0} & {\scriptscriptstyle ^0} & {\scriptscriptstyle ^0} & \bm{1} & {\scriptscriptstyle ^0} \\
 {\scriptscriptstyle ^0} & {\scriptscriptstyle ^0} & {\scriptscriptstyle ^0} & \bm{1} & {\scriptscriptstyle ^0} & {\scriptscriptstyle ^0} & {\scriptscriptstyle ^0} & \bm{1} 
\end{pmatrix}
\end{align}
such that $\bm{M}\bm{v}=\bm{b}$ per \cref{eq:marginalproblemgeneric}.

\section{Constraints on Marginal Distributions from Copy-Index Equivalence Relations}\label{sec:coincidingdetails}

In \cref{sec:copyindexequivalence}, we noted that every copy of a variable in an inflation model has the same probabilistic dependence on its parents as every other copy. It followed that for certain pairs of marginal contexts, the marginal distributions in any inflation model are necessarily equal. We now describe not only how to identify all such pairs of contexts, but also how to identify weak pairs, who's corresponding symmetry imposition cannot help strengthen the final constraints.

Given $\bm{X},\bm{Y}\subseteq\nodes{G'}$ in an inflated causal structure $G'$, let us say that a map $\varphi:\bm{X}\to\bm{Y}$ is a \tblue{copy isomorphism} if it is a graph isomorphism\footnote{A graph isomorphism is a bijective map between the nodes of one graph and the nodes of another, such that both the map and its inverse take edges to edges.} between $\subgraph{\bm{X}}$ and $\subgraph{\bm{Y}}$ such that $\varphi(X)\sim X$ for all $X\in\bm{X}$, meaning that $\varphi$ maps every node $X\in\bm{X}$ to a node $Y{=}\varphi(X)\in\bm{Y}$ such that $Y$ is equivalent to $X$ under dropping the copy-index.

Furthermore, we say that a copy isomorphism $\varphi : \bm{X}\to\bm{Y}$ is an \tblue{inflationary isomorphism} whenever it can be extended to a copy isomorphism on the \emph{ancestral} subgraphs, $\Phi : \An{\bm{X}}\to\An{\bm{Y}}$. 
A copy isomorphism $\Phi: \An{\bm{X}}\to\An{\bm{Y}}$ defines an inflationary isomorphism $\varphi:\bm{X}\to\bm{Y}$ if and only if $\Phi(\bm{X}) = \bm{Y}$.
So in practice, one can either start with $\varphi : \bm{X}\to\bm{Y}$ and try to extend it to $\Phi : \An{\bm{X}}\to\An{\bm{Y}}$, or start with such a $\Phi$ and see whether it maps $\bm{X}$ to $\bm{Y}$ and thereby restricts to a $\varphi$.

For given observed $\bm{V}_1$ and $\bm{V}_2$, a sufficient condition for equality of their marginal distributions in an inflation model is that there exists an inflationary isomorphism between them. Because $\bm{V}_1$ and $\bm{V}_2$ might themselves contain several variables that are copy-index equivalent (recall the examples of \cref{sec:copyindexequivalence}), equating the distribution $P_{\bm{V}_1}$ with the distribution $P_{\bm{V}_2}$ in an unambiguous fashion requires one to specify a correspondence between the variables that make up $\bm{V}_1$ and those that make up $\bm{V}_2$. This is exactly the data provided by the inflationary isomorphism $\varphi$. This result is summarized in the following lemma.

\begin{lemma}
Let $G'$ be an inflation of $G$, and let $\bm{V}_1,\bm{V}_2\subseteq\obsnodes{G'}$. Then every inflationary isomorphism $\varphi:\bm{V}_1\to\bm{V}_2$ induces an equality $P_{\bm{V}_1} = P_{\bm{V}_2}$ for inflation models, where the variables in $\bm{V}_1$ are identified with those in $\bm{V}_2$ according to $\varphi$.
\label[lemma]{lem:coincide}
\end{lemma}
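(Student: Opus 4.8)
The plan is to reduce the asserted equality of marginals to a term-by-term comparison of the Markov factorizations over the two ancestral subgraphs, with the comparison mediated by the extension $\Phi:\An[G']{\bm{V}_1}\to\An[G']{\bm{V}_2}$ whose existence is guaranteed by the hypothesis that $\varphi$ is an inflationary isomorphism. First I would invoke the principle already used around \cref{eq:coincidingdistrodef}: in any inflation model the marginal on an observed set is determined entirely by its ancestral subgraph together with the causal parameters sitting on it. Concretely, since $\An[G']{\bm{V}_1}$ is ancestrally closed, every parent of a node in it again lies in it, so the Markov factorization \cref{Markov} specializes to
\begin{equation}\label{eq:coincideproofmarg}
P_{\bm{V}_1} = \sum_{\An[G']{\bm{V}_1}\setminus\bm{V}_1}\ \prod_{X\in\An[G']{\bm{V}_1}} \pfunc{X \mid \Pa[G']{X}},
\end{equation}
and the analogous expression holds for $P_{\bm{V}_2}$ over $\An[G']{\bm{V}_2}$. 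It therefore suffices to show that these two sums coincide once the outcomes on $\bm{V}_1$ are matched to those on $\bm{V}_2$ according to $\varphi$.

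The second step is to use $\Phi$ as a relabeling of the product and summation indices on the right-hand side of \cref{eq:coincideproofmarg}. Because $\Phi$ is a bijection restricting to $\varphi$ on $\bm{V}_1$ (so that $\Phi(\bm{V}_1)=\bm{V}_2$), it carries $\An[G']{\bm{V}_1}\setminus\bm{V}_1$ bijectively onto $\An[G']{\bm{V}_2}\setminus\bm{V}_2$. Hence the ancestors summed over on the two sides are in bijection, while the fixed outcomes on $\bm{V}_1$ are identified with those on $\bm{V}_2$ exactly by $\varphi$. What remains is to check that, under this relabeling, the product in \cref{eq:coincideproofmarg} is sent factor-for-factor to the corresponding product for $\bm{V}_2$.

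The crux is this factor-wise matching, and it rests on two properties of $\Phi$. Since $\Phi$ is a graph isomorphism between the induced subgraphs on the (ancestrally closed) sets $\An[G']{\bm{V}_1}$ and $\An[G']{\bm{V}_2}$, it preserves and reflects edges, giving the set equality $\Pa[G']{\Phi(X)}=\Phi\!\left(\Pa[G']{X}\right)$ for every $X\in\An[G']{\bm{V}_1}$; thus the parent arguments of the $X$-factor are carried onto the parent arguments of the $\Phi(X)$-factor. Moreover, $\Phi(X)\sim X$ means $X$ and $\Phi(X)$ are copies of the same node $\bar X$ of $G$, so by the defining property of an inflation model, \cref{eq:funcdependences} of \cref{def:inflat}, both conditional kernels equal $\pfunc{\bar X \mid \Pa[G]{\bar X}}$ and are therefore literally the same function of their (now matched) arguments. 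Combining the two facts, relabeling the $\bm{V}_2$-expression by $\Phi^{-1}$ turns its summand into the summand of \cref{eq:coincideproofmarg} term by term, and the bijection of summation indices finishes the equality $P_{\bm{V}_1}=P_{\bm{V}_2}$.

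I expect the only delicate point to be the bookkeeping in the crux step: one must verify that $\Phi$ simultaneously matches each conditional factor to a factor with an identical kernel \emph{and} with correctly permuted parent arguments, and that it pairs up the marginalized-out ancestors on the two sides. All of this is exactly what the two hypotheses on $\Phi$---being a graph isomorphism of the ancestral subgraphs and satisfying $\Phi(X)\sim X$---are designed to supply, with root nodes ($\Pa[G']{X}=\emptyset$, where $\Phi(\emptyset)=\emptyset$) handled uniformly. No genuinely new idea beyond the ancestral-subgraph principle of \cref{eq:coincidingdistrodef} is needed; the lemma is essentially that principle upgraded from identity of ancestral subgraphs to isomorphism of them.
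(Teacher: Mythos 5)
Your proof is correct and follows essentially the same route the paper takes: the paper justifies \cref{lem:coincide} only informally, by appeal to the principle behind \cref{eq:coincidingdistrodef} that a marginal depends only on the ancestral subgraph and the parameters defined on it, and your argument is a careful elaboration of exactly that principle (Markov factorization over the ancestrally closed set, relabeling by $\Phi$, and matching kernels via \cref{eq:funcdependences}). The factor-wise bookkeeping you flag as the delicate point is handled correctly, including the identification $\Pa[G']{\Phi(X)}=\Phi(\Pa[G']{X})$ from the graph-isomorphism property.
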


This applies in particular when $\bm{V}_1 = \bm{V}_2$, in which case the statement is that the distribution $P_{\bm{V_1}}$ is \emph{invariant under permuting the variables} according to $\varphi$.

\cref{lem:coincide} is best illustrated by returning to our example from \cref{sec:copyindexequivalence} 
which considered the Spiral inflation of~\cref{fig:Tri222} and the pair of contexts $\bm{V}_1 = \{ A_1 A_2 B_1\}$ and $\bm{V}_2 =\{ A_1 A_2 B_2\}$. The map 
\begin{align}\label{copyisomorph}
	\varphi \: : \: A_1 \mapsto A_1,\qquad A_2\mapsto A_2,\qquad B_1\mapsto B_2
\end{align}
is a copy isomorphism between $\bm{V}_1$ and $\bm{V}_2$
because it trivially implements a graph isomorphism (both subgraphs are edgeless), and it maps each variable in $\bm{V}_1$ to a variable in $\bm{V}_2$ that is copy-index equivalent. There is a unique choice to extend $\varphi$ to a copy isomorphism $\Phi:\An{\bm{V}_1}\to\An{\bm{V}_2}$, namely, by extending~\cref{copyisomorph} to the ancestors via
\begin{align}
\Phi \: : \: X_1\mapsto X_1,\qquad Y_1\mapsto Y_1, \qquad Y_2 \mapsto Y_2, \qquad Z_1 \mapsto Z_2,
\end{align}
which is again a copy isomorphism. 
Therefore $\varphi$ is indeed an inflationary isomorphism. From \cref{lem:coincide}, we then conclude that any inflation model satisfies $P_{A_1 A_2 B_1} = P_{A_1 A_2 B_2}$.

Similarly, the map 
\begin{align}\label{copyisomorph2}
	\varphi' \: : \: A_1 \mapsto A_2,\qquad A_2\mapsto A_1,\qquad B_1\mapsto B_2
\end{align}
is \emph{also} easily verified to be a copy isomorphism between $\subgraph{\bm{V}_1}$ and $\subgraph{\bm{V}_2}$, and there is again a unique choice to extend $\varphi'$ to a copy isomorphism $\Phi':\ansubgraph{\bm{V}_1}\to\ansubgraph{\bm{V}_2}$, by extending~\cref{copyisomorph2} with
\begin{align}
\Phi' \: : \: X_1\mapsto X_1,\qquad Y_1\mapsto Y_2, \qquad Y_2 \mapsto Y_1, \qquad Z_1 \mapsto Z_2,
\end{align}
so that $\varphi'$ too is verified to be an inflationary isomorphism. From \cref{lem:coincide}, we then conclude that every inflation model also satisfies $P_{A_1 A_2 B_1} = P_{A_2 A_1 B_2}$. (And this in turn implies that for the context $\{A_1 A_2\}$, the marginal distribution satisfies the permutation invariance $P_{A_1 A_2} = P_{A_2 A_1}$.)

\begin{figure}[b]
 \centering
 \begin{minipage}[t]{0.22\linewidth} \centering
 \includegraphics[scale=1]{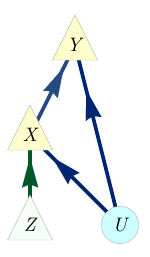}
 \caption{The instrumental scenario of \citet{pearl1995instrumental}.}
 \label{fig:ISorigDAG}
 \end{minipage}\hfill
 \begin{minipage}[t]{0.4\linewidth} \centering
 \includegraphics[scale=1]{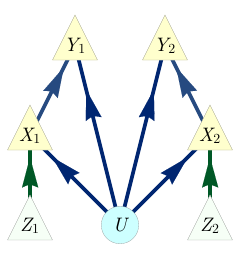}
 \caption{An inflation of the instrumental scenario which illustrates why coinciding ancestral subgraphs doesn't necessarily imply coinciding marginal distributions.}
 \label{fig:IScopyDAG}
 \end{minipage}\hfill 
 \begin{minipage}[t]{0.28\linewidth} \centering
 \includegraphics[scale=1]{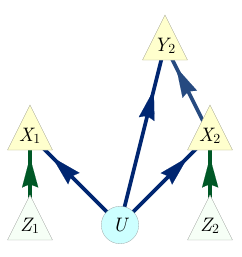}
 \caption{The ancestral subgraph of \cref{fig:IScopyDAG} for either $\{X_1 Y_2 Z_1\}$ or $\{X_1 Y_2 Z_2\}$.}
 \label{fig:ancestralsubgraphnotenough}
 \end{minipage}
\end{figure}

In order to avoid any possibility of confusion, we emphasize that it is not a plain copy isomorphism between the subgraphs of $\bm{V}_1$ and $\bm{V}_2$ themselves which results in coinciding marginal distributions, nor a copy isomorphism between the ancestral subgraphs of $\bm{V}_1$ and $\bm{V}_2$. Rather, it is an inflationary isomorphism between the subgraphs, i.e., a copy isomorphism between the ancestral subgraphs that restricts to a copy isomorphism between the subgraphs. To see why a copy isomorphism between ancestral subgraphs {\em by itself} may not be sufficient for deriving equality of marginal distributions, we offer the following example. Take as the original causal structure the instrumental scenario of \citet{pearl1995instrumental}, and consider the inflation depicted in \cref{fig:IScopyDAG}. Consider the pair of contexts $\bm{V}_1 = \{ X_1 Y_2 Z_1\}$ and $\bm{V}_2 = \{ X_1 Y_2 Z_2\}$ on the inflated causal structure. Since $\subgraph{\bm{V}_1}$ and $\subgraph{\bm{V}_2}$ are not isomorphic, there is no copy isomorphism between the two. On the other hand, 
the ancestral subgraphs are both given by the causal structure of~\cref{fig:ancestralsubgraphnotenough}, so that the identity map is a copy isomorphism between $\ansubgraph{X_1 Y_2 Z_1}$ and $\ansubgraph{X_1 Y_2 Z_2}$.

One can try to make use of \cref{lem:coincide} when deriving polynomial inequalities with inflation via solving the marginal problem, by imposing the resulting equations of the form $P_{\bm{V}_1} = P_{\bm{V}_2}$ as additional constraints, one constraint for each inflationary isomorphism $\varphi : \bm{V}_1\to\bm{V}_2$ between sets of observed nodes. This is advantageous to speed up to the linear quantifier elimination, since one can solve each of the resulting equations for one of the unknown joint probabilities and thereby eliminate that probability directly without Fourier-Motzkin elimination. Moreover, one could hope that these additional equations also result in tighter constraints on the marginal problem, which would in turn yield tighter causal compatibility inequalities. Our computations have so far not revealed any example of such a tightening.

In some cases, this lack of impact can be explained as follows.
Suppose that $\varphi:\bm{V}_1\to\bm{V}_2$ is an inflationary isomorphism such that $\varphi$ can be extended to a copy automorphism 
$\Phi':G'\to G'$, which maps the entirety of the inflated causal structure onto itself. An inflationary isomorphism can always be extended to some copy isomorphism between the ancestral subgraphs $\Phi:\ansubgraph{\bm{V}_1}\to\ansubgraph{\bm{V}_2}$ by definition, but not every inflationary isomorphism can also be extended to a full copy automorphism of $G'$.
In those cases where $\varphi$ \emph{can} be extended to a copy automorphism, 
the \emph{irrelevance} of the additional constraint $P_{\bm{V}_1} = P_{\bm{V}_2}$ to the marginal problem for inflation models can be explained by the following argument. 

Suppose that some joint distribution $P_{\obsnodes{G'}}$ solves the unconstrained marginal problem, i.e.,~without requiring $P_{\bm{V}_1} = P_{\bm{V}_2}$. Now apply the automorphism $\Phi'$ to the variables in $P_{\obsnodes{G'}}$, switching the variables around, to generate a new distribution $P'_{\obsnodes{G'}}:=P_{\Phi'(\obsnodes{G'})}$. Because the set of marginal distributions that arise from inflation models is invariant under this switching of variables, we conclude that $P'$ is also a solution to the unconstrained marginal problem. Taking the uniform mixture of $P$ and $P'$ is therefore still a solution of the unconstrained marginal problem. But this uniform mixture also satisfies the supplementary constraint $P_{\bm{V}_1} = P_{\bm{V}_2}$. Hence the supplementary constraint is satisfiable whenever the unconstrained marginal problem is solvable, which makes adding the constraint irrelevant.

This argument does not apply when the inflationary isomorphism $\varphi:\bm{V}_1\to\bm{V}_2$ cannot be extended to a copy automorphism of the entire inflated causal structure. It also does not apply if one uses $d$-separation conditions beyond ancestral independence on the inflated causal structure as additional constraints (\cref{sec:fulldsep}), because in this case the set of compatible distributions is not necessarily convex. In either of these cases, it is unclear whether or not constraints arising from copy-index equivalence could yield tighter inequalities. 

\section{Using the Inflation Technique to Certify a Causal Structure as ``Interesting"\label{sec:interestingproof}}

By considering all possible $d$-separation relations on the observed nodes of a causal structure, one can infer the set of all conditional independence (CI) relations that must hold in any distribution compatible with it. Due to the presence of latent variables, satisfying these CI relations 
is generally not sufficient for compatibility. Henson, Lal and Pusey (HLP) \cite{pusey2014gdag} introduced the term \tblue{interesting} for those causal structures for which this happens, and derived a partial classification of causal structures into interesting and non-interesting ones by
finding necessary criteria for a causal structure to be interesting, and they also conjectured their criteria to be sufficient. As evidence in favour of this conjecture, they enumerated all possible isomorphism classes of causal structure with up to six nodes satisfying their criteria, which resulted in only 21 equivalence classes of potentially interesting causal structures. Of those 21, they further proved that 18 were indeed interesting by writing down explicit distributions which are incompatible despite satisfying the observed CI relations. Incompatibility was certified by means of entropic inequalities. 

That left three classes of causal structures as \emph{potentially} interesting. For each of these, HLP
 derived both: (i) the set of Shannon-type entropic inequalities that take into account the CI relations among the observed variables, and (ii) the set of Shannon-type entropic inequalities that also take into account CI relations among latent variables.
Finding the second set to be larger than the first constitutes evidence that the causal structure is interesting.
The evidence is not conclusive, however, because the Shannon-type inequalities that are included in the second set but not the first might be non-Shannon-type inequalities that merely follow from the CI relations among the observed variables~\cite{pusey2014gdag}.

One way to close this loophole would be to show that the novel Shannon-type inequalities imply constraints beyond some inner approximation to the genuine entropic cone corresponding to the CI relations among observed variables, perhaps along the lines of~\cite{weilenmann2016entropic}. Another is to use causal compatibility inequalities beyond entropic inequalities to identify some CI-respecting but incompatible distributions. \citet{pianaar2016interesting} accomplished precisely this
by considering the different values that an observed root variable may take. In the following, we demonstrate how the inflation technique can be used for the same purpose. 

\begin{figure}[t]
\centering
\begin{minipage}[t]{0.32\linewidth}
\centering
\includegraphics[scale=1]{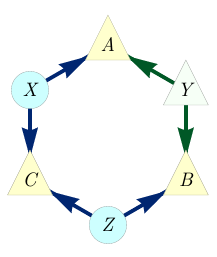}
\caption{Causal structure \#15 in \cite{pusey2014gdag}. The $d$-separation relations are $C\aindep Y$ and $A \aindep B\,|\,Y$.}\label{fig:GDAG15}
\end{minipage}
\hfill
\begin{minipage}[t]{0.32\linewidth}
\centering
\includegraphics[scale=1]{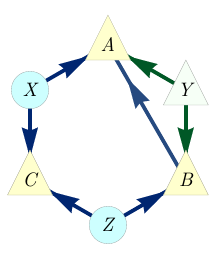}
\caption{Causal structure \#16 in \cite{pusey2014gdag}. The only $d$-separation relation is $C\aindep Y$.}\label{fig:GDAG16}
\end{minipage}
\hfill
\begin{minipage}[t]{0.32\linewidth}
\centering
\includegraphics[scale=1]{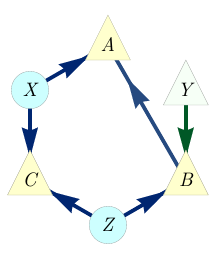}
\caption{Causal structure \#20 in \cite{pusey2014gdag}. The $d$-separation relations are $C\aindep Y$ and $A \aindep Y\,|\,B$.}\label{fig:GDAG20}
\end{minipage}
\hfill
\end{figure}

\subsection{Certifying that Henson-Lal-Pusey's Causal Structure \#16 is ``Interesting"} 
\label{example:Pienaar}

\begin{figure}[htb]
\centering
\centering
\begin{minipage}[t]{0.5\linewidth}
\centering
\includegraphics[scale=1]{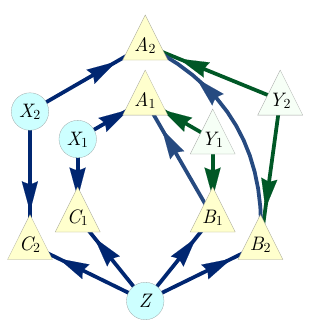}
\caption{The Russian dolls inflation of \cref{fig:GDAG16}.}\label{fig:Inflated16new}
\end{minipage}
\end{figure}

\citet{pianaar2016interesting} identified a distribution which satisfies the only CI relation that must hold among the observed variables in HLP's causal structure \#16 (\cref{fig:GDAG16} here), namely, $C\indep Y$, but which is nonetheless incompatible with it:
\begin{align}\label{eq:pienaardistro}
\! P^{\text{Pienaar}}_{A B C Y}:=\frac{[0000]+[0110]+[0001]+[1011]}{4},\quad\text{i.e.,}\quad P^{\text{Pienaar}}_{A B C Y}(a b c y)=\begin{cases}\tfrac{1}{4}&\text{if } y\cdot c = a \text{ and } (y \oplus 1)\cdot c = b, \\ 0&\text{otherwise}.\end{cases}
\end{align}
It is useful to compute the conditional on $Y$,
\begin{align}\label{eq:pienaardistroconditiona}
 P^{\text{Pienaar}}_{A B C|Y}(\cdot \cdot \cdot | y)=\begin{cases}\frac{1}{2}\parenths{[000]+[011]}&\text{if } y\eql 0, \\ \frac{1}{2}\parenths{[000]+[101]}&\text{if } y\eql 1.\end{cases}
\end{align}
This makes it evident that the distribution can be described as follows: if $Y=0$, then $A=0$ while $B$ and $C$ are uniformly random and perfectly correlated, while if $Y=1$, then $B=0$ and $A$ and $C$ are uniformly random and perfectly correlated.

Here, we will establish the incompatibility of Pienaar's distribution with HLP's causal structure \#16 (\cref{fig:GDAG16} here) using the inflation technique. To do so, we use the inflation
depicted in \cref{fig:Inflated16new}, which we term the {\em Russian dolls} inflation. 
We will make use of the fact that $\{ A_1 C_1 Y_1\}$, $\{ B_2 C_2 Y_2\}$ and $\{B_2 C_1 Y_2 \}$ are injectable sets, together with the fact that $\{ A_1 C_2 Y_1\}$ is an expressible set. 

We begin by demonstrating how the $d$-separation relations in the Russian dolls inflation imply that $\{ A_1 C_2 Y_1\}$ is expressible. 
First, we note that the set $\{ A_1 B_1 C_2 Y_1\}$ is expressible because the $d$-separation relation $A_1 \perp_d C_2\, |\, B_1 Y_1$ implies that
\begin{align}\label{express1}
P_{A_1 B_1 C_2 Y_1} = \frac{P_{A_1 B_1 Y_1} P_{C_2 B_1 Y_1} }{P_{B_1 Y_1}},
\end{align}
and the sets $\{A_1 B_1 Y_1\}, \{C_2 B_1 Y_1\},$ and $\{B_1 Y_1\}$ are injectable. The expressibility of $\{ A_1 C_2 Y_1\}$ then follows from the expressibility of $\{ A_1 B_1 C_2 Y_1\}$ and the fact that the distribution on the former can be obtained from the distribution on the latter by marginalization,
\begin{align}\label{express0}
P_{A_1 C_2 Y_1}(acy) = \sum_{b} P_{ A_1 B_1 C_2 Y_1}(abcy).
\end{align}

It follows that the distribution $P_{A_1 C_2 Y_1}$ in the inflation model associated to the Pienaar distribution can be computed by first writing down the distributions on the relevant injectable sets,
\begin{align}
\begin{split}
\label{injsnew}
P_{B_2 C_2 Y_2}(bc y)&= P^{\text{Pienaar}}_{BCY}(bc y), \\
P_{A_1 C_1 Y_1}(ac y)&= P^{\text{Pienaar}}_{ACY}(ac y),\\
P_{B_2 C_1 Y_2}(bc y)&= P^{\text{Pienaar}}_{BCY}(bc y),
\end{split}
\end{align} 
and from \cref{express0} and~\eqref{express1}, as well as the injectability of $\{A_1 B_1 Y_1\}, \{C_2 B_1 Y_1\},$ and $\{B_1 Y_1\}$, 
we infer that 
\begin{align}
P_{A_1 C_2 Y_1}(a c y) = \sum_{b} \frac{P^{\text{Pienaar}}_{A B Y}(a b y) P^{\text{Pienaar}}_{C B Y}(c b y) }{P^{\text{Pienaar}}_{B Y}(b y)}.
\label{express1old}
\end{align}

We are now in a position to derive a contradiction. 
Our derivation will begin by setting $Y_2=0$ and $Y_1=1$. It is therefore convenient to condition on $Y_1$ and $Y_2$ in the distributions of interest and set them equal to these values, and to express these in terms of the conditioned Pienaar distribution via \cref{injsnew},
\begin{align}
P_{B_2 C_2 |Y_2}(bc| 0)&= P^{\text{Pienaar}}_{BC|Y}(bc|0) \nonumber\\
P_{A_1 C_1 |Y_1}(ac|1)&= P^{\text{Pienaar}}_{AC|Y}(ac|1)\nonumber\\
P_{B_2 C_1 |Y_2}(bc|0)&= P^{\text{Pienaar}}_{BC|Y}(bc|0) 
\end{align} 
and similarly from \cref{express1old},
\begin{align}
P_{A_1 C_2 |Y_1}(ac|1) &=\sum_{b} \frac{P^{\text{Pienaar}}_{A B |Y}(a b| 1) P^{\text{Pienaar}}_{C B| Y}(c b |1) }{P^{\text{Pienaar}}_{B |Y}(b |1)}.
\end{align}
From these and \cref{eq:pienaardistroconditiona}, we infer
\begin{align}
P_{B_2 C_2 |Y_2}(\cdot \cdot| 0)&= \frac{1}{2}([00]+[11]),\label{inj1}\\
P_{A_1 C_1 |Y_1}(\cdot \cdot| 1)&= \frac{1}{2}([00]+[11]), \label{inj2}\\
P_{B_2 C_1 |Y_2}(\cdot \cdot| 0)&= \frac{1}{2}([00]+[11]),\label{inj3} \\
P_{A_1 C_2 |Y_1}(\cdot \cdot|1) &= \frac{1}{4}([00]+[01]+[10]+[11]).\label{inj4}
\end{align} 
Henceforth, we leave the condition that $Y_2=0$ and $Y_1=1$ implicit. 
From \cref{inj3}, we have 
\begin{align}
\text{With probability 1/2,}\; B_2=0\; \text{and}\;C_1=0.
\end{align}
From \cref{inj1}, we have
\begin{align}
\text{If}\; B_2=0\; \text{then}\;C_2=0.
\end{align}
From \cref{inj2}, we have
\begin{align}
\text{If}\; C_1=0\; \text{then}\;A_1=0.
\end{align}
These three statements imply that 
\begin{align}
\text{The probability that } C_2=0 \text{ and } A_1=0 \text{ is } \ge 1/2.
\end{align}
However, \cref{inj4} implies that the probability of $C_2=0\; \text{and}\;A_1=0$ is only $p=1/4$. 
We have therefore arrived at a contradiction. This establishes the incompatibility of the Pienaar distribution with HLP's causal structure \#16. Our reasoning is again a form of the Hardy-type arguments from \cref{sec:TSEM}.

\subsection{Deriving a Causal Compatibility Inequality for HLP's Causal Structure \#16}

We can also turn the above argument into an inequality. Using the methods of \cref{sec:TSEM}, it is straightforward to show that the assumption of a joint distribution on $\{ A_1 B_2 C_1 Y_1 Y_2\}$ implies the inequality on marginals,
\begin{align}\label{startingineq}
 \p[B_2 C_1Y_1 Y_2]{0010} \leq \p[B_2 C_2Y_1 Y_2]{0110} + \p[A_1 C_1 Y_1 Y_2]{10 10}+\p[A_1 C_2 Y_1 Y_2]{0010}.
\end{align}
From the following four ancestral independences in the inflated causal structure, $B_2 C_1 Y_2 \perp_d Y_1$, $B_2 C_2 Y_2 \perp_d Y_1$, $A_1 C_1 Y_1 \perp_d Y_2$, and $A_1 C_2 Y_1 \perp_d Y_2$, we infer, respectively, the following factorization conditions:
\begin{align}
\begin{split}
P_{B_2 C_1 Y_2 Y_1}=P_{B_2 C_1 Y_2}P_{Y_1},\\
P_{B_2 C_2 Y_2 Y_1}=P_{B_2 C_2 Y_2}P_{Y_1},\\
P_{A_1 C_1 Y_1 Y_2}=P_{A_1 C_1 Y_1 }P_{Y_2},\\
P_{A_1 C_2 Y_1 Y_2}=P_{A_1 C_2 Y_1 }P_{Y_2}.
\end{split}
\end{align}
Substituting these into \cref{startingineq}, we obtain:
\begin{align}\label{ntccineqinflationDAG}
 \p[B_2 C_1Y_2 ]{000}\p[Y_1]{1} \leq \p[B_2 C_2Y_2]{010} \p[ Y_1]{1} + \p[A_1 C_1 Y_1 ]{10 1}\p[Y_2]{0}+\p[A_1 C_2 Y_1]{001}\p[ Y_2]{0}.
\end{align}
This is a nontrivial causal compatibility inequality for the inflated causal structure. However, in this form, it cannot be translated into one for the observed variables in the original causal structure: the sets $\{B_2 C_1 Y_2 \}$, $\{ B_2 C_2 Y_2\}$ and $\{ A_1 C_1 Y_1\}$ are injectable, and the singleton sets $\{ Y_1\} $ and $\{Y_2\}$ are injectable (by the definition of inflation), the set $\{ A_1 C_2 Y_1 \}$ is merely expressible. Therefore, we must substitute the expression for $P_{A_1 C_2 Y_1}$ given by \cref{express0,express1} into \cref{ntccineqinflationDAG}, to obtain
\begin{align}\label{ntccineqinflationDAG2}
 \p[B_2 C_1Y_2 ]{000}\p[Y_1]{1} \leq \p[B_2 C_2 Y_2]{010} \p[ Y_1]{1} + \p[A_1 C_1 Y_1 ]{10 1}\p[Y_2]{0}+\sum_{b} \frac{P_{A_1 B_1 Y_1}(0 b 1) P_{B_1 C_2 Y_1}(0 b 1) }{P_{B_1 Y_1}(b 1)} \p[ Y_2]{0}.
\end{align}
This is also a nontrivial causal compatibility inequality for the inflated causal structure, but now it refers exclusively to distributions on injectable sets. As such, we can directly translate it into a nontrivial causal compatibility inequality for the original causal structure, namely, 
\begin{align}\label{ntccineqinflationDAG3}
 \p[B C Y ]{000}\p[Y ]{1} \leq \p[B C Y]{010} \p[ Y ]{1} + \p[A C Y ]{10 1}\p[Y]{0}+\sum_{b} \frac{P_{A B Y}(0 b 1) P_{ B C Y}(0 b 1) }{P_{B Y}(b 1)} \p[ Y]{0}.
\end{align}
Dividing by $\p[Y ]{0}\p[Y ]{1}$, and using the definition of conditional probabilities, 
this inequality can be expressed in the form
\begin{align}\label{ntccineqDAG}
 \p[B C |Y ]{00|0} \leq \p[B C| Y]{01|0} + \p[A C |Y ]{10 |1}+\sum_{b} \frac{P_{A B| Y}(0 b |1) P_{ B C |Y}(0 b |1) }{P_{B |Y}(b| 1)} .
\end{align}
This inequality is strong enough to witness the incompatibility of Pienaar's distribution \cref{eq:pienaardistro} with HLP's causal structure $\#16$.

\subsection{Certifying that Henson-Lal-Pusey's Causal Structures \#15 and \#20 are ``Interesting"} 

Any distribution $P_{ABCY}$ that is incompatible with HLP's causal structure \#16 is also incompatible with HLP's causal structures \#15 (\cref{fig:GDAG15} here) and \#20 (\cref{fig:GDAG20} here) because the causal models defined by HLP's causal structures \#15 and \#20 are included among the causal models defined by HLP's causal structure \#16 (\cref{fig:GDAG16} here). Consequently, \cref{ntccineqDAG} is also a valid causal compatibility inequality for HLP's causal structure \#15 and for HLP's causal structure \#20.

It follows that if one can find a distribution that exhibits all of the observable CI relations implied by \emph{either} of HLP's causal structures \#15 and \#20, namely, $C\indep Y$ (per \#15 \emph{and} \#16), $A\indep B\,|\,Y$ (per \#15), and $A\indep Y\,|\,B$ (per \#20), and which moreover is \emph{not} compatible with HLP's causal structure \#16, then this proves---in one go---that HLP's causal structures \#15, \#16 and \#20 are interesting. Any distribution $P_{ABCY}$ with the conditional\footnote{We take the definition of the conditional $P_{A B C|Y}$ from the distribution $P_{ABCY}$ as also implying $P_Y(0) > 0$ and $P_Y(1) > 0$.}
\begin{align}\label{eq:allcirespectingdistro}
 P_{A B C|Y}(a b c| y):=
 \begin{cases}\frac{1}{4}\parenths{[000]+[111]+[011]+[100]}&\text{if }\, y\eql 0, \\ \frac{1}{4}\parenths{[000]+[111]+[010]+[101]}&\text{if }\, y\eql 1,\end{cases}
\end{align}
achieves this because it satisfies the required CI relations while also violating \cref{ntccineqDAG}.

\section{The Copy Lemma and Non-Shannon type Entropic Inequalities}\label{sec:NonShannon}

The inflation technique may also be useful outside beyond causal inference. As we argue in the following, inflation is secretly what underlies the \tblue{Copy Lemma} in the derivation of non-Shannon type entropic inequalities~\cite[Chapter~15]{yeung_network_2008}. The following formulation of the Copy Lemma is the one of \citet{kaced_equivalence_2013}.

\begin{lemma}
	Let $A$, $B$ and $C$ be random variables with joint distribution $P_{ABC}$. Then there exists a fourth random variable $A'$ and joint distribution $P_{AA'BC}$ such that:
	\begin{enumerate}
		\item $P_{AB} = P_{A'B}$,
		\item $A' \indep AC \:|\: B$.
	\label{copylemma}
	\end{enumerate}
\end{lemma}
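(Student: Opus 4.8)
The plan is to construct $A'$ explicitly as a fresh copy of $A$ that is resampled from the conditional $\pfunc{A|B}$ independently of everything else, and then to read off both requirements directly from the construction. Concretely, I would glue the pair $\{A,B\}$ to a duplicate $\{A',B\}$ along their shared variable $B$, exactly in the manner of rule~1 of \cref{def:expressible} and \cref{eq:dsepsub}:
\begin{align}\label{eq:copyconstruction}
\p[AA'BC]{a a' b c} := \begin{cases} \dfrac{\p[ABC]{abc}\,\p[AB]{a' b}}{\p[B]{b}} & \text{if } \p[B]{b}>0,\\[1ex] 0 & \text{if } \p[B]{b}=0. \end{cases}
\end{align}
This is manifestly nonnegative, and it is normalized because summing over $a'$ first (see below) returns $\p[ABC]{abc}$, which sums to one.

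Next I would verify condition~1 by two marginalizations. Summing \cref{eq:copyconstruction} over $a'$ and using $\sum_{a'}\p[AB]{a' b}=\p[B]{b}$ gives $\sum_{a'}\p[AA'BC]{a a' b c}=\p[ABC]{abc}$, so the $ABC$-marginal of the constructed distribution is the original $P_{ABC}$. Summing instead over $a$ and $c$ and using $\sum_{a,c}\p[ABC]{abc}=\p[B]{b}$ gives $\sum_{a,c}\p[AA'BC]{a a' b c}=\p[AB]{a' b}$, i.e.\ $\pfunc{A'B}=\pfunc{AB}$, which is precisely condition~1.

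Condition~2 then costs essentially no further work: once condition~1 is in hand, the right-hand side of \cref{eq:copyconstruction} is literally $\p[A'B]{a' b}\,\p[ACB]{acb}/\p[B]{b}$, which is exactly the statement that $A'\indep AC\,|\,B$. The only point demanding care is the $\p[B]{b}=0$ case, where the conditional independence is vacuous and the zero convention keeps the distribution well defined; this is handled just as the denominator convention already appearing in \cref{def:expressible,eq:dsepsub}, so I do not expect it to be a genuine obstacle. Indeed, the sole subtlety of the whole argument is simply checking that the \emph{single} distribution \cref{eq:copyconstruction} meets both requirements at once, which it does by design.

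Finally, I would make explicit the conceptual point promised by this section: \cref{eq:copyconstruction} is nothing but an inflation. Realizing the arbitrary $P_{ABC}$ by the (constraint-free) complete DAG on $\{A,B,C\}$ with node ordering $B,A,C$ (edges $B\to A$, $B\to C$, $A\to C$), the construction is precisely the inflation that duplicates the node $A$ into $A$ and $A'$ while retaining a single copy of $B$, so that $A$ and $A'$ become two children of one copy of $B$ --- an inflationary fan-out at $B$. Injectability of $\{A',B\}$, which mirrors $\{A,B\}$, yields condition~1 via \cref{keyinference}, while the local Markov property at the fan-out (the only parent of $A'$ is the single copy of $B$, and $A,C$ are non-descendants of $A'$) yields condition~2. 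In this way the Copy Lemma is seen to be an instance of \cref{mainlemma} applied to this particular inflation.
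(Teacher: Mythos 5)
Your proof is correct, but its primary route is precisely the one the paper acknowledges and then deliberately sidesteps. Your argument verifies by hand that the explicit product $P_{AA'BC} = P_{ABC}\,P_{A'B}\,P_B^{-1}$ (with the zero convention on the null set of $B$) is normalized, reproduces $P_{AB}$ as the $\{A',B\}$-marginal, and manifestly factorizes in the form required for $A'\indep AC\,|\,B$; this is the classical computation of Yeung's Lemma~15.8, and it is exactly the formula the paper mentions only in passing, immediately \emph{after} its proof, as the thing one ``can also'' do. The paper's own proof instead runs entirely through the inflation machinery: it first notes that any $P_{ABC}$ is compatible with the causal structure of \cref{fig:beforecopy} by taking the latent node to be a sufficient statistic such as $X:=(A,B,C)$, then applies \cref{mainlemma} to the inflation of \cref{fig:aftercopy}, reading off condition~1 from the injectable sets $\{A_1B_1C_1\}$ and $\{A_2B_1\}$ and condition~2 from the $d$-separation $A_2\aindep A_1C_1\,|\,B_1$ in the inflated graph. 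Your closing paragraph does recover this perspective, though from a different base DAG (the all-observed complete DAG with edges $B\to A$, $B\to C$, $A\to C$, duplicating the node $A$); that variant is equally valid, but note one terminological slip: calling the duplication at $B$ an ``inflationary fan-out'' conflicts with the paper's definition, which reserves that term for a \emph{latent} node with copy-index-equivalent children. As for what each approach buys: your direct computation is self-contained, makes the $P_B(b)=0$ case explicit, and requires no graphical apparatus; the paper's derivation buys the conceptual payoff that is the whole point of its appendix, namely that the Copy Lemma --- and hence the non-Shannon-type entropic inequalities derived from it --- arises as Shannon-type reasoning applied to an inflated causal structure.
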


The proof via inflation is as follows.

\begin{proof}
Every joint distribution $P_{ABC}$ is compatible with the causal structure of \cref{fig:beforecopy}. This follows from the fact that one may take $X$ to be any \tblue{sufficient statistic} for the joint variable $(A,C)$ given $B$, such as $X := (A,B,C)$. Next, we consider the inflation of \cref{fig:beforecopy} depicted in~\cref{fig:aftercopy}. The maximal injectable sets are $\{ A_1 B_1 C_1\}$ and $\{A_2 B_1\}$. By \cref{mainlemma}, because $P_{ABC}$ is assumed to be compatible with~\cref{fig:beforecopy}, it follows that the family of marginals $\{ P_{A_1 B_1 C_1}, P_{A_2 B_1}\}$, where $P_{A_1 B_1 C_1}:= P_{A B C}$ and $P_{A_2 B_1} := P_{AB}$, is compatible with the inflation of~\cref{fig:aftercopy}. The resulting joint distribution $P_{A_1 A_2 B_1 C_1}$ has marginals $P_{A_1 B_1}= P_{A_2 B_1} =P_{AB}$ and satisfies the conditional independence relation $A_2 \indep A_1 C_1 \:|\: B_1$, since $A_2$ is $d$-separated from $A_1 C_1$ by $B_1$ in \cref{fig:aftercopy}.
\end{proof}

While it is also not hard to write down the distribution constructed in the proof explicitly as $P_{A_1 A_2 B_1 C_1} := P_{A_1 B_1 C_1} P_{A_2 B_1} P_{B_1}^{-1}$~\cite[Lemma~15.8]{yeung_network_2008}, the fact that one can reinterpret it using the inflation technique is significant. For one, all the non-Shannon type inequalities derived by \citet{zeger_2011_nonshannon} are obtained by applying some Shannon-type inequality to the distribution derived from the Copy Lemma. Our result shows, therefore, that one can understand these non-Shannon type inequalities for a causal structure as arising from Shannon-type inequalities applied to an inflated causal structure. We thus speculate that the inflation technique may be a more general-purpose tool for deriving non-Shannon-type entropic inequalities. A natural direction for future research is to explore whether more sophisticated applications of the inflation technique might result in \emph{new} examples of such inequalities. 

\begin{figure}[H]
\centering
\begin{minipage}[t]{0.4\linewidth}
\centering
\includegraphics[scale=1]{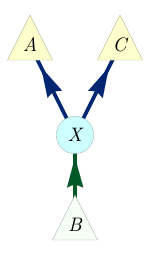}
\caption{A causal structure that is compatible with any distribution $P_{ABC}$.}\label{fig:beforecopy}
\end{minipage}
\hfill
\begin{minipage}[t]{0.4\linewidth}
\centering
\includegraphics[scale=1]{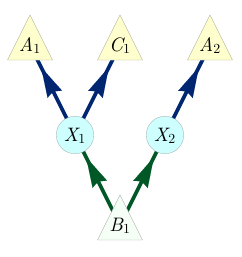}
\caption{An inflation of \cref{fig:beforecopy}.}\label{fig:aftercopy}
\end{minipage}
\end{figure}

\section{Causal Compatibility Inequalities for the Triangle Scenario in Machine-Readable Format}
\label{sec:38ineqs}

\cref{tab:machinereadable} lists the fifty two numerically irredundant polynomial inequalities resulting from consistent marginals of the Spiral inflation of \cref{fig:Tri222}. Stronger inequalities can be derived be considering larger inflations, such as the Web inflation of \cref{fig:TriFullDouble}. Each row in the table specifies the coefficient of the corresponding correlator monomial. As noted previously, these inequalities also follow from the hypergraph transversals technique per \cref{sec:TSEM}. 

\begin{table*}[hb]\centering\caption{A machine-readable and closed-under-symmetries version of the table in \cref{sec:CCineqs}. 
 }\label{tab:machinereadable}
\resizebox{\textwidth}{!}{
\begin{tabular}{c@{\hspace{1em}}ccc@{\hspace{1em}}ccc@{\hspace{1em}}c@{\hspace{1em}}ccc@{\hspace{1em}}ccc@{\hspace{1em}}c} 
constant & \(\expec{A}\) & \(\expec{B}\) & \(\expec{C}\) & \(\expec{A B}\) & \(\expec{A C}\) & \(\expec{B C}\) & \(\expec{A B C}\) & \(\expec{A} \expec{B}\) & \(\expec{A} \expec{C}\) & \(\expec{B} \expec{C}\) & \(\expec{A} \expec{B C}\) & \(\expec{A C} \expec{B}\) & \(\expec{A B} \expec{C}\) &
 \(\expec{A} \expec{B} \expec{C}\) \\\bottomrule
 1 & 0 & 0 & 0 & -1 & -1 & 0 & 0 & 0 & 0 & 1 & 0 & 0 & 0 & 0 \\
 1 & 0 & 0 & 0 & -1 & 1 & 0 & 0 & 0 & 0 & -1 & 0 & 0 & 0 & 0 \\
 1 & 0 & 0 & 0 & 1 & -1 & 0 & 0 & 0 & 0 & -1 & 0 & 0 & 0 & 0 \\
 1 & 0 & 0 & 0 & 1 & 1 & 0 & 0 & 0 & 0 & 1 & 0 & 0 & 0 & 0 \\
 1 & 0 & 0 & 0 & -1 & 0 & -1 & 0 & 0 & 1 & 0 & 0 & 0 & 0 & 0 \\
 1 & 0 & 0 & 0 & -1 & 0 & 1 & 0 & 0 & -1 & 0 & 0 & 0 & 0 & 0 \\
 1 & 0 & 0 & 0 & 1 & 0 & -1 & 0 & 0 & -1 & 0 & 0 & 0 & 0 & 0 \\
 1 & 0 & 0 & 0 & 1 & 0 & 1 & 0 & 0 & 1 & 0 & 0 & 0 & 0 & 0 \\
 1 & 0 & 0 & 0 & 0 & -1 & -1 & 0 & 1 & 0 & 0 & 0 & 0 & 0 & 0 \\
 1 & 0 & 0 & 0 & 0 & -1 & 1 & 0 & -1 & 0 & 0 & 0 & 0 & 0 & 0 \\
 1 & 0 & 0 & 0 & 0 & 1 & -1 & 0 & -1 & 0 & 0 & 0 & 0 & 0 & 0 \\
 1 & 0 & 0 & 0 & 0 & 1 & 1 & 0 & 1 & 0 & 0 & 0 & 0 & 0 & 0 \\
 3 & -1 & -1 & -1 & 2 & 2 & 2 & 1 & 1 & 1 & 1 & -1 & -1 & -1 & 1 \\
 3 & -1 & -1 & 1 & 2 & -2 & -2 & -1 & 1 & -1 & -1 & 1 & 1 & 1 & -1 \\
 3 & -1 & 1 & -1 & -2 & 2 & -2 & -1 & -1 & 1 & -1 & 1 & 1 & 1 & -1 \\
 3 & -1 & 1 & 1 & -2 & -2 & 2 & 1 & -1 & -1 & 1 & -1 & -1 & -1 & 1 \\
 3 & 1 & -1 & -1 & -2 & -2 & 2 & -1 & -1 & -1 & 1 & 1 & 1 & 1 & -1 \\
 3 & 1 & -1 & 1 & -2 & 2 & -2 & 1 & -1 & 1 & -1 & -1 & -1 & -1 & 1 \\
 3 & 1 & 1 & -1 & 2 & -2 & -2 & 1 & 1 & -1 & -1 & -1 & -1 & -1 & 1 \\
 3 & 1 & 1 & 1 & 2 & 2 & 2 & -1 & 1 & 1 & 1 & 1 & 1 & 1 & -1 \\
 4 & -2 & 0 & 0 & -3 & -2 & -2 & 1 & 1 & 0 & 2 & 1 & 1 & 0 & -1 \\
 4 & -2 & 0 & 0 & -3 & 2 & 2 & -1 & 1 & 0 & -2 & -1 & -1 & 0 & 1 \\
 4 & -2 & 0 & 0 & 3 & -2 & 2 & -1 & -1 & 0 & -2 & -1 & -1 & 0 & 1 \\
 4 & -2 & 0 & 0 & 3 & 2 & -2 & 1 & -1 & 0 & 2 & 1 & 1 & 0 & -1 \\
 4 & 2 & 0 & 0 & -3 & -2 & -2 & -1 & 1 & 0 & 2 & -1 & -1 & 0 & 1 \\
 4 & 2 & 0 & 0 & -3 & 2 & 2 & 1 & 1 & 0 & -2 & 1 & 1 & 0 & -1 \\
 4 & 2 & 0 & 0 & 3 & -2 & 2 & 1 & -1 & 0 & -2 & 1 & 1 & 0 & -1 \\
 4 & 2 & 0 & 0 & 3 & 2 & -2 & -1 & -1 & 0 & 2 & -1 & -1 & 0 & 1 \\
 4 & 0 & -2 & 0 & -2 & -2 & -3 & 1 & 0 & 2 & 1 & 0 & 1 & 1 & -1 \\
 4 & 0 & -2 & 0 & -2 & 2 & 3 & -1 & 0 & -2 & -1 & 0 & -1 & -1 & 1 \\
 4 & 0 & -2 & 0 & 2 & -2 & 3 & 1 & 0 & 2 & -1 & 0 & 1 & 1 & -1 \\
 4 & 0 & -2 & 0 & 2 & 2 & -3 & -1 & 0 & -2 & 1 & 0 & -1 & -1 & 1 \\
 4 & 0 & 2 & 0 & -2 & -2 & -3 & -1 & 0 & 2 & 1 & 0 & -1 & -1 & 1 \\
 4 & 0 & 2 & 0 & -2 & 2 & 3 & 1 & 0 & -2 & -1 & 0 & 1 & 1 & -1 \\
 4 & 0 & 2 & 0 & 2 & -2 & 3 & -1 & 0 & 2 & -1 & 0 & -1 & -1 & 1 \\
 4 & 0 & 2 & 0 & 2 & 2 & -3 & 1 & 0 & -2 & 1 & 0 & 1 & 1 & -1 \\
 4 & 0 & 0 & -2 & -2 & -3 & -2 & 1 & 2 & 1 & 0 & 1 & 0 & 1 & -1 \\
 4 & 0 & 0 & -2 & -2 & 3 & 2 & 1 & 2 & -1 & 0 & 1 & 0 & 1 & -1 \\
 4 & 0 & 0 & -2 & 2 & -3 & 2 & -1 & -2 & 1 & 0 & -1 & 0 & -1 & 1 \\
 4 & 0 & 0 & -2 & 2 & 3 & -2 & -1 & -2 & -1 & 0 & -1 & 0 & -1 & 1 \\
 4 & 0 & 0 & 2 & -2 & -3 & -2 & -1 & 2 & 1 & 0 & -1 & 0 & -1 & 1 \\
 4 & 0 & 0 & 2 & -2 & 3 & 2 & -1 & 2 & -1 & 0 & -1 & 0 & -1 & 1 \\
 4 & 0 & 0 & 2 & 2 & -3 & 2 & 1 & -2 & 1 & 0 & 1 & 0 & 1 & -1 \\
 4 & 0 & 0 & 2 & 2 & 3 & -2 & 1 & -2 & -1 & 0 & 1 & 0 & 1 & -1 \\
 4 & 0 & 0 & 0 & -2 & -2 & -2 & -1 & 2 & 2 & 2 & -1 & -1 & -1 & 0 \\
 4 & 0 & 0 & 0 & -2 & -2 & -2 & 1 & 2 & 2 & 2 & 1 & 1 & 1 & 0 \\
 4 & 0 & 0 & 0 & -2 & 2 & 2 & -1 & 2 & -2 & -2 & -1 & -1 & -1 & 0 \\
 4 & 0 & 0 & 0 & -2 & 2 & 2 & 1 & 2 & -2 & -2 & 1 & 1 & 1 & 0 \\
 4 & 0 & 0 & 0 & 2 & -2 & 2 & -1 & -2 & 2 & -2 & -1 & -1 & -1 & 0 \\
 4 & 0 & 0 & 0 & 2 & -2 & 2 & 1 & -2 & 2 & -2 & 1 & 1 & 1 & 0 \\
 4 & 0 & 0 & 0 & 2 & 2 & -2 & -1 & -2 & -2 & 2 & -1 & -1 & -1 & 0 \\
 4 & 0 & 0 & 0 & 2 & 2 & -2 & 1 & -2 & -2 & 2 & 1 & 1 & 1 & 0 \\
\end{tabular}}
\end{table*}

\section{Recovering the Bell Inequalities from the Inflation Technique}
\label{sec:Bellscenarios}

To further illustrate the power of the inflation technique, we now demonstrate how to recover all Bell inequalities~\cite{Brunner2013Bell,bell1966lhvm,CHSHOriginal} via our method. To keep things simple we only discuss the case of a bipartite Bell scenario with two values for both ``settings'' and ``outcome'' variables, but the case of more parties and/or more values per settings or outcome variable is totally analogous.

The causal structure associated to the Bell \cite{bell1964einstein,Brunner2013Bell,bell1966lhvm,CHSHOriginal} scenario [\citealp{pusey2014gdag}~(Fig.~E\#2), \citealp{WoodSpekkens}~(Fig.~19), \citealp{chaves2014novel}~(Fig.~1), \citealp{BeyondBellII}~(Fig.~1), \citealp{wolfe2015nonconvexity}~(Fig.~2b), \citealp{steeg2011relaxation}~(Fig.~2)] is depicted in \cref{fig:NewBellDAG1}. The observed variables are $A,B,X,Y$, and $\Lambda$ is the latent common cause of $A$ and $B$. One traditionally works with the conditional distribution $P_{AB|XY}$, to be understood as an array of distributions indexed by the possible values of $X$ and $Y$, instead of with the original distribution $P_{ABXY}$, which is what we do.

In the inflation of \cref{fig:BellDagCopy1}, the maximal ai-expressible sets are
\begin{align}\begin{split}
	\label{eq:bellcontexts}
&\brackets{A_1 B_1 X_1 X_2 Y_1 Y_2}, \qquad
\brackets{A_1 B_2 X_1 X_2 Y_2 Y_2}, \qquad
\brackets{A_2 B_1 X_1 X_2 Y_2 Y_2}, \qquad
\brackets{A_2 B_2 X_1 X_2 Y_2 Y_2},
\end{split}\end{align}
where notably every maximal ai-expressible set contains all ``settings'' variables $X_1$ to $Y_2$. The marginal distributions on these ai-expressible sets are then specified by the original observed distribution via
\begin{align}\begin{split}&\forall{a b x_1 x_2 y_1 y_2}:\; \begin{cases}
	P_{A_1 B_1 X_1 X_2 Y_1 Y_2}(a b x_1 x_2 y_1 y_2) = P_{A B X Y}(a b x_1 y_1) P_X(x_2) P_Y(y_2), \\
	P_{A_1 B_2 X_1 X_2 Y_1 Y_2}(a b x_1 x_2 y_1 y_2) = P_{A B X Y}(a b x_1 y_2) P_X(x_2) P_Y(y_1), \\
	P_{A_2 B_1 X_1 X_2 Y_1 Y_2}(a b x_1 x_2 y_1 y_2) = P_{A B X Y}(a b x_2 y_1) P_X(x_1) P_Y(y_2), \\
	P_{A_2 B_2 X_1 X_2 Y_1 Y_2}(a b x_1 x_2 y_1 y_2) = P_{A B X Y}(a b x_2 y_2) P_X(x_1) P_Y(y_1), \\
\hspace{2.5pc}	P_{X_1 X_2 Y_1 Y_2}(x_1 x_2 y_1 y_2) = P_X(x_1) P_X(x_2) P_Y(y_1) P_Y(y_2).
\end{cases}\end{split}\end{align}
By dividing each of the first four equations by the fifth, we obtain
\begin{align}\begin{split}
	\label{eq:bellfactor}
	\forall{a b x_1 x_2 y_1 y_2}:\; \begin{cases}
	P_{A_1 B_1 | X_1 X_2 Y_1 Y_2}(a b | x_1 x_2 y_1 y_2) = P_{A B | X Y}(a b | x_1 y_1), \\
	P_{A_1 B_2 | X_1 X_2 Y_1 Y_2}(a b | x_1 x_2 y_1 y_2) = P_{A B | X Y}(a b | x_1 y_2), \\
	P_{A_2 B_1 | X_1 X_2 Y_1 Y_2}(a b | x_1 x_2 y_1 y_2) = P_{A B | X Y}(a b | x_2 y_1), \\
	P_{A_2 B_2 | X_1 X_2 Y_1 Y_2}(a b | x_1 x_2 y_1 y_2) = P_{A B | X Y}(a b | x_2 y_2).
\end{cases}\end{split}\end{align}
The existence of a joint distribution of all six variables---i.e.~the existence of a solution to the marginal problem---implies in particular
\begin{align}
	\forall{a b x_1 x_2 y_1 y_2}: \quad P_{A_1 B_1 | X_1 X_2 Y_1 Y_2}(a b | x_1 x_2 y_1 y_2) = \sum\nolimits_{a',b'} P_{A_1 A_2 B_1 B_2 | X_1 X_2 Y_1 Y_2}(a a' b b'|x_1 x_2 y_1 y_2),
\end{align}
and similarly for the other three conditional distributions under consideration. For compatibility with the Bell scenario, \cref{eq:bellfactor} therefore implies that the original distribution must satisfy in particular
\begin{align}\begin{split}\label{eq:finalBellstep}\forall{a b}:\; \begin{cases}
	P_{A B | X Y}(a b | 0 0) = \sum\nolimits_{a',b'} P_{A_1 A_2 B_1 B_2| X_1 X_2 Y_1 Y_2}(a a' b b'|0101) \\
	P_{A B | X Y}(a b | 1 0) = \sum\nolimits_{a',b'} P_{A_1 A_2 B_1 B_2| X_1 X_2 Y_1 Y_2}(a' a b b'|0101) \\
	P_{A B | X Y}(a b | 0 1) = \sum\nolimits_{a',b'} P_{A_1 A_2 B_1 B_2| X_1 X_2 Y_1 Y_2}(a a' b' b|0101) \\
	P_{A B | X Y}(a b | 1 1) = \sum\nolimits_{a',b'} P_{A_1 A_2 B_1 B_2| X_1 X_2 Y_1 Y_2}(a' a b' b|0101)
\end{cases}\end{split}\end{align}
The possibility to write the conditional probabilities in the Bell scenario in this form is equivalent to the existence of a latent variable model, as noted in Fine's theorem~\cite{FineTheorem}. Thus, the existence of a solution to our marginal problem implies the existence of a latent variable model for the original distribution; the converse follows from our \cref{mainlemma}. Hence the inflation of~\cref{fig:BellDagCopy1} provides necessary and sufficient conditions for the compatibility of the original distribution with the Bell scenario.

Moreover, it is possible to describe the marginal polytope over the ai-expressible sets of~\cref{eq:bellcontexts}, resulting in a concrete correspondence between tight Bell inequalities and the facets of our marginal polytope. This is based on the observation that the ``settings'' variables $X_1$ to $Y_2$ occur in all four contexts. The marginal polytope lives in $\oplus_{i=1}^4 \mathbb{R}^{2^6} = \oplus_{i=1}^4 (\mathbb{R}^2)^{\otimes 6}$, where each tensor factor has basis vectors corresponding to the two possible outcomes of each variable, and the direct summands enumerate the four contexts. The polytope is given as the convex hull of the points
\begin{align*}
	(e_{A_1} & \otimes e_{B_1} \otimes e_{X_1} \otimes e_{X_2} \otimes e_{Y_1} \otimes e_{Y_2}) \\[-3pt]
	\oplus\: (e_{A_1} & \otimes e_{B_2} \otimes e_{X_1} \otimes e_{X_2} \otimes e_{Y_1} \otimes e_{Y_2}) \\[-3pt]
	\oplus\: (e_{A_2} & \otimes e_{B_1} \otimes e_{X_1} \otimes e_{X_2} \otimes e_{Y_1} \otimes e_{Y_2}) \\[-3pt]
	\oplus\: (e_{A_2} & \otimes e_{B_2} \otimes e_{X_1} \otimes e_{X_2} \otimes e_{Y_1} \otimes e_{Y_2}),
\end{align*}
where all six variables range over their possible values. Since the last four tensor factors occur in every direct summand in exactly the same way, we can also write such a polytope vertex as
\[
	\left[ (e_{A_1} \otimes e_{B_1}) \oplus (e_{A_1} \otimes e_{B_2}) \oplus (e_{A_2} \otimes e_{B_1}) \oplus (e_{A_2} \otimes e_{B_2})\right] \otimes \left[ e_{X_1} \otimes e_{X_2} \otimes e_{Y_1} \otimes e_{Y_2}\right]
\]
in $\big(\oplus_{i=1}^4 \mathbb{R}^{2^2}\big)\otimes \mathbb{R}^{2^4}$. Now since the first four variables in the first tensor factor vary completely independently of the latter four variables in the second tensor factor, the resulting polytope will be precisely the tensor product~\cite{namioka_tensor_1969,bogart_hom_2013} of two polytopes: first, the convex hull of all points of the form
\[
	(e_{A_1} \otimes e_{B_1}) \oplus (e_{A_1} \otimes e_{B_2}) \oplus (e_{A_2} \otimes e_{B_1}) \oplus (e_{A_2} \otimes e_{B_2}),
\]
and second the convex hull of all $e_{X_1} \otimes e_{X_2} \otimes e_{Y_1} \otimes e_{Y_2}$. While the latter polytope is just the standard probability simplex in $\mathbb{R}^8$, the former polytope is precisely the ``local polytope'' or ``Bell polytope'' that is traditionally used in the context of Bell scenarios~\cite[Sec.~II.B]{Brunner2013Bell}. This implies that the facets of our marginal polytope are precisely the pairs consisting of a facet of the Bell polytope and a facet of the simplex, the latter of which are only the nonnegativity of probability inequalities like $P_{X_1X_2Y_1Y_2}(0101)\geq 0$. For example, in this way we obtain one version of the CHSH inequality~\cite{CHSHOriginal} as a facet of our marginal polytope,
\[
	\sum_{a,b,x,y} (-1)^{a + b + xy} P_{A_x B_y X_1 X_2 Y_1 Y_2}(a b 0 1 0 1) \leq 2 P_{X_1 X_2 Y_1 Y_2}(0101).
\]
This translates into the standard form of the CHSH inequality as follows. Upon using~\cref{eq:bellfactor}, the inequality becomes
\begin{align*}
	\sum_{a,b} (-1)^{a + b} \big( & P_{A B X Y}(ab00)P_X(1)P_Y(1) + P_{A B X Y}(ab01)P_X(1)P_Y(0) \\[-12pt]
	& + P_{A B X Y}(ab10)P_X(0)P_Y(1) - P_{A B X Y}(ab11)P_X(0)P_Y(0) \big) \leq P_X(0)P_X(1)P_Y(0)P_Y(1),
\end{align*}
so that dividing by the right-hand side results in one of the conventional forms of the CHSH inequality,
\[
	\sum_{a,b} (-1)^{a + b} \left( P_{AB|XY}(ab|00) + P_{AB|XY}(ab|01) + P_{AB|XY}(ab|10) - P_{AB|XY}(ab|11) \right) \leq 2.
\]
In conclusion, the inflation technique is powerful enough to get a precise characterization of all distributions compatible with the Bell causal structure, and our technique for generating polynomial inequalities through solving the marginal constraint problem recovers all Bell inequalities.

Some Bell inequalities may also be derived using the hypergraph transversals technique discussed in \cref{sec:TSEM}. For example, the inequality
\begin{align}\label{eq:preBell}\begin{split}
& \p[A_1 B_1 X_1 Y_1]{0000}\p[X_2]{1} \p[Y_2]{1} \\
&\leq
 \p[A_1 B_2 X_1 Y_2]{0001}\p[X_2]{1} \p[Y_1]{0} +\p[A_2 B_1 X_2 Y_1]{0010}\p[X_1]{0}\p[Y_2]{1}+ \p[A_2 B_2 X_2 Y_2]{1111}\p[X_1]{0} \p[Y_1]{0}
\end{split}\end{align}
is the inflationary precursor of the Bell inequality
\begin{align}\label{eq:aBell}
 \p[A B | X Y]{00|00} &\leq \p[A B | X Y]{00|01} +\p[A B | X Y]{00|10}+ \p[A B | X Y]{11|11},
\end{align}
as \cref{eq:aBell} is obtained from \cref{eq:preBell} by dividing both sides by $\p[X_1 Y_1 X_2 Y_2]{0011}=\p[X_1]{0} \p[Y_2]{0}\p[X_2]{1} \p[Y_2]{1}$ and then dropping copy indices. On the other hand, \cref{eq:preBell} follows directly from factorization relations on ai-expressible sets and the tautology
\begin{align}\begin{split}
	[ \mgreen{A_1 \eql 0},\, & \mgreen{B_1 \eql 0}, \mgreen{X_1 \eql 0}, \mgreen{Y_1\eql 0}, \mgreen{X_2 \eql 1}, \mgreen{Y_2 \eql 1}]
 \implies \begin{array}{r}
	[ \mgreen{A_1 \eql 0}, B_2 \eql 0, \mgreen{X_1 \eql 0}, \mgreen{Y_1\eql 0}, \mgreen{X_2 \eql 1}, \mgreen{Y_2 \eql 1}] \\
	\mathrel{\lor} [ A_2 \eql 0, \mgreen{B_1 \eql 0}, \mgreen{X_1 \eql 0}, \mgreen{Y_1\eql 0}, \mgreen{X_2 \eql 1}, \mgreen{Y_2 \eql 1}] \\
	\mathrel{\lor} [ A_2 \eql 1, B_2 \eql 1, \mgreen{X_1 \eql 0}, \mgreen{Y_1\eql 0}, \mgreen{X_2 \eql 1}, \mgreen{Y_2 \eql 1}]
	\end{array}
\end{split}\end{align}
which corresponds to the original ``Hardy paradox''~\cite{L.Hardy:PRL:1665} in our notation.

\let\cleardoublepage\clearpage
\newpage
\setlength{\bibsep}{2pt plus 2pt minus 1pt}
\bibliographystyle{apsrev4-1}
\nocite{apsrev41Control}
\bibliography{InflationTechniqueJCI}
\end{document}